\documentclass{revtex4-2}
\usepackage[T1]{fontenc}
\usepackage{mathtools}
\usepackage{amssymb}
\usepackage{amsthm}
\usepackage{thmtools}
\usepackage{hyperref}
\usepackage{physics}
\usepackage{graphicx}
\usepackage{subfigure}
\usepackage{float}
\usepackage{bm}
\usepackage{stmaryrd}

\newtheorem{definition}{Definition}
\newtheorem{lemma}{Lemma}
\newtheorem{theorem}{Theorem}
\newtheorem{proposition}{Proposition}

\newtheorem{remark}{Remark}
\newtheorem{example}{Example}

\begin{document}
	
	\title{Construction and characterization of measures in block coherence resource theory}
	
	\author{Xiangyu Chen}
	\email{Electronic address: 23S012014@stu.hit.edu.cn}
	\affiliation{School of Mathematics, Harbin Institute of Technology, Harbin 150001, China}
	\author{Qiang Lei}
	\email{Electronic address: leiqiang@hit.edu.cn}
	\affiliation{School of Mathematics, Harbin Institute of Technology, Harbin 150001, China}
	
	\begin{abstract}
		Quantum coherence, as a direct manifestation of the quantum superposition principle, is a crucial resource in quantum information processing. Block coherence resource theory generalizes the traditional coherence framework by defining coherence via a set of orthogonal projectors. Within this framework, we investigates the construction and comparison of block coherence measures. First, we propose two universal methods for constructing coherence measures and introduce a two-parameter family of measures based on the $\alpha$-$z$ R\'enyi relative entropy and a family of measures based on the Tsallis relative operator entropy. Second, through theoretical proofs and numerical counterexamples, we compares the ordering relations and numerical magnitudes among different block coherence measures and establishes a series of universal numerical inequalities to constrain their values. Besides, we also use $C_{\alpha,1}$ to show the role of coherence in complex dynamic evolution of the Kominis master equation that includes recombination reactions.
	\end{abstract}
	
	\maketitle
	
	\section{Introduction}
	Quantum resource theory provides a systematic framework for quantifying and manipulating various physical resources in quantum information. The construction of resource theory originally stemmed from the quantitative study of entanglement \cite{vedral1998entanglement,Plenio2005AnIT}, with entanglement resource theory \cite{Hordecki2009Quantum} being an early successful example of this framework. Subsequently, this framework has been widely applied to quantum information fields, forming a unified research paradigm for quantum resource theory. In this context, an increasing number of physical concepts are viewed as resources. As a fundamental characteristic of the quantum state superposition principle, the standard coherence resource theory was established by Baumgratz et al. in 2014 \cite{baumgratz2014quantifying}. Complex numbers, as an indispensable element in describing quantum systems, led to the formal proposal of the imaginarity resource theory by Hickey et al. in 2018 \cite{hickey2018quantifying}. Additionally, there are other quantum resource theories such as magic \cite{Howard2017application}, thermodynamics \cite{Gour2015resource}, phase \cite{Xu2023quantifying} and so on \cite{Amaral2018Noncontextual,xu2019coherence,Luo2017partial} have been proposed. Together, these theories constitute a quantum resource research paradigm, aiming to reveal the fundamental value of various physical properties in information processing.
	\par
	Block coherence resource theory, as a generalization of standard coherence resource theory, can describe a more general form of coherence. It extends the set of free states from completely diagonal states to block-diagonal states, thereby effectively characterizing coherence preserved in systems with symmetries or subspace constraints. This theoretical framework was pioneered by \r Aberg et al. \cite{aberg2006quantifying} and formally defined by Bischof et al. Simultaneously, via Naimark extension, they embedded the system into a larger space, transforming the problem of POVM coherence into a block coherence problem in that space, thus establishing POVM coherence resource theory \cite{bischof2019resource,bischof2021quantifying}. Block coherence or POVM coherence resource theory contributes to research in cryptography \cite{bischof2021quantifying}, quantum measurement \cite{Yu2021quantum}, quantum biology \cite{kominis2025physiological} and other areas.
	\par
	In this paper, Section 2 elaborates the basic framework of block coherence resource theory and, based on convex combination and convex roof extension, proposes two universal methods for constructing block coherence measures. In section 3, utilizing the alternative framework, we introduce a two-parameter family of block coherence measures $C_{\alpha,z}$ based on the $\alpha$-$z$ R\'enyi relative entropy, and analyze a series of its properties. We also introduce a measure $C^N_{\beta}$ based on the Tsallis relative operator entropy. In section 4, we first reviews existing block coherence measures, and then investigates the ordering relations and numerical magnitude relations among different measures. We establishes a series of universal inequality constraints, and clarifies the intrinsic connections and applicable scopes of various measures. In section 5, we evaluate S-T coherence from the perspective of quantum resource theory. we quantitatively investigate the evolution of a sample state within the Kominis master equation that includes recombination reactions \cite{kominis2015radical}. Through numerical simulations, we show the role of coherence in complex dynamic processes and its influence on chemical reaction yields.
	\section{Block Coherence Resource Theory Framework and Construction of Block Coherence Measures}
	\subsection{Block Coherence Resource Theory Framework}
	\par
	In block coherence resource theory, a quantum state $\rho$ associated with a set of projectors $\mathbf{P}=\left\lbrace P_k \right\rbrace_k$ is called a free state (block-incoherent state) if it satisfies $\rho=\sum_kP_k\rho P_k$; otherwise, it is defined as a block-coherent state. The set of all free states is denoted by $\mathcal{I}_{\text{B}}$.
	\par
	A quantum channel $\Phi=\left\lbrace K_n \right\rbrace_n$ is called a free operation (block-incoherent operation) if and only if each of its Kraus operators satisfies, for any $\rho\in\mathcal{I}_{\text{B}}$:
	\begin{eqnarray*}
		P_iK_n\rho K_n^{\dagger}P_j=0,i\neq j.
	\end{eqnarray*}
	The set of all free operations is denoted by $\mathcal{IO}_{\text{B}}$. For any free operation, each Kraus operator $K_n$ has the following specific form \cite{bischof2021quantifying}:
	\begin{equation*}
		K_n=\sum_jP_{f_n(j)}M_nP_j.
	\end{equation*}
	Here, $f_n$ is a map from the index set $\left\lbrace 1,\cdots,n\right\rbrace $ to itself, and $M_n$ is some operator related to $n$ that ensures the equality holds.
	\par
	A proper block coherence measure should satisfy (B1) Non-negativity, (B2) Monotonicity, and (B3) Convexity. Later, in literature \cite{xu2020general}, condition (B4) Strong monotonicity was supplemented for measures in block coherence resource theory. Therefore, in block coherence resource theory, a block coherence measure $C\left( \rho ,\mathbf{P} \right)$ concerning the projector set $\mathbf{P}=\left\lbrace P_k \right\rbrace_k $ should satisfy the following conditions:
	\par
	\textbf{(B1)} Non-negativity: For any quantum state $\rho$, the measure $C$ should satisfy $C(\rho,\mathbf{P})\geqslant 0$, and $C(\rho,\mathbf{P})=0$ if and only if $\rho\in\mathcal{I}_{\text{B}}$;
	\par
	\textbf{(B2)} Monotonicity: For any quantum state $\rho$ and any free operation $\Phi$, the measure $C$ should satisfy $C(\Phi(\rho),\mathbf{P})\leqslant C(\rho,\mathbf{P})$;
	\par
	\textbf{(B3)} Strong monotonicity: For any quantum state $\rho $ and any free operation $\Phi=\left\lbrace K_n \right\rbrace_n \in\mathcal{IO}_{\text{B}}$, let $p_n=\text{Tr}(K_n\rho K_n^{\dagger})$ and $\rho_n=K_n\rho K_n^{\dagger}/p_n$, the measure $C$ should satisfy $\sum_np_nC(\rho_n,\mathbf{P})\leqslant C(\rho,\mathbf{P})$;
	\par
	\textbf{(B4)} Convexity: For any probability distribution $\left\lbrace p_n \right\rbrace_n$ satisfying $p_n\geqslant 0$ and $\sum_np_n=1$, and any set of quantum states $\left\lbrace \rho_n \right\rbrace_n$, the measure $C$ should satisfy $\sum_np_nC(\rho_n,\mathbf{P})\geqslant C(\sum_np_n\rho_n,\mathbf{P})$;
	\par
	Block coherence resource theory is an important generalization of the traditional (standard) coherence resource theory, and it also has close connections with others coherence resource theory.
	\par
	The traditional coherence resource theory (the framework proposed by Baumgratz et al. \cite{baumgratz2014quantifying}) can be viewed as a special case of block coherence theory. In the standard theory, free states (incoherent states) are defined as those diagonal in a fixed orthogonal computational basis $\{\ket{i}\}^d_{i=1}$. This basis can be equivalently described by a set of one-dimensional projectors $\{P_i=\ket{i}\bra{i}\}^d_{i=1}$. In this case, the block-dephasing operation $\Delta(\rho)=\sum_{i}P_i\rho P_i$ reduces to the ordinary dephasing (complete dephasing) operation. Therefore, when the given projector set $\mathbf{P}=\{P_i\}$, the block coherence theory completely reduces to the standard coherence theory. Block coherence theory generalizes the concept of coherence from the level of individual basis vectors to the level of subspaces ("blocks") by allowing projectors to define subspaces of arbitrary dimension, enabling a more natural description of coherence in complex systems.
	\par
	Partial coherence resource theory concerns the coherence of subsystem $A$ relative to its fixed computational basis $\{\ket{\nu_i}^A\}_{i=1}^{d_A}$ in a composite system $AB$, while subsystem $B$ can be in an arbitrary state. Its set of free states is $\mathcal{I}_{\text{P}}^A = \{\sigma^{AB} \in \mathcal{D}(\mathcal{H}^{AB}):\Pi_{\text{L}}(\sigma^{AB}) = \sigma^{AB} \}$. If we take the total Hilbert space of the composite system $AB$ and define the projectors as the Lüders measurement corresponding to the computational basis of system A, i.e., $\Pi_{\text{L}}=\{\Pi_i^A\}_{i=1}^{d_A}=\{\ket{\nu_i}\bra{\nu_i}^A\otimes I^B\}_{i=1}^{d_A}$, then the block coherence free state set defined thereby is identical to the partial coherence free state set. Therefore, partial coherence theory is a special case of block coherence theory where the projectors act on a subsystem $A$ of a composite system, and can be viewed as an application of block coherence theory in a specific scenario.
	\par
	Furthermore, when the dimension of subsystem $B$ is 1 (i.e., $B$ is a trivial one-dimensional system), the state space of the composite system $AB$ is physically equivalent to that of subsystem $A$. In this case, the free state in partial coherence theory, $\sigma=\sum_{i=1}^{d_A}\ket{\nu_i}\bra{\nu_i}^A\otimes\tau_i^B$, degenerates (since $\tau_i^B$ is a scalar) to $\sigma=\sum_{i=1}^{d_A}p_i\ket{\nu_i}\bra{\nu_i}^A$, which is precisely the standard incoherent state on subsystem A. Simultaneously, the projectors $\Pi_{\text{L}}=\{\Pi_i^A\}_{i=1}^{d_A}=\{\ket{\nu_i}\bra{\nu_i}^A\otimes I^B\}_{i=1}^{d_A}$ degenerate to $\{\ket{\nu_i}\bra{\nu_i}^A\}_{i=1}^{d_A}$. Hence, when subsystem $B$ has dimension 1, partial coherence theory further reduces to standard coherence theory.
	\par
	Singlet-triplet (S-T) coherence is a focus of extensive research in quantum biology. The S-T coherence studied by Kominis \cite{kritsotakis2014retrodictive,kominis2015radical,kominis2020quantum,kominis2025physiological} can be directly related to block coherence resource theory. We can term it S-T coherence resource theory, which serves as a typical and important concrete instance of block coherence resource theory. In quantum biology, Radicals are particles possessing unpaired valence electrons. Radicals or radical pairs play a very important role in thermal reactions, radiation reactions, and photochemical reactions. The radical state can be represented by an electron oscillation state and a nuclear state. The two unpaired electron spins give rise to the singlet and triplet states, respectively expressed as:
	\begin{align*}
		\ket{S}&=\frac{1}{\sqrt{2}}(\ket{\uparrow\downarrow}-j\ket{\downarrow\uparrow}),
		\\
		\ket{T_{-1}}&=\ket{\downarrow\downarrow},
		\\
		\ket{T_{0}}&=\frac{1}{\sqrt{2}}(\ket{\uparrow\downarrow}+\ket{\downarrow\uparrow}),
		\\
		\ket{T_{1}}&=\ket{\uparrow\uparrow},
	\end{align*}
	where $\ket{\uparrow}$ and $\ket{\downarrow}$ represent electron spin up and spin down, respectively, and can be mathematically viewed as:
	\begin{eqnarray*}
		\ket{\uparrow}=\begin{pmatrix}
			1 \\ 0
		\end{pmatrix},
		\ket{\downarrow}=\begin{pmatrix}
			0 \\ 1
		\end{pmatrix}.
	\end{eqnarray*}
	\par
	Due to a series of biochemical reactions, the radical state oscillates between the singlet and triplet states, resulting in the electron oscillation state. Neglecting the nuclear state yields a four-dimensional Hilbert space spanned by the singlet and triplet states. The projectors $Q_S=\frac{1}{4}I-s_Ds_A$ and $Q_T=\frac{3}{4}I+s_Ds_A$ correspond to the singlet subspace and the triplet subspace, respectively, where $s_D$ and $s_A$ are the electron spins of the donor and acceptor, respectively.
	\par
	The $j$-th component of the donor spin $s_D$ and the $k$-th component of the acceptor spin $s_A$ can be represented as:
	\begin{align*}
		s_{jD}&=\hat{s}_j\otimes I_D,
		\quad
		s_{kA}&=I_A\otimes\hat{s}_k,
	\end{align*}
	where $\hat{s}_j$ is the $j$-th component of the donor's electron spin, and the identity operator $I_D$ corresponds to the $j$-th component of the acceptor's electron spin; the identity operator $I_A$ corresponds to the $k$-th component of the donor's electron spin, and $\hat{s}_k$ is the $k$-th component of the acceptor's electron spin. Here, $\hat{s}_l=\frac{\hbar}{2}\sigma_l$, $l=x,y,z$, and $\sigma_l$ are the Pauli operators:
	\begin{eqnarray*}
		\sigma_x=\begin{pmatrix}
			0 & 1 \\
			1 & 0
		\end{pmatrix},
		\sigma_y=\begin{pmatrix}
			0 & -i \\
			i & 0
		\end{pmatrix},
		\sigma_z=\begin{pmatrix}
			1 & 0 \\
			0 & -1
		\end{pmatrix},
	\end{eqnarray*}
	Then, in natural units (setting $\hbar=1$), we obtain:
	\begin{eqnarray*}
		s_Ds_A=\frac{1}{4}\begin{pmatrix}
			1 & & & \\
			& -1 & 2 & \\
			& 2 & -1 & \\
			& & & 1
		\end{pmatrix}.
	\end{eqnarray*}
	Thus, it is known that the projectors are $Q_S=\ket{S}\bra{S}$ and $Q_T=\ket{T_1}\bra{T_1}+\ket{T_0}\bra{T_0}+\ket{T_{-1}}\bra{T_{-1}}$. In this paper, for notational convenience, we sometimes denote $\rho_{ij}=Q_i\rho Q_j$, where $i,j\in\left\lbrace S,T\right\rbrace $.
	\par 
	In the radical pair model, the system's Hilbert space is partitioned into the singlet subspace and the triplet subspace. This corresponds to a complete set of orthogonal projectors defined by two operators: $\mathbf{Q}=\{Q_S,Q_T\}$, where $Q_S=\ket{S}\bra{S}$ projects onto the singlet subspace, and $Q_T=\sum_{m=-1}^{1}\ket{T_m}\bra{T_m}$ projects onto the triplet subspace. The S-T coherence free state set defined thereby, $\mathcal{I}_{\text{ST}}=\{\sigma:\sigma=\sigma_{SS}+\sigma_{TT}\}$, is precisely the specific form of the block coherence free state set when taking $\mathbf{P}=\mathbf{Q}$. Therefore, S-T coherence resource theory is a direct application of block coherence resource theory in the specific case where the projector set $\mathbf{P}$ contains only two projectors corresponding to the singlet and triplet subspaces, respectively.
	\par
	The relationships among the various coherence resource theories mentioned above are illustrated in the following diagram:
	\begin{figure}[H]
		\centering
		\includegraphics[width=0.5\linewidth,height=0.255\textheight]{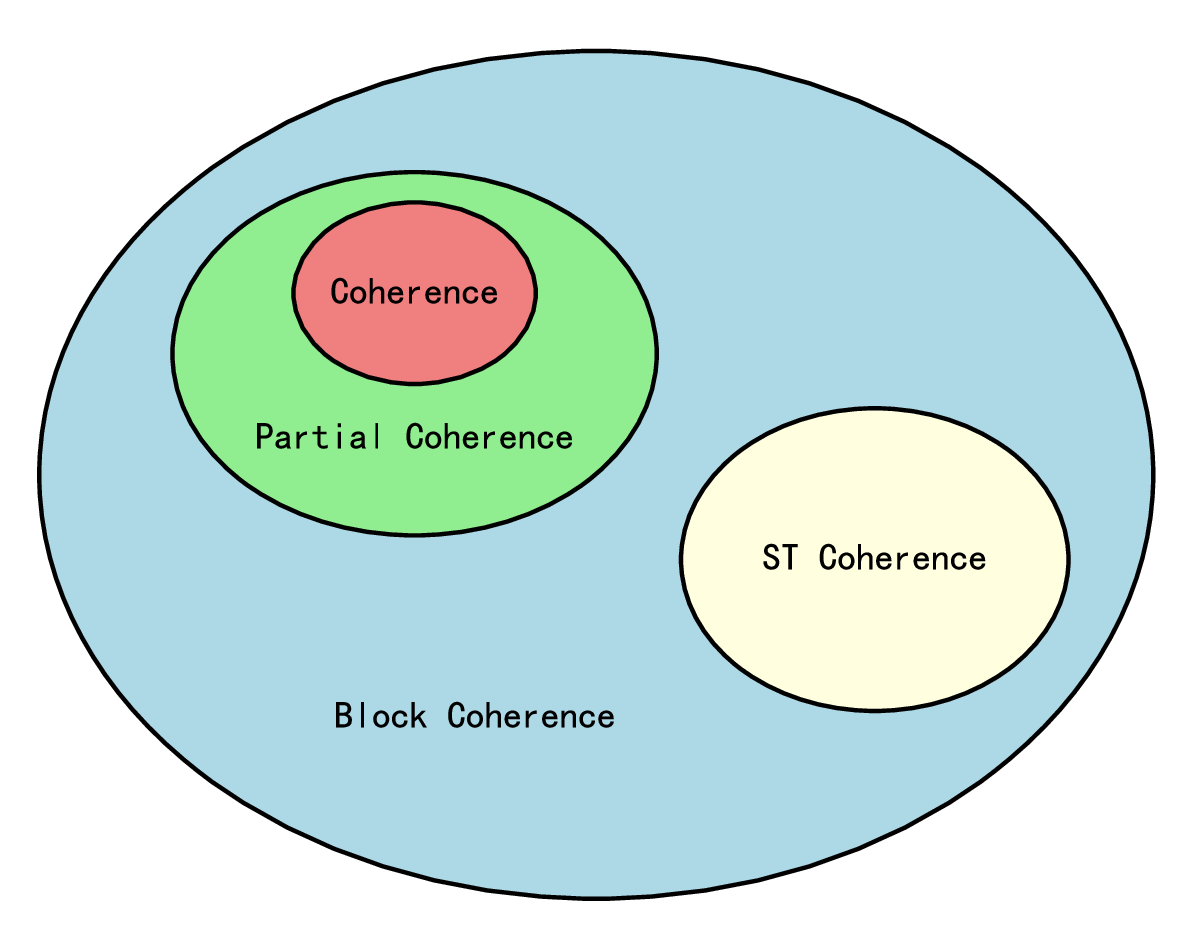}
		\caption{Relationships among various coherence resource theories.}
	\end{figure}
	\par
	It is evident that block coherence resource theory establishes a more general and inclusive framework for describing quantum coherence. This greatly expands the scope and applicability of quantum coherence research, enabling it to be applied to a wider range of physical systems, such as many-body systems with specific symmetries or quantum biology models. The subsequent construction and comparative study of measures in this work are all conducted within this general framework. Having clarified the basic conditions for a block coherence measure, a natural question arises: how can we construct new, equally valid block coherence measures from known ones? Next, we will construct block coherence measures from different perspectives.

	\subsection{Construction of Block Coherence Measures via Convex Combination and Convex Roof Extension}
	\par
	The following theorem demonstrates that the set of block coherence measures is closed under convex combination, meaning a new measure constructed in this way still satisfies (B1)-(B4). This provides a convex combination method for generating new block coherence measures by weighting different existing ones.
	\begin{theorem}
		For any set of block coherence measures $\left\lbrace C_j(\rho,\mathbf{P}) \right\rbrace_j$ associated with the projector set $\mathbf{P}=\left\lbrace P_k \right\rbrace_k$, and any set of constants $\left\lbrace q_j \right\rbrace_j$ satisfying $q_j\geqslant 0$ and $\sum_jq_j=1$, $C(\rho,\mathbf{P}):=\sum_jq_jC_j(\rho,\mathbf{P})$ is a block coherence measure.
	\end{theorem}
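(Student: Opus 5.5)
The plan is to verify conditions (B1)--(B4) one at a time for $C=\sum_j q_jC_j$. Each step uses only that the $q_j$ are nonnegative and sum to one, together with the fact that every $C_j$ already satisfies the same condition. Each condition is either a sign condition or a linear inequality in the measure, so it is preserved under nonnegative linear combinations. I will assume the index set $j$ is finite, or more generally that the sum converges, so that sums can be interchanged.

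For (B1), nonnegativity is immediate: $C(\rho,\mathbf{P})=\sum_j q_jC_j(\rho,\mathbf{P})\geqslant 0$ because every term is nonnegative. If $\rho\in\mathcal{I}_{\text{B}}$, then every $C_j(\rho,\mathbf{P})=0$, so $C(\rho,\mathbf{P})=0$.

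The converse direction of faithfulness is the only step needing care. Suppose $C(\rho,\mathbf{P})=0$. A sum of nonnegative terms vanishes only if each term vanishes, so $q_jC_j(\rho,\mathbf{P})=0$ for all $j$. Since $\sum_jq_j=1$, there is at least one index $j_0$ with $q_{j_0}>0$, which forces $C_{j_0}(\rho,\mathbf{P})=0$. Faithfulness of $C_{j_0}$ then gives $\rho\in\mathcal{I}_{\text{B}}$. This existence of a strictly positive weight is the one place where the normalization is genuinely used, and I expect it to be the main point to state explicitly.

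For (B2), take any $\Phi\in\mathcal{IO}_{\text{B}}$. Multiply $C_j(\Phi(\rho),\mathbf{P})\leqslant C_j(\rho,\mathbf{P})$ by $q_j\geqslant 0$ and sum over $j$.

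For (B3), take $\Phi=\{K_n\}_n$ with $p_n$ and $\rho_n$ defined as in the statement of (B3). These outcomes do not depend on $j$. Interchanging the finite sums gives
\begin{equation*}
\sum_np_nC(\rho_n,\mathbf{P})=\sum_jq_j\sum_np_nC_j(\rho_n,\mathbf{P})\leqslant\sum_jq_jC_j(\rho,\mathbf{P})=C(\rho,\mathbf{P}).
\end{equation*}

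For (B4), the same interchange applied to $\sum_np_nC_j(\rho_n,\mathbf{P})\geqslant C_j(\sum_np_n\rho_n,\mathbf{P})$ gives convexity of $C$. Together these four checks complete the verification.
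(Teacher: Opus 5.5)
Your proof is correct and follows the same route as the paper: you check (B1)--(B4) one condition at a time, using the nonnegativity of the weights and the fact that each condition is preserved under nonnegative linear combinations. Your faithfulness step is slightly more careful than the paper's. The paper asserts that $C(\rho,\mathbf{P})=0$ forces every $C_j(\rho,\mathbf{P})=0$, which fails for indices with $q_j=0$. You instead use the normalization to pick an index $j_0$ with $q_{j_0}>0$ and apply faithfulness of $C_{j_0}$.
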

	\begin{proof}
		\par\noindent\par
		We prove the conditions (B1)-(B4) in sequence.
		\par\noindent
		(B1) Non-negativity: For any quantum state $\rho$, each block coherence measure $C_j$ satisfies $C_j(\rho,\mathbf{P})\geqslant 0$, hence $C(\rho,\mathbf{P})\geqslant 0$. If $C(\rho,\mathbf{P})=0$, since $q_j\geqslant 0$, then each $C_j(\rho,\mathbf{P})=0$, implying $\rho\in\mathcal{I}_{\text{B}}$.
		\par\noindent
		(B2) Monotonicity: For any quantum state $\rho $ and any free operation $\Phi$, we have
		\begin{align*}
			C(\rho,\mathbf{P})=\sum_jq_jC_j(\rho,\mathbf{P})\geqslant\sum_jq_jC_j(\Phi(\rho),\mathbf{P})=C(\Phi(\rho),\mathbf{P}).
		\end{align*}
		\par\noindent
		(B3) Strong monotonicity: For any quantum state $\rho $ and any free operation $\Phi=\left\lbrace K_n \right\rbrace_n$, let $p_n=\text{Tr}(K_n\rho K_n^{\dagger})$ and $\rho_n=K_n\rho K_n^{\dagger}/p_n$. Since each $C_j$ satisfies $\sum_np_nC_j(\rho_n,\mathbf{P})\leqslant C_j(\rho,\mathbf{P})$, it follows that
		\begin{align*}
			C(\rho,\mathbf{P})
			&=\sum_jq_jC_j(\rho,\mathbf{P})
			\\
			&\geqslant\sum_jq_j\sum_np_nC_j(\rho_n,\mathbf{P})
			\\
			&=\sum_np_n\sum_jq_jC_j(\rho_n,\mathbf{P})
			\\
			&=\sum_np_nC(\rho_n,\mathbf{P}).
		\end{align*}
		\par\noindent
		(B4) Convexity: For any probability distribution $\left\lbrace p_n \right\rbrace_n$ satisfying $p_n\geqslant 0$ and $\sum_np_n=1$, and any set of quantum states $\left\lbrace \rho_n \right\rbrace_n$, we have
		\begin{align*}
			C\left(\sum_np_n\rho_n,\mathbf{P}\right)
			&=\sum_jq_jC_j\left(\sum_np_n\rho_n,\mathbf{P}\right)
			\\
			&\leqslant\sum_jq_j\sum_np_nC_j(\rho_n,\mathbf{P})
			\\
			&=\sum_np_n\sum_jq_jC_j(\rho_n,\mathbf{P})
			\\
			&=\sum_np_nC(\rho_n,\mathbf{P})
		\end{align*}
		This completes the proof that $C$ is a block coherence measure.
	\end{proof}
	\par
	Theorem 3.1 illustrates the approach of constructing new measures via convex combination of existing ones. Furthermore, we can focus on a measure itself and define a new measure through an optimization procedure. Specifically, we can consider minimizing over all possible decompositions of a quantum state, as shown in the next theorem. This method is known as the convex roof extension in quantum resource theory \cite{chen2023measures,liu2018superadditivity}.
	\begin{theorem}
		For a block coherence measure $C(\rho,\mathbf{P})$,
		\begin{eqnarray*}
			\hat{C}(\rho,\mathbf{P}):=\min_{\left\lbrace q_j,\rho_j\right\rbrace }\sum_jq_jC(\rho_j,\mathbf{P})
		\end{eqnarray*}
		is a block coherence measure. Here, $\min_{\left\lbrace q_j,\rho_j\right\rbrace }$ denotes the minimum value obtained by summing over all possible decompositions $\rho=\sum_jq_j\rho_j$ of $\rho$, where $q_j\geqslant 0$ and $\sum_jq_j=1$.
	\end{theorem}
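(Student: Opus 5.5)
The plan is to show first that, for a block coherence measure $C$, the convex roof $\hat{C}$ coincides with $C$ itself; all of (B1)--(B4) then follow at once from the hypothesis that $C$ is a block coherence measure. The key ingredient is the convexity property (B4) of $C$, which is part of the hypothesis.

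\textbf{Step 1 ($\hat{C}\leqslant C$).} The trivial decomposition $\rho=1\cdot\rho$ is admissible in the optimization, with a single term $q_1=1$, $\rho_1=\rho$. Hence
\begin{equation*}
\hat{C}(\rho,\mathbf{P})\leqslant C(\rho,\mathbf{P}).
\end{equation*}

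\textbf{Step 2 ($\hat{C}\geqslant C$).} Take any decomposition $\rho=\sum_jq_j\rho_j$ with $q_j\geqslant 0$ and $\sum_jq_j=1$. By (B4) for $C$,
\begin{equation*}
\sum_jq_jC(\rho_j,\mathbf{P})\geqslant C\Big(\sum_jq_j\rho_j,\mathbf{P}\Big)=C(\rho,\mathbf{P}).
\end{equation*}
Taking the infimum over all decompositions gives $\hat{C}\geqslant C$.

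\textbf{Step 3 (conclusion).} Steps 1 and 2 show that the infimum is attained by the trivial decomposition. So the ``$\min$'' in the definition of $\hat{C}$ is well defined, which disposes of the only potential technical issue, namely existence of the minimum. Moreover $\hat{C}(\rho,\mathbf{P})=C(\rho,\mathbf{P})$ for every state $\rho$. Non-negativity and faithfulness (B1), monotonicity (B2), strong monotonicity (B3) and convexity (B4) of $\hat{C}$ are therefore inherited verbatim from $C$.

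There is no genuine obstacle here. The only point to get right is that convexity (B4) of $C$ is part of the hypothesis, so the convex roof construction adds nothing new when applied to a full block coherence measure. If one wishes to state the theorem more usefully, the same scheme applies to a function $C$ that is not assumed convex.
\begin{itemize}
\item (B1) follows from non-negativity of $C$ together with the fact that all $\rho_j$ in a decomposition of a free state must be free. This holds because the free states form a face, since $\Delta(\rho)=\rho$ and positivity force each $\rho_j$ to be block diagonal.
\item Convexity (B4) of $\hat{C}$ follows by concatenating optimal decompositions.
\item (B2) and (B3) follow by applying each Kraus operator to each $\rho_j$ of an optimal decomposition and using (B3) of $C$ term by term.
\end{itemize}
In that more general setting, this last step is where care with the normalizations $p_n$ would be needed.
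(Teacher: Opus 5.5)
Your proof is correct, and it is genuinely different from the paper's and much shorter. The paper verifies (B1)--(B4) for $\hat{C}$ one at a time:
\begin{itemize}
\item faithfulness via an optimal decomposition whose components must all be free;
\item monotonicity by pushing an optimal decomposition through $\Phi$;
\item strong monotonicity by applying (B3) of $C$ to each $\rho_j$, then convexity of $C$, then $C\geqslant\hat{C}$;
\item convexity by concatenating optimal decompositions.
\end{itemize}
Both of your ingredients already appear in the paper's (B3) step: the trivial decomposition giving $\hat{C}\leqslant C$, and convexity of $C$. The paper never combines them. You do, and obtain $\hat{C}=C$ identically, so (B1)--(B4) are simply inherited.

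Your route shows that the theorem as stated adds nothing new. The minimization runs over mixed-state decompositions, and the seed measure is already convex, so the construction returns $C$ itself. Your route also proves that the minimum is attained, which the paper tacitly assumes throughout its proof.

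The paper's term-by-term route would be useful for a non-convex seed $C$. However, as written, its (B3) step invokes convexity of $C$. Your sketch for that setting avoids convexity: you compare the induced decomposition $\{q_ja_{j,n}/p_n,\;K_n\rho_jK_n^{\dagger}/a_{j,n}\}_j$ of $\rho_n$ directly with $\hat{C}(\rho_n)$.

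One correction to your closing aside: the free states do \emph{not} form a face. Take $\mathbf{P}=\{|0\rangle\langle 0|,|1\rangle\langle 1|\}$. The free state $I/2$ equals $\frac12|+\rangle\langle +|+\frac12|-\rangle\langle -|$, and both components are coherent. Block-diagonality of a sum of positive operators does not force each summand to be block diagonal.

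You do not need that claim. For free $\rho$, the trivial decomposition gives $\hat{C}(\rho)\leqslant C(\rho)=0$. The converse direction only uses convexity of $\mathcal{I}_{\text{B}}$, as in the paper. It also requires the minimum to be attained, which, without convexity of $C$, needs its own argument.
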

	\begin{proof}
		\par\noindent\par\noindent
		(B1) Non-negativity: Since each $\rho_j$ yields $C(\rho_j,\mathbf{P})\geqslant 0$, we have $\hat{C}(\rho,\mathbf{P})\geqslant 0$. If $\hat{C}(\rho,\mathbf{P})=0$, then there exists a decomposition $\rho=\sum_jq_j\rho_j$ such that $\sum_jq_jC(\rho_j,\mathbf{P})=0$. Given $q_j\geqslant 0$, this implies $C(\rho_j,\mathbf{P})=0$ for each $j$, meaning each $\rho_j\in\mathcal{I}_{\text{B}}$. Consequently,
		\begin{eqnarray*}
			\rho=\sum_jq_j\rho_j=\sum_jq_j\sum_kP_k\rho_jP_k=\sum_kP_k(\sum_jq_j\rho_j)P_k=\sum_kP_k\rho P_k
		\end{eqnarray*}
		that is, $\rho\in\mathcal{I}_{\text{B}}$.
		\par\noindent
		(B2) Monotonicity: There exists a decomposition $\rho=\sum_jq_j\rho_j$ that achieves the optimum for $\sum_jq_jC(\rho_j,\mathbf{P})$. Since $\Phi(\rho)=\sum_jq_j\Phi(\rho_j)$, the set $\left\lbrace q_j,\Phi(\rho_j) \right\rbrace $ constitutes a decomposition of $\Phi(\rho)$. Therefore, $\hat{C}(\rho,\mathbf{P})=\sum_jq_jC(\rho_j,\mathbf{P})\geqslant \sum_jq_j C(\Phi(\rho_j),\mathbf{P})\geqslant\hat{C}(\Phi(\rho),\mathbf{P})$.
		\par\noindent
		(B3) Strong monotonicity: First, for any set $\left\lbrace q_j\right\rbrace_j$ satisfying $q_j\geqslant 0$ and $\sum_jq_j=1$, it always holds that $\rho=\sum_jq_j\rho$. By the definition of $\hat{C}$, we have
		\begin{eqnarray*}
			\hat{C}(\rho,\mathbf{P}):=\min_{\left\lbrace q_j,\rho_j\right\rbrace }\sum_jq_jC(\rho_j,\mathbf{P})\leqslant\sum_jq_jC(\rho,\mathbf{P})=C(\rho,\mathbf{P}).
		\end{eqnarray*}
		Take the optimal decomposition $\rho=\sum_jq_j\rho_j$. For a free operation $\Phi=\left\lbrace K_n \right\rbrace_n$, let $p_n=\text{Tr}(K_n\rho K_n^{\dagger})$ and $\rho_n=K_n\rho K_n^{\dagger}/p_n$. Denote $a_{j,n}=\text{Tr}(K_n\rho_j K_n^{\dagger})$. Then we have
		\begin{eqnarray*}
			\sum_jq_ja_{j,n}=\sum_jq_j\text{Tr}(K_n\rho_j K_n^{\dagger})=\text{Tr}[K_n(\sum_jq_j\rho_j)K_n^{\dagger}]=\text{Tr}(K_n\rho K_n^{\dagger})=p_n.
		\end{eqnarray*}
		Thus, the following holds:
		\begin{align*}
			\hat{C}(\rho,\mathbf{P})&=\sum_jq_jC(\rho_j,\mathbf{P})
			\\
			&\geqslant\sum_jq_j\sum_na_{j,n}C\left( \frac{K_n\rho_j K_n^{\dagger}}{a_{j,n}},\mathbf{P}\right)
			\\
			&=\sum_n p_n\sum_j\frac{q_ja_{j,n}}{p_n}C\left( \frac{K_n\rho_j K_n^{\dagger}}{a_{j,n}},\mathbf{P}\right)
			\\
			&\geqslant\sum_n p_n C\left(\sum_j\frac{q_ja_{j,n}}{p_n}\frac{K_n\rho_j K_n^{\dagger}}{a_{j,n}} ,\mathbf{P}\right)
			\\
			&=\sum_np_nC\left( \frac{K_n\rho K_n^{\dagger}}{p_n},\mathbf{P}\right)
			\\
			&=\sum_np_nC(\rho_n,\mathbf{P})
			\\
			&\geqslant\sum_np_n\hat{C}(\rho_n,\mathbf{P}).
		\end{align*}
		The first inequality holds due to the strong monotonicity of $C(\rho,\mathbf{P})$, and the second inequality holds due to the convexity of $C(\rho,\mathbf{P})$.
		\par\noindent
		(B4) Convexity: To simplify the exposition, we prove convexity by considering only two states $\rho_1$ and $\rho_2$, without loss of generality. Let $\rho_1=\sum_mq_{1,m}\rho_{1,m}$ and $\rho_2=\sum_nq_{2,n}\rho_{2,n}$ be the optimal decompositions for $\rho_1$ and $\rho_2$, respectively. For any constants $a\geqslant 0$ and $b\geqslant 0$ satisfying $a+b=1$, let $\rho=a\rho_1+b\rho_2$. Then
		\begin{eqnarray*}
			a\hat{C}(\rho_1,\mathbf{P})+b\hat{C}(\rho_2,\mathbf{P})=\sum_maq_{1,m}C(\rho_{1,m},\mathbf{P})+\sum_nbq_{2,n}C(\rho_{2,n},\mathbf{P}).
		\end{eqnarray*}
		For notational convenience, assuming $m\neq n$, let us suppose $m>n$. We supplement the set $\left\lbrace q_{2,n},\rho_{2,n} \right\rbrace $ with $m-n$ element pairs $\left\lbrace 0,\bf{0}\right\rbrace $, so that it contains $m$ elements. Replace $\left\lbrace aq_{1,m},\rho_{1,m}\right\rbrace$ and $\left\lbrace bq_{2,m},\rho_{2,m}\right\rbrace$ with $\left\lbrace p_{2k-1},\rho_{2k-1} \right\rbrace $ and $\left\lbrace p_{2k},\rho_{2k} \right\rbrace $, respectively. Then we have:
		\begin{align*}
			a\hat{C}(\rho_1,\mathbf{P})+b\hat{C}(\rho_2,\mathbf{P})
			&=\sum_kp_{2k-1}C(\rho_{2k-1},\mathbf{P})+p_{2k}C(\rho_{2k},\mathbf{P})
			\\
			&\geqslant\min_{\left\lbrace q_j,\rho_j\right\rbrace }\sum_jq_jC(\rho_j,\mathbf{P})
			\\
			&=\hat{C}(\rho,\mathbf{P})
			\\
			&=\hat{C}(a\rho_1+b\rho_2,\mathbf{P}).
		\end{align*}
		From the above, it follows that $\hat{C}(\rho,\mathbf{P})$ is a block coherence measure.
	\end{proof}
	\par
	Theorems 3.1 and 3.2 provide general algebraic methods for constructing new measures from known ones. However, as can be seen, condition (B3) strong monotonicity is often not trivial to prove. Therefore, analogous to the alternative framework in coherence resource theory \cite{yu2016alternative} and imaginarity resource theory \cite{xue2021quantification}, literature \cite{xu2020general} introduces condition (B5) block additivity as follows, which makes conditions (B1)-(B4) equivalent to conditions (B1), (B2), and (B5).
	\par
	\textbf{(B5)} Block Additivity: Arbitrarily partition the projector set $\mathbf{P}$ such that $\mathbf{P}=\left\lbrace P_{k_1} \right\rbrace_{k_1}\cup\left\lbrace P_{k_2} \right\rbrace_{k_2}$, where $\left\lbrace k_1 \right\rbrace_{k_1}\cup\left\lbrace k_2 \right\rbrace_{k_2}=\left\lbrace k \right\rbrace_k$ and $\left\lbrace k_1 \right\rbrace_{k_1}\cap\left\lbrace k_2 \right\rbrace_{k_2}=\emptyset$. For any $p_1>0$ and $p_2>0$ satisfying $p_1+p_2=1$, and any quantum states $\rho_1$ and $\rho_2$ satisfying $\rho_1P_{k_2}=\rho_2P_{k_1}=0$, the measure $C$ must satisfy $C(p_1\rho_1\oplus p_2\rho_2,\mathbf{P})=p_1C(\rho_1,\mathbf{P})+p_2C(\rho_2,\mathbf{P})$.
	\par
	With the aid of this alternative framework, we can construct and verify new families of block coherence measures based on more general entropy functions. In the next section, we will present a family of measures defined via the $\alpha-z$ Rényi relative entropy.
	
	\section{Construction of block coherence measures via alternative framework}
	\subsection{Block coherence measures induced by $\alpha$-$z$ Rényi Relative Entropy}
	The $\alpha$-$z$ Rényi relative entropy, which possesses two parameters $\alpha$ and $z$, has excellent properties and serves as a generalization of various relative entropies. The $\alpha$-$z$ Rényi relative entropy is defined in literature \cite{audenaert2015alpha} as:
	\begin{eqnarray*}
		D_{\alpha,z}(A||B)=\frac{1}{\alpha-1}\log\text{Tr}(B^{\frac{1-\alpha}{2z}}A^{\frac{\alpha}{z}}B^{\frac{1-\alpha}{2z}})^z.
		\nonumber
	\end{eqnarray*}
	\par
	To construct and prove the new block coherence measure to be presented next, we first need to introduce two key matrix inequalities as lemmas \cite{audenaert2007araki}:
	\begin{lemma}
		Let $A$ and $B$ be two positive semidefinite matrices. For $r\geqslant1$, and $q\geqslant 0$, the following inequality holds:
		\begin{align*}
			\text{Tr}(ABA)^{rq}\leqslant\text{Tr}(A^rB^rA^r)^q.
		\end{align*}
	\end{lemma}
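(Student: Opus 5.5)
This is the Araki–Lieb–Thirring inequality. I will prove it in the standard way: reduce to an operator-norm statement via antisymmetric tensor powers, and obtain the norm statement from the Cordes inequality.

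\emph{Step 1 (norm inequality).} The Cordes inequality states $\|X^sY^s\|\leqslant\|XY\|^s$ for $X,Y\geqslant 0$ and $0\leqslant s\leqslant 1$. One proof takes the set of $s\in[0,1]$ where it holds and shows that this set contains $0$ and $1$ and is closed under midpoints, using $\|X^{(s+t)/2}Y^{(s+t)/2}\|^2=\|Y^{(s+t)/2}X^{s+t}Y^{(s+t)/2}\|\leqslant\|X^sY^s\|\,\|X^tY^t\|$; continuity then gives all of $[0,1]$. Apply it with $X=A^r$, $Y=B^r$ and $s=1/r$. This gives
\[
\|ABA\|=\|AB^{1/2}\|^2\leqslant\|A^rB^{r/2}\|^{2/r}=\|A^rB^rA^r\|^{1/r}.
\]
Equivalently, $\lambda_1(ABA)^r\leqslant\lambda_1(A^rB^rA^r)$.

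\emph{Step 2 (log-majorization).} Apply Step 1 to the compound matrices $\wedge^kA$ and $\wedge^kB$. These are positive semidefinite and satisfy $\wedge^k(A^r)=(\wedge^kA)^r$ and $\wedge^k(XY)=\wedge^kX\,\wedge^kY$. The largest eigenvalue of $\wedge^k M$ is the product of the $k$ largest eigenvalues of $M$. Therefore
\[
\prod_{i=1}^k\lambda_i(ABA)^r\leqslant\prod_{i=1}^k\lambda_i(A^rB^rA^r)\quad\text{for all } k,
\]
with equality at $k=n$, since both sides equal $\det(A)^{2r}\det(B)^r$. So the vector $\lambda(ABA)^r$ is log-majorized by $\lambda(A^rB^rA^r)$.

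\emph{Step 3 (trace powers).} Log-majorization implies weak majorization of $f(\lambda)$ for any $f$ such that $f(e^t)$ is convex and increasing. Take $f(x)=x^q$ with $q\geqslant 0$; then $e^{tq}$ is convex and increasing. Applying this to the vectors $\lambda(ABA)^r$ and $\lambda(A^rB^rA^r)$ gives
\[
\sum_i\lambda_i(ABA)^{rq}\leqslant\sum_i\lambda_i(A^rB^rA^r)^q,
\]
which is $\text{Tr}(ABA)^{rq}\leqslant\text{Tr}(A^rB^rA^r)^q$. For singular matrices the same argument goes through. Alternatively, perturb $A\to A+\epsilon I$ and $B\to B+\epsilon I$ and take the limit as $\epsilon\to0$ by continuity.

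The main obstacle is Step 1, the Cordes inequality. My first attempt is the midpoint-convexity argument above. The key identity there is $\|X^aY^a\|^2=\|Y^aX^{2a}Y^a\|$, and the submultiplicativity bound $\|Y^aX^{s}\,X^{t}Y^a\|\leqslant\|Y^aX^s\|\,\|X^tY^a\|$ at $a=(s+t)/2$ only closes if $Y^a$ is split correctly between the two factors. I would handle this with the spectral-radius form instead: $\|X^aY^a\|^2=r(X^{2a}Y^{2a})=r(X^sY^{s}\,Y^{t}X^{t})$, using cyclicity of the spectral radius, and this is at most $\|X^sY^s\|\,\|Y^tX^t\|$.
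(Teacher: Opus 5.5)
The paper does not prove this lemma. It is the Araki--Lieb--Thirring inequality, and the paper simply quotes it from the literature \cite{audenaert2007araki}, so there is no internal argument to compare against.

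Your proposal supplies a correct, self-contained proof along the standard lines:
\begin{itemize}
\item an operator-norm (top-eigenvalue) inequality,
\item lifted to all partial products $\prod_{i\le k}\lambda_i$ by antisymmetric tensor powers,
\item then turned into a trace inequality by log-majorization.
\end{itemize}
Compared with the bare citation, this costs some length but proves more than the paper needs. The log-majorization $\lambda(ABA)^r\prec_{\log}\lambda(A^rB^rA^r)$ gives $\text{Tr}\, f\big((ABA)^r\big)\leqslant\text{Tr}\, f(A^rB^rA^r)$ for every continuous increasing $f$ with $t\mapsto f(e^t)$ convex, not only $f(x)=x^q$. It also gives the analogous inequality for every unitarily invariant norm.

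An equally standard way to get the norm step, avoiding the midpoint-and-continuity argument, is L\"owner--Heinz. If $ABA\leqslant I$ with $A$ invertible, then $B\leqslant A^{-2}$. Hence $B^{t}\leqslant A^{-2t}$, and so $A^tB^tA^t\leqslant I$ for $0<t\leqslant 1$.

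Two small repairs are needed.

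First, in Step 1 the substitution must be $Y=B^{r/2}$, not $Y=B^r$. With $X=A^r$, $Y=B^{r/2}$ and $s=1/r$, Cordes gives $\|AB^{1/2}\|\leqslant\|A^rB^{r/2}\|^{1/r}$, which is what your display actually uses.

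Second, consider the midpoint inequality $\|Y^{a}X^{2a}Y^{a}\|\leqslant\|X^sY^s\|\,\|X^tY^t\|$ with $a=(s+t)/2$. It is true, but it does not follow from submultiplicativity as you first state it in Step 1. Its justification is the spectral-radius computation in your last paragraph:
\[
\|X^aY^a\|^2=\mathrm{spr}(X^{2a}Y^{2a})=\mathrm{spr}(X^sY^s\cdot Y^tX^t)\leqslant\|X^sY^s\|\,\|Y^tX^t\|.
\]
Step 1 should be written that way. Use $\mathrm{spr}$ rather than $r(\cdot)$ for the spectral radius, since $r$ is already the exponent in the lemma.
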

	\par
	Utilizing the above lemma, we can now define a new family of block coherence measures based on the $\alpha$-$z$ Rényi relative entropy and verify that it satisfies the required conditions for a measure. We have the following theorem:
	\begin{theorem}
		The function $C_{\alpha,z}$ defined via the $\alpha$-$z$ Rényi relative entropy is a valid block coherence measure, defined as:
		\begin{eqnarray*}
			C_{\alpha,z}(\rho,\mathbf{P})=1-\max_{\sigma\in\mathcal{I}_{\text{B}}}\left\lbrace \text{Tr}(\sigma^{\frac{1-\alpha}{2z}}\rho^{\frac{\alpha}{z}}\sigma^{\frac{1-\alpha}{2z}})^z \right\rbrace^{\frac{1}{\alpha}},
		\end{eqnarray*}
		where $\alpha\in(0,1)$ and $\text{max}\left\lbrace \alpha,1-\alpha\right\rbrace \leqslant z$.
	\end{theorem}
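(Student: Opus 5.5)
Following the alternative framework of \cite{xu2020general}, I will verify (B1), (B2) and (B5) for $C_{\alpha,z}$; the equivalence quoted above then gives (B1)--(B4). Write $f_{\alpha,z}(\rho,\sigma)=\text{Tr}(\sigma^{\frac{1-\alpha}{2z}}\rho^{\frac{\alpha}{z}}\sigma^{\frac{1-\alpha}{2z}})^z$, so that $C_{\alpha,z}(\rho,\mathbf{P})=1-\max_{\sigma\in\mathcal{I}_{\text{B}}}f_{\alpha,z}(\rho,\sigma)^{1/\alpha}$. I will use two standard facts about the $\alpha$-$z$ R\'enyi divergence \cite{audenaert2015alpha} in the range $\alpha\in(0,1)$, $\max\{\alpha,1-\alpha\}\leqslant z$. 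First, $f_{\alpha,z}$ is jointly concave, so the data-processing inequality $f_{\alpha,z}(\Phi(\rho),\Phi(\sigma))\geqslant f_{\alpha,z}(\rho,\sigma)$ holds for every CPTP map $\Phi$. Second, $f_{\alpha,z}(\rho,\sigma)\leqslant 1$ for states, with equality if and only if $\rho=\sigma$. The maximum is attained because $\mathcal{I}_{\text{B}}$ is compact and $f_{\alpha,z}$ is continuous.

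\textbf{(B1).} Since $f_{\alpha,z}\leqslant 1$, we have $C_{\alpha,z}\geqslant 0$. Moreover, $C_{\alpha,z}(\rho,\mathbf{P})=0$ if and only if some $\sigma\in\mathcal{I}_{\text{B}}$ satisfies $f_{\alpha,z}(\rho,\sigma)=1$, that is, $\rho=\sigma\in\mathcal{I}_{\text{B}}$.

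\textbf{(B2).} Let $\sigma^*$ be optimal for $\rho$. A block-incoherent operation maps $\mathcal{I}_{\text{B}}$ into itself, which is visible from the Kraus form $K_n=\sum_jP_{f_n(j)}M_nP_j$. Data processing then gives
\begin{align*}
\max_{\sigma\in\mathcal{I}_{\text{B}}}f_{\alpha,z}(\Phi(\rho),\sigma)\geqslant f_{\alpha,z}(\Phi(\rho),\Phi(\sigma^*))\geqslant f_{\alpha,z}(\rho,\sigma^*),
\end{align*}
and therefore $C_{\alpha,z}(\Phi(\rho),\mathbf{P})\leqslant C_{\alpha,z}(\rho,\mathbf{P})$.

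\textbf{(B5).} This is the main computation. Take $\rho=p_1\rho_1\oplus p_2\rho_2$, where $\rho_1$ is supported on the range of $\Pi_1=\sum_{k_1}P_{k_1}$ and $\rho_2$ on the range of $\Pi_2$.

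\emph{Reduction to block-diagonal $\sigma$.} I first show the maximization may be restricted to $\sigma=\Pi_1\sigma\Pi_1+\Pi_2\sigma\Pi_2$. The pinching $\mathcal{E}(X)=\Pi_1X\Pi_1+\Pi_2X\Pi_2$ is a CPTP map that fixes $\rho$ and preserves $\mathcal{I}_{\text{B}}$. By data processing, replacing $\sigma$ with $\mathcal{E}(\sigma)$ does not decrease $f_{\alpha,z}$. So we may write $\sigma=t_1\sigma_1\oplus t_2\sigma_2$ with $\sigma_i\in\mathcal{I}_{\text{B}}$ supported on block $i$ and $t_1+t_2=1$.

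\emph{Splitting over blocks.} The functional calculus splits over the direct sum, so
\begin{align*}
f_{\alpha,z}(\rho,\sigma)=p_1^{\alpha}t_1^{1-\alpha}F_1+p_2^{\alpha}t_2^{1-\alpha}F_2,\qquad F_i=f_{\alpha,z}(\rho_i,\sigma_i).
\end{align*}
Maximizing over $\sigma_1,\sigma_2$ separately gives $F_i^*=(1-C_{\alpha,z}(\rho_i,\mathbf{P}))^{\alpha}$.

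\emph{Optimizing the weights.} It remains to maximize $p_1^\alpha F_1^* t_1^{1-\alpha}+p_2^\alpha F_2^* t_2^{1-\alpha}$ over $t_1+t_2=1$. By H\"older's inequality with exponents $1/\alpha$ and $1/(1-\alpha)$, with equality at $t_i\propto (p_i^\alpha F_i^*)^{1/\alpha}$, the maximum equals
\begin{align*}
\Big[\sum_i p_i(F_i^*)^{1/\alpha}\Big]^{\alpha}=\Big[\sum_ip_i(1-C_{\alpha,z}(\rho_i,\mathbf{P}))\Big]^{\alpha}.
\end{align*}
Raising this to the power $1/\alpha$ gives $1-C_{\alpha,z}(\rho,\mathbf{P})=\sum_ip_i(1-C_{\alpha,z}(\rho_i,\mathbf{P}))$, which is exactly (B5). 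This calculation is also what fixes the exponent $1/\alpha$ in the definition.

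\textbf{Main obstacle.} The delicate step is justifying data processing (joint concavity of $f_{\alpha,z}$) in the stated parameter range. If I can only cite it rather than derive it, I will at least handle the weaker uses directly with Lemma 1. That lemma should give $f_{\alpha,z}\leqslant 1$ by comparison with the $z=1$ case. For the pinching step it suffices to know that $f_{\alpha,z}(\mathcal{E}(\rho),\mathcal{E}(\sigma))\geqslant f_{\alpha,z}(\rho,\sigma)$ when $\mathcal{E}(\rho)=\rho$. For (B2), however, I will rely on the general data-processing result of the cited literature.
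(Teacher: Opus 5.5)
Your proposal is correct and takes essentially the paper's route: the alternative framework, with (B1) from positivity of $D_{\alpha,z}$ and its equality case $\rho=\sigma$, (B2) from the data-processing inequality applied to an optimal free $\sigma$, and (B5) by splitting over the two blocks and optimizing the weights with H\"older's inequality. Two small remarks. First, your pinching reduction is unnecessary, since every $\sigma\in\mathcal{I}_{\text{B}}$ commutes with each $P_k$ and is therefore already block diagonal with respect to $\Pi_1,\Pi_2$. Second, your fallback of getting $f_{\alpha,z}\leqslant 1$ from Lemma 1 by comparison with $z=1$ only works for $z\geqslant 1$; for $z<1$ Lemma 1 gives $f_{\alpha,z}\geqslant f_{\alpha,1}$, which is why the paper compares with the sandwiched case $z=\alpha$ instead, allowed because $z\geqslant\alpha$.
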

	\begin{proof}
		\par\noindent\par
		Condition (B3) is not convenient to prove directly, so we complete this proof via the alternative framework.
		\par\noindent
		(B1) Non-negativity: From literature \cite{muller2013quantum}, we know $D_{\alpha,\alpha}(\rho||\sigma)\geqslant 0$. This implies $\log\text{Tr}(\sigma^{\frac{1-\alpha}{2\alpha}}\rho^{\frac{\alpha}{z}}\sigma^{\frac{1-\alpha}{2\alpha}})^\alpha\leqslant 0$, i.e., $\text{Tr}(\sigma^{\frac{1-\alpha}{2\alpha}}\rho^{\frac{\alpha}{\alpha}}\sigma^{\frac{1-\alpha}{2\alpha}})^\alpha\leqslant 1$. Then, using Lemma 3.1, we can obtain:
		\begin{align*}
			\text{Tr}(\sigma^{\frac{1-\alpha}{2z}}\rho^{\frac{\alpha}{z}}\sigma^{\frac{1-\alpha}{2z}})^z=\text{Tr}(\sigma^{\frac{1-\alpha}{2z}}\rho^{\frac{\alpha}{z}}\sigma^{\frac{1-\alpha}{2z}})^{\frac{z}{\alpha}\alpha}\leqslant\text{Tr}(\sigma^{\frac{1-\alpha}{2\alpha}}\rho^{\frac{\alpha}{\alpha}}\sigma^{\frac{1-\alpha}{2\alpha}})^\alpha\leqslant 1.
		\end{align*}
		Thus, $\max_{\sigma\in\mathcal{I}_{\text{B}}}\left\lbrace \text{Tr}(\sigma^{\frac{1-\alpha}{2z}}\rho^{\frac{\alpha}{z}}\sigma^{\frac{1-\alpha}{2z}})^z \right\rbrace^{\frac{1}{\alpha}}\leqslant 1$, which yields $C_{\alpha,z}(\rho,\mathbf{P})\geqslant 0$. When $C_{\alpha,z}(\rho,\mathbf{P})=0$, there exists $\gamma\in\mathcal{I}_{\text{B}}$ such that $\text{Tr}(\gamma^{\frac{1-\alpha}{2z}}\rho^{\frac{\alpha}{z}}\gamma^{\frac{1-\alpha}{2z}})^z=1$. Then, according to literature \cite{audenaert2015alpha}, the following equivalence chain holds:
		\begin{eqnarray*}
			C_{\alpha,z}(\rho,\mathbf{P})=0&\Leftrightarrow&\text{Tr}(\gamma^{\frac{1-\alpha}{2z}}\rho^{\frac{\alpha}{z}}\gamma^{\frac{1-\alpha}{2z}})^z=1
			\\
			&\Leftrightarrow&\log\text{Tr}(\gamma^{\frac{1-\alpha}{2z}}\rho^{\frac{\alpha}{z}}\gamma^{\frac{1-\alpha}{2z}})^z=0
			\\
			&\Leftrightarrow&D_{\alpha,z}(\rho||\gamma)=0
			\\
			&\Leftrightarrow&\rho=\gamma
		\end{eqnarray*}
		Therefore, $C_{\alpha,z}(\rho,\mathbf{P})=0$ is equivalent to $\rho\in\mathcal{I}_{\text{B}}$, which proves non-negativity.
		\par\noindent
		(B2) Monotonicity: Literature \cite{audenaert2015alpha} provides the DPI inequality for $D_{\alpha,z}$, namely $D_{\alpha,z}[\Phi(\rho)||\Phi(\sigma)]\leqslant D_{\alpha,z}(\rho||\sigma)$. This is equivalent to $\text{Tr}[\Phi(\sigma)^{\frac{1-\alpha}{2z}}\Phi(\rho)^{\frac{\alpha}{z}}\Phi(\sigma)^{\frac{1-\alpha}{2z}}]^z\geqslant\text{Tr}(\sigma^{\frac{1-\alpha}{2z}}\rho^{\frac{\alpha}{z}}\sigma^{\frac{1-\alpha}{2z}})^z$. Thus, there exists $\gamma\in\mathcal{I}_{\text{B}}$ such that:
		\begin{align*}
			C_{\alpha,z}(\rho,\mathbf{P})
			&=
			1-\left\lbrace \text{Tr}(\gamma^{\frac{1-\alpha}{2z}}\rho^{\frac{\alpha}{z}}\gamma^{\frac{1-\alpha}{2z}})^z \right\rbrace^{\frac{1}{\alpha}}
			\\
			&\geqslant
			1-\left\lbrace \text{Tr}[\Phi(\gamma)^{\frac{1-\alpha}{2z}}\Phi(\rho)^{\frac{\alpha}{z}}\Phi(\gamma)^{\frac{1-\alpha}{2z}}]^z \right\rbrace^{\frac{1}{\alpha}}
			\\
			&\geqslant
			1-\max_{\sigma\in\mathcal{I}_{\text{B}}}\left\lbrace \text{Tr}[\sigma^{\frac{1-\alpha}{2z}}\Phi(\rho)^{\frac{\alpha}{z}}\sigma^{\frac{1-\alpha}{2z}}]^z \right\rbrace^{\frac{1}{\alpha}}
			\\
			&=
			C_{\alpha,z}(\Phi(\rho),\mathbf{P}).
		\end{align*}
		This proves monotonicity.
		\par\noindent
		(B5) Block Additivity: In the requirement of block additivity, let $\rho=p_1\rho_1\oplus p_2\rho_2$. For any $\sigma\in\mathcal{I}_{\text{B}}$, there exists a decomposition such that $\sigma=q_1\sigma_1\oplus q_2\sigma_2$, where the coefficients $q_1>0$ and $q_2>0$ are undetermined constants satisfying $q_1+q_2=1$, and $\dim\sigma_j=\dim\rho_j$. Therefore, we know:
		\begin{align*}
			&\hspace{1.1em}\max_{\sigma} \text{Tr}(\sigma^{\frac{1-\alpha}{2z}}\rho^{\frac{\alpha}{z}}\sigma^{\frac{1-\alpha}{2z}})^z
			\\
			&=
			\max_{q_1, q_2}\left\lbrace q_1^{1-\alpha}p_1^{\alpha}\max_{\sigma_1} \text{Tr}(\sigma_1^{\frac{1-\alpha}{2z}}\rho_1^{\frac{\alpha}{z}}\sigma_1^{\frac{1-\alpha}{2z}})^z+q_2^{1-\alpha}p_2^{\alpha}\max_{\sigma_2} \text{Tr}(\sigma_2^{\frac{1-\alpha}{2z}}\rho_2^{\frac{\alpha}{z}}\sigma_2^{\frac{1-\alpha}{2z}})^z\right\rbrace
			\\
			&=
			\max_{q_1, q_2}(q_1^{1-\alpha}p_1^{\alpha}t_1+q_2^{1-\alpha}p_2^{\alpha}t_2),
		\end{align*}
		where $t_i=\max_{\sigma_i} \text{Tr}(\sigma_i^{\frac{1-\alpha}{2z}}\rho_i^{\frac{\alpha}{z}}\sigma_i^{\frac{1-\alpha}{2z}})^z$.
		\par
		Next, using the H{\"o}lder inequality, we have:
		\begin{align*}
			q_1^{1-\alpha}p_1^{\alpha}t_1+q_2^{1-\alpha}p_2^{\alpha}t_2&\leqslant \left[ \sum_{i}(q_i^{1-\alpha})^{\frac{1}{1-\alpha}}\right] ^{1-\alpha}\left[ \sum_{i}(p_i^{\alpha}t_i)^{\frac{1}{\alpha}}\right] ^{\alpha}
			\\
			&=\left(\sum_{i}q_i\right) ^{1-\alpha}\left[ \sum_{i}(p_i^{\alpha}t_i)^{\frac{1}{\alpha}}\right] ^{\alpha}
			\\
			&=\left[ \sum_{i}(p_i^{\alpha}t_i)^{\frac{1}{\alpha}}\right] ^{\alpha},
		\end{align*}
		The inequality becomes an equality if and only if $q_i=cp_it_i^{\frac{1}{\alpha}}$, where $c=(p_1t_1^{\frac{1}{\alpha}}+p_2t_2^{\frac{1}{\alpha}})^{-1}$. Thus, the originally undetermined $q_1$ and $q_2$ can be determined by $p_j$ and $t_j$. We have:
		\begin{eqnarray*}
			\max_{\sigma}\left\lbrace  \text{Tr}(\sigma^{\frac{1-\alpha}{2z}}\rho^{\frac{\alpha}{z}}\sigma^{\frac{1-\alpha}{2z}})^z\right\rbrace^{\frac{1}{\alpha}}=\sum_{i}(p_i^{\alpha}t_i)^{\frac{1}{\alpha}}=\sum_{i}p_i\max_{\sigma_i}\left\lbrace  \text{Tr}(\sigma_i^{\frac{1-\alpha}{2z}}\rho_i^{\frac{\alpha}{z}}\sigma_i^{\frac{1-\alpha}{2z}})^z\right\rbrace ^{\frac{1}{\alpha}}.
		\end{eqnarray*}
		This implies $C_{\alpha,z}(\rho,\mathbf{P})=p_1C_{\alpha,z}(\rho_1,\mathbf{P})+p_2C_{\alpha,z}(\rho_2,\mathbf{P})$, i.e., additivity holds.
		\par
		So far, we have verified conditions (B1), (B2), and (B5) to confirm that $C_{\alpha,z}(\rho,\mathbf{P})$ is a valid block coherence measure.
	\end{proof}
	\par
	By verifying the alternative framework conditions, we have established the validity of $C_{\alpha,z}$ as a block coherence measure. An important value of the $C_{\alpha,z}$ family lies in the fact that by choosing specific parameters $\alpha$ and $z$, it reduces to or connects with many known measures, thereby forming a more unified framework for studying block coherence. We can provide two examples to illustrate:
	\begin{example}
		When $\alpha=0.5$ and $z=0.5$, $C_{\alpha,z}$ becomes:
		\begin{align*}
			C_{0.5,0.5}(\rho,\mathbf{P})
			&=1-\max_{\sigma\in\mathcal{I}_{\text{B}}}\left( \text{Tr}\sqrt{\sqrt{\rho}\sigma\sqrt{\rho}} \right)^2
			\\
			&=1-\max_{\sigma\in\mathcal{I}_{\text{B}}}F^2(\rho,\sigma)
			\\
			&=1-\max_{\sigma\in\mathcal{I}_{\text{B}}}F^2(\sigma,\rho)
			\\
			&=C_{geo}(\rho,\mathbf{P}).
		\end{align*}
		This is precisely the geometric block coherence measure.
	\end{example}
	\begin{example}
		When $\alpha=0.5$ and $z=1$, to compute $C_{\alpha,z}$, we need a useful alternative expression. In literature \cite{xu2020general}, it is proven that for any $\alpha>0$:
		\begin{eqnarray*}
			\frac{\left\lbrace \sum_i\text{Tr}[(P_i\rho^\alpha P_i)^{\frac{1}{\alpha}}] \right\rbrace^{\alpha}-1 }{\alpha-1}=\min_{\sigma\in\mathcal{I}_{\text{B}}}\frac{\text{Tr}(\rho^{\alpha}\sigma^{1-\alpha})-1}{\alpha-1},
		\end{eqnarray*}
		thus it holds that:
		\begin{eqnarray*}
			\sum_i\text{Tr}[(P_i\rho^\alpha P_i)^{\frac{1}{\alpha}}]=\left\lbrace \min_{\sigma\in\mathcal{I}_{\text{B}}}\text{Tr}(\rho^{\alpha}\sigma^{1-\alpha}) \right\rbrace ^{\frac{1}{\alpha}}=\min_{\sigma\in\mathcal{I}_{\text{B}}}[\text{Tr}(\rho^{\alpha}\sigma^{1-\alpha})]^{\frac{1}{\alpha}}.
		\end{eqnarray*}
		Therefore, when $z=1$ we have:
		\begin{align*}
			C_{\alpha,1}(\rho,\mathbf{P})
			&=1-\max_{\sigma\in\mathcal{I}_{\text{B}}}\left\lbrace \text{Tr}(\sigma^{\frac{1-\alpha}{2}}\rho^{\alpha}\sigma^{\frac{1-\alpha}{2}}) \right\rbrace^{\frac{1}{\alpha}}
			\\
			&=1-\max_{\sigma\in\mathcal{I}_{\text{B}}}\left\lbrace \text{Tr}(\rho^{\alpha}\sigma^{1-\alpha}) \right\rbrace^{\frac{1}{\alpha}}
			\\
			&=1-\sum_i\text{Tr}[(P_i\rho^\alpha P_i)^{\frac{1}{\alpha}}].
		\end{align*}
		This simplifies the nonlinear programming formulation. Hence, we have:
		\begin{align*}
			C_{0.5,1}(\rho,\mathbf{P})&=1-\sum_i\text{Tr}[(P_i\sqrt{\rho} P_i)^2]
			\\
			&=\sum_i\text{Tr}(P_i\rho)-\sum_i\text{Tr}[P_i\sqrt{\rho} P_i\cdot P_i\sqrt{\rho} P_i]
			\\
			&=\sum_i\left[ \text{Tr}(P_i\rho)- P_i\sqrt{\rho}P_i\sqrt{\rho} \right]
			\\
			&=\sum_i\left[ -\frac{1}{2}\text{Tr}\left(\sqrt{\rho},P_i\right)\right]
			\\
			&=\sum_iI_{WY}(\rho,P_i).
		\end{align*}
		Here, $I_{WY}(\rho,A)=-\frac{1}{2}\text{Tr}[\sqrt{\rho},A]^2$ represents the Wigner-Yanase skew information, and $[A,B]=AB-BA$ denotes the commutator. This leads to a block coherence measure defined by $I_{WY}$.
	\end{example}
	\par
	Next, we will further explore the intrinsic properties of this family of measures. These properties help us understand how the parameters $\alpha$ and $z$ influence the measure's value, as well as the measure's behavior under common quantum operations. As a two-parameter family of block coherence measures, the value of $C_{\alpha,z}$ depends on both parameters $\alpha$ and $z$, and changes when these parameters vary. The following conclusions can be drawn:
	\begin{proposition}
		For the parameters $\alpha$ and $z$ of the block coherence measure $C_{\alpha,z}$, the following holds:
		\par\noindent
		(1) When $\alpha$ is fixed, the value of $C_{\alpha,z}(\rho,\mathbf{P})$ increases as $z$ increases;
		\par\noindent
		(2) When $z=1$, if $\alpha_1\leqslant\alpha_2$, then $C_{\alpha_1,z}(\rho,\mathbf{P})\geqslant C_{\alpha_2,z}(\rho,\mathbf{P})$.
	\end{proposition}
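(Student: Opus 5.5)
Part (1) can be proved pointwise in $\sigma$ by Lemma 3.1, with $\alpha$ fixed. Take $z_1\leqslant z_2$ in the admissible range, so $\max\{\alpha,1-\alpha\}\leqslant z_1$. Set $A=\sigma^{\frac{1-\alpha}{2z_2}}$, $B=\rho^{\frac{\alpha}{z_2}}$, $r=z_2/z_1\geqslant 1$ and $q=z_1$. Lemma 3.1 then gives
\begin{align*}
\text{Tr}(\sigma^{\frac{1-\alpha}{2z_2}}\rho^{\frac{\alpha}{z_2}}\sigma^{\frac{1-\alpha}{2z_2}})^{z_2}=\text{Tr}(ABA)^{rq}\leqslant\text{Tr}(A^rB^rA^r)^{q}=\text{Tr}(\sigma^{\frac{1-\alpha}{2z_1}}\rho^{\frac{\alpha}{z_1}}\sigma^{\frac{1-\alpha}{2z_1}})^{z_1}.
\end{align*}
This holds for every $\sigma\in\mathcal{I}_{\text{B}}$. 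The inequality is preserved by maximizing over $\mathcal{I}_{\text{B}}$ and by the increasing map $x\mapsto x^{1/\alpha}$ on $[0,\infty)$. Hence the maximum at $z_2$ is at most the maximum at $z_1$, and so $C_{\alpha,z_1}(\rho,\mathbf{P})\leqslant C_{\alpha,z_2}(\rho,\mathbf{P})$. This is the same mechanism already used in the proof of (B1) of Theorem 3.3, applied with a general ratio $z_2/z_1$ instead of $z/\alpha$, so I expect no obstacle here.

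For (2) I will use the closed form from Example 2, $C_{\alpha,1}(\rho,\mathbf{P})=1-\max_{\sigma\in\mathcal{I}_{\text{B}}}\{\text{Tr}(\rho^{\alpha}\sigma^{1-\alpha})\}^{1/\alpha}$. It suffices to show that $g(\alpha):=\{\text{Tr}(\rho^{\alpha}\sigma^{1-\alpha})\}^{1/\alpha}$ is nondecreasing in $\alpha\in(0,1)$ for each fixed $\sigma$; taking the maximum over $\sigma$ then preserves monotonicity. Write $\text{Tr}(\rho^\alpha\sigma^{1-\alpha})=e^{(\alpha-1)D_\alpha(\rho\|\sigma)}$, where $D_\alpha$ is the Petz–Rényi divergence. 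Then
\begin{align*}
\log g(\alpha)=-\frac{1-\alpha}{\alpha}D_\alpha(\rho\|\sigma).
\end{align*}
The Petz–Rényi divergence is known to be nondecreasing in $\alpha$, and $(1-\alpha)/\alpha$ is positive and decreasing on $(0,1)$. Monotonicity of the product therefore does not follow automatically, so I would instead argue from convexity directly.

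Define $\psi(\alpha)=\log\text{Tr}(\rho^\alpha\sigma^{1-\alpha})$. This function is convex in $\alpha$: in the joint eigenbases it is the log of a sum $\sum_{ij}|\langle r_i|s_j\rangle|^2 r_i^\alpha s_j^{1-\alpha}$ of exponentials in $\alpha$, and such a log-sum-exp is convex. Also $\psi(1)=\log\text{Tr}(\rho P_{\sigma})\leqslant 0$, where $P_\sigma$ is the support projector of $\sigma$. The quantity to control is $\log g(\alpha)=\psi(\alpha)/\alpha$. Since $\psi(\alpha)=\varphi(\alpha)$ with $\varphi(\beta)=\log\text{Tr}(\sigma^{\beta}\rho^{1-\beta})$ evaluated at $\beta=1-\alpha$, a cleaner route is to take the perspective at $\alpha=0$. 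Here $\psi(0)=\log\text{Tr}(\sigma P_\rho)\leqslant 0$ in general, and the case needing care is when $\text{supp}\,\rho\not\subseteq\text{supp}\,\sigma$.

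The main obstacle is this step: showing that $\psi(\alpha)/\alpha$ is nondecreasing, i.e.\ that $\alpha\psi'(\alpha)\geqslant\psi(\alpha)$. Convexity gives $\psi(0)\geqslant\psi(\alpha)-\alpha\psi'(\alpha)$, so the inequality follows once $\psi(0)\leqslant 0$. That holds because $\text{Tr}(\sigma P_\rho)\leqslant 1$, interpreting $\rho^0$ as the support projector $P_\rho$ (continuity of $\psi$ at $0^+$). So the chain is: convexity of $\psi$, then $\psi(0)\leqslant 0$, then $\psi(\alpha)/\alpha$ nondecreasing, then $g$ nondecreasing, then maximize over $\sigma$, which gives $C_{\alpha_1,1}\geqslant C_{\alpha_2,1}$. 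Should the pointwise argument have a gap when $\sigma$ is not full rank, I would restrict the maximization to full-rank $\sigma\in\mathcal{I}_{\text{B}}$ and use continuity.
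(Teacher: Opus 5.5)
Your part (1) is the paper's argument: Lemma 3.1 with $r=z_2/z_1$ and $q=z_1$. The paper applies it only at the optimizer for $z_2$, while you apply it to every $\sigma$, which amounts to the same thing.

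Part (2) is correct but takes a genuinely different route.

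\textbf{The paper's route.} It uses the closed form $C_{\alpha,1}(\rho,\mathbf{P})=1-\sum_i\text{Tr}[(P_i\rho^{\alpha}P_i)^{1/\alpha}]$, imported from Xu, Shao and Fei. It then applies Lemma 3.1 blockwise with $A=P_i$, $B=\rho^{\alpha_1}$, $r=\alpha_2/\alpha_1$, $q=1/\alpha_2$ and $P_i^r=P_i$. This gives $\text{Tr}(P_i\rho^{\alpha_1}P_i)^{1/\alpha_1}\leqslant\text{Tr}(P_i\rho^{\alpha_2}P_i)^{1/\alpha_2}$.

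\textbf{Your route.} You stay with the variational form. At $z=1$ this is just the definition plus cyclicity of the trace, not the closed form of Example 2. You then prove a stronger pointwise statement: $[\text{Tr}(\rho^{\alpha}\sigma^{1-\alpha})]^{1/\alpha}$ is nondecreasing in $\alpha$ for every state $\sigma$. The steps are sound:
\begin{enumerate}
\item $\psi$ is a log-sum-exp of affine functions of $\alpha$, hence convex.
\item Its continuous extension satisfies $\psi(0)=\log\text{Tr}(P_\rho\sigma)\leqslant 0$, where $P_\rho$ is the support projector of $\rho$.
\item Therefore $\psi(\alpha)/\alpha=[\psi(\alpha)-\psi(0)]/\alpha+\psi(0)/\alpha$ is a sum of two nondecreasing functions.
\end{enumerate}
Restricting the double sum to $r_i,s_j>0$ already handles non-full-rank $\sigma$, so your continuity fallback is unnecessary. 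The one case you leave implicit, orthogonal supports, gives $g\equiv 0$ and is trivial.

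\textbf{Comparison.} The paper's route is shorter and reuses Lemma 3.1, but it depends on the external closed-form identity. Yours needs no such identity. Since the monotonicity holds before maximizing, it also works for a maximum over any set of free states, including settings where no closed form is known.

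Equivalently, you have shown that $\frac{1-\alpha}{\alpha}D_\alpha(\rho\|\sigma)$ is nonincreasing in $\alpha$ for the Petz--R\'enyi divergence. This settles the issue you flagged about the product of a nondecreasing and a decreasing factor.
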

	\begin{proof}
		\par\noindent\par\noindent
		(1) From Lemma 3.1, we know that when $z_1\leqslant z_2$, for $C_{\alpha,z_2}(\rho,\mathbf{P})$, there exists a free state $\gamma\in\mathcal{I}_{\text{B}}$ such that
		\begin{eqnarray*}
			C_{\alpha,z_2}(\rho,\mathbf{P})=1-\left[ \text{Tr}\left( \gamma^{\frac{1-\alpha}{2z_2}}\rho^{\frac{\alpha}{z_2}}\gamma^{\frac{1-\alpha}{2z_2}}\right)^{z_2}\right]^{\frac{1}{\alpha}}.
		\end{eqnarray*}
		Then we have:
		\begin{align*}
			\text{Tr}\left( \gamma^{\frac{1-\alpha}{2z_2}}\rho^{\frac{\alpha}{z_2}}\gamma^{\frac{1-\alpha}{2z_2}}\right)^{z_2}
			&=\text{Tr}\left( \gamma^{\frac{1-\alpha}{2z_2}}\rho^{\frac{\alpha}{z_2}}\gamma^{\frac{1-\alpha}{2z_2}}\right)^{\frac{z_2}{z_1}z_1}
			\\
			&\leqslant\text{Tr}\left( \gamma^{\frac{1-\alpha}{2z_1}}\rho^{\frac{\alpha}{z_1}}\gamma^{\frac{1-\alpha}{2z_1}}\right)^{z_1}
			\\
			&\leqslant\max_{\sigma\in\mathcal{I}_{\text{B}}}\text{Tr}\left( \sigma^{\frac{1-\alpha}{2z_1}}\rho^{\frac{\alpha}{z_1}}\sigma^{\frac{1-\alpha}{2z_1}}\right)^{z_1},
		\end{align*}
		which yields
		\begin{eqnarray*}
			C_{\alpha,z_1}(\rho,\mathbf{P})\leqslant C_{\alpha,z_2}(\rho,\mathbf{P}).
		\end{eqnarray*}
		\par\noindent
		(2) The simplified form via nonlinear programming has been given in Example 2:
		\begin{eqnarray*}
			C_{\alpha,1}(\rho,\mathbf{P})=1-\sum_i\text{Tr}[(P_i\rho^\alpha P_i)^{\frac{1}{\alpha}}].
		\end{eqnarray*}
		Then for each $i$, if $\alpha_1\leqslant\alpha_2$, we have:
		\begin{align*}
			\text{Tr}(P_i\rho^{\alpha_1}P_i)^{\frac{1}{\alpha_1}}&=\text{Tr}(P_i\rho^{\alpha_1}P_i)^{\frac{1}{\alpha_2}\frac{\alpha_2}{\alpha_1}}
			\\
			&\leqslant\text{Tr}(P_i^\frac{\alpha_2}{\alpha_1}\rho^{\alpha_2}P_i^\frac{\alpha_2}{\alpha_1})^{\frac{1}{\alpha_1}}
			\\
			&=\text{Tr}(P_i\rho^{\alpha_2}P_i)^{\frac{1}{\alpha_2}},
		\end{align*}
		Thus, $\sum_i\text{Tr}[(P_i\rho^\alpha_1 P_i)^{\frac{1}{\alpha_1}}]\leqslant\sum_i\text{Tr}[(P_i\rho^\alpha_2 P_i)^{\frac{1}{\alpha_2}}]$, which implies $C_{\alpha_1,z}(\rho,\mathbf{P})\geqslant C_{\alpha_2,z}(\rho,\mathbf{P})$.
	\end{proof}
	\par
	Beyond parameter dependence, we can explore its properties under broader operations. These properties help understand the measure's behavior and its relationship with other operations or systems. The following conclusions can be obtained:
	\begin{proposition}
		The block coherence measure $C_{\alpha,z}$ satisfies the following properties:
		\par\noindent
		(1) For a unitary matrix $U$ satisfying $U=\Delta(U)$, it holds that $C_{\alpha,z}(\rho,\mathbf{P})\geqslant C_{\alpha,z}(U\rho U^{\dagger},\mathbf{P})$;
		\par\noindent
		(2) For any density matrix $\tau\in H_\tau$ satisfying $\rho\notin H_\tau$, and a set of projectors $\mathbf{T}=\left\lbrace T_k\right\rbrace_k $ in $H_\tau$, it holds that $C_{\alpha,z}(\rho\otimes\tau,\mathbf{P}\otimes\mathbf{T})\geqslant C_{\alpha,z}(\rho,\mathbf{P})C_{\alpha,z}(\tau,\mathbf{T})$;
		\par\noindent
		(3) For any density matrix $\tau\in H_\tau$ satisfying $\rho\notin H_\tau$, and a set of projectors $\mathbf{T}=\left\lbrace T_k\right\rbrace_k $ in $H_\tau$, it holds that $C_{\alpha,z}(\rho\oplus\tau,\mathbf{P}\oplus\mathbf{T})\leqslant C_{\alpha,z}(\rho,\mathbf{P})+C_{\alpha,z}(\tau,\mathbf{T})$.
	\end{proposition}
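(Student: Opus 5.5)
We treat the three items separately, each by choosing a good free state in the relevant maximization. Throughout, write $f(\rho,\sigma)=\text{Tr}(\sigma^{\frac{1-\alpha}{2z}}\rho^{\frac{\alpha}{z}}\sigma^{\frac{1-\alpha}{2z}})^z$ and $F(\rho,\mathbf{P})=\max_{\sigma\in\mathcal{I}_{\text{B}}}f(\rho,\sigma)^{1/\alpha}$, so that $C_{\alpha,z}=1-F$ and $0\leqslant F\leqslant 1$ by the proof of Theorem 3.

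\textbf{Item (1).} A unitary with $U=\Delta(U)$ has the form $U=\sum_k P_kUP_k$, so it commutes with every $P_k$. Hence the single-Kraus channel $\rho\mapsto U\rho U^\dagger$ is block-incoherent: it has the form $K=\sum_jP_{f(j)}MP_j$ with $f=\mathrm{id}$ and $M=U$. Monotonicity (B2), proved in Theorem 3, then gives the inequality immediately. In fact equality holds, since $U^\dagger$ is also block-diagonal; alternatively one can use unitary invariance of $f$ together with $U\mathcal{I}_{\text{B}}U^\dagger=\mathcal{I}_{\text{B}}$.

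\textbf{Item (2).} Here I would use multiplicativity of $f$ on tensor products. Powers factor over $\otimes$, so $f(\rho\otimes\tau,\sigma\otimes\omega)=f(\rho,\sigma)f(\tau,\omega)$. If $\sigma\in\mathcal{I}_{\text{B}}(\mathbf{P})$ and $\omega\in\mathcal{I}_{\text{B}}(\mathbf{T})$, then $\sigma\otimes\omega$ is free for $\mathbf{P}\otimes\mathbf{T}=\{P_k\otimes T_l\}$. Restricting the maximum to product free states gives $F(\rho\otimes\tau)\geqslant F(\rho)F(\tau)$, which is only a lower bound on $F$, i.e. an upper bound on $C$. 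That inequality points the wrong way, so this step is the main obstacle.

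To get the stated direction I would instead use $1-ab\geqslant(1-a)(1-b)$, which is equivalent to $a+b-2ab\geqslant 0$ and holds for $a,b\in[0,1]$. This reduces the claim to $F(\rho\otimes\tau)\leqslant 1-(1-F(\rho))(1-F(\tau))=F(\rho)+F(\tau)-F(\rho)F(\tau)$. It would suffice to show the stronger bound $F(\rho\otimes\tau)\leqslant\min\{F(\rho),F(\tau)\}$, because $\min\{a,b\}\leqslant a+b-ab$ on $[0,1]$.

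That stronger bound follows from monotonicity. The partial trace over $H_\tau$ maps $\mathbf{P}\otimes\mathbf{T}$-free states to $\mathbf{P}$-free states: $\text{Tr}_\tau\sum_{k,l}(P_k\otimes T_l)X(P_k\otimes T_l)=\sum_kP_k(\text{Tr}_\tau X)P_k$. The data processing inequality for $D_{\alpha,z}$ then gives $f(\rho\otimes\tau,\gamma)\leqslant f(\rho,\text{Tr}_\tau\gamma)$, hence $F(\rho\otimes\tau)\leqslant F(\rho)$, and symmetrically $F(\rho\otimes\tau)\leqslant F(\tau)$. These give $C(\rho\otimes\tau)\geqslant\max\{C(\rho),C(\tau)\}\geqslant C(\rho)C(\tau)$, since each $C\leqslant1$.

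\textbf{Item (3).} A free state for $\mathbf{P}\oplus\mathbf{T}$ is block diagonal, so it splits as $q\sigma\oplus(1-q)\omega$. The state $\rho\oplus\tau$ has trace $2$, so I would interpret it as $\frac12\rho\oplus\frac12\tau$; under the literal reading the same Hölder computation applies with $p_i=1$. In either case the block-additivity computation (B5) in Theorem 3 applies verbatim and gives $F(\rho\oplus\tau)=\frac12F(\rho)+\frac12F(\tau)$, i.e. $C(\rho\oplus\tau)=\frac12(C(\rho)+C(\tau))$. This is at most $C(\rho)+C(\tau)$ by nonnegativity (B1), and in the unnormalized reading equality holds.

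The only delicate points are the direction in (2), which the monotonicity argument handles, and the normalization in (3).
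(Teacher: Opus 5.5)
Your proof is correct, and for item (2) it takes a different route from the paper's. That difference matters, because the paper's route does not work there. The paper proves (2) exactly as you first considered. It inserts the product free state $\gamma\otimes\xi$, with $\gamma,\xi$ optimal for $\rho,\tau$, and obtains $\max_\sigma f(\rho\otimes\tau,\sigma)\geqslant f(\rho,\gamma)f(\tau,\xi)$. It then writes $1-\max_\sigma f(\rho\otimes\tau,\sigma)\geqslant 1-f(\rho,\gamma)f(\tau,\xi)$ and applies $1-ab\geqslant(1-a)(1-b)$, also dropping the exponent $1/\alpha$. That step is reversed, as you noticed. A feasible candidate in the maximization only gives an upper bound on the measure, here $C_{\alpha,z}(\rho\otimes\tau,\mathbf{P}\otimes\mathbf{T})\leqslant C_{\alpha,z}(\rho,\mathbf{P})+C_{\alpha,z}(\tau,\mathbf{T})-C_{\alpha,z}(\rho,\mathbf{P})C_{\alpha,z}(\tau,\mathbf{T})$.

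Your argument uses data processing for $D_{\alpha,z}$ under $\text{Tr}_\tau$, which sends $\mathbf{P}\otimes\mathbf{T}$-free states to $\mathbf{P}$-free states (this uses $\sum_l T_l=I$). That is what actually establishes the claim. It also gives the stronger bound $C_{\alpha,z}(\rho\otimes\tau,\mathbf{P}\otimes\mathbf{T})\geqslant\max\{C_{\alpha,z}(\rho,\mathbf{P}),C_{\alpha,z}(\tau,\mathbf{T})\}$. It needs only the DPI in the range $z\geqslant\max\{\alpha,1-\alpha\}$, which Theorem 3 already uses for (B2). Once you have that bound, the detour through $1-ab\geqslant(1-a)(1-b)$ is unnecessary.

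For (1), the paper uses unitary invariance of $D_{\alpha,z}$ together with $U\gamma U^{\dagger}\in\mathcal{I}_{\text{B}}$, while you invoke (B2) directly. Both are immediate, and your remark that equality holds (apply the inequality again with $U^{\dagger}$) is correct.

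For (3), the paper again inserts the candidate $\gamma\oplus\xi$ and bounds the maximum from below, with loose handling of normalization and of the exponent $1/\alpha$. You instead reuse the exact H\"older computation from (B5). This gives the identity $C_{\alpha,z}(\tfrac12\rho\oplus\tfrac12\tau,\mathbf{P}\oplus\mathbf{T})=\tfrac12[C_{\alpha,z}(\rho,\mathbf{P})+C_{\alpha,z}(\tau,\mathbf{T})]$, which is sharper than what is claimed.

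One small slip in (3): under the literal trace-$2$ reading, the computation with $p_1=p_2=1$ gives $F(\rho\oplus\tau)=F(\rho)+F(\tau)$, not $\tfrac12F(\rho)+\tfrac12F(\tau)$. Hence $C_{\alpha,z}(\rho\oplus\tau,\mathbf{P}\oplus\mathbf{T})=C_{\alpha,z}(\rho,\mathbf{P})+C_{\alpha,z}(\tau,\mathbf{T})-1$, not equality. The stated inequality still holds, so your conclusion is unaffected.
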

	\begin{proof}
		\par\noindent\par\noindent
		(1) From literature \cite{audenaert2015alpha}, we know that $D_{\alpha,z}$ has unitary invariance: $D_{\alpha,z}(U\rho U^{\dagger}||U\sigma U^{\dagger})=D_{\alpha,z}(\rho||\sigma)$. This is equivalent to
		\begin{eqnarray*}
			\text{Tr}\left[(U\sigma U^{\dagger})^{\frac{1-\alpha}{2z}}(U\rho U^{\dagger})^{\frac{\alpha}{z}}(U\sigma U^{\dagger})^{\frac{1-\alpha}{2z}}\right] ^z=\text{Tr}(\sigma^{\frac{1-\alpha}{2z}}\rho^{\frac{\alpha}{z}}\sigma^{\frac{1-\alpha}{2z}})^z.
		\end{eqnarray*}
		For $C_{\alpha,z}(\rho,\mathbf{P})$, there exists $\gamma\in\mathcal{I}_{\text{B}}$ such that
		\begin{eqnarray*}
			C_{\alpha,z}(\rho,\mathbf{P})=1-\left\lbrace \text{Tr}(\gamma^{\frac{1-\alpha}{2z}}\rho^{\frac{\alpha}{z}}\gamma^{\frac{1-\alpha}{2z}})^z \right\rbrace^{\frac{1}{\alpha}},
		\end{eqnarray*}
		Noting that $U\gamma U^{\dagger}\in\mathcal{I}_{\text{B}}$, we have:
		\begin{align*}
			\text{Tr}(\gamma^{\frac{1-\alpha}{2z}}\rho^{\frac{\alpha}{z}}\gamma^{\frac{1-\alpha}{2z}})^z
			&=\text{Tr}\left[(U\gamma U^{\dagger})^{\frac{1-\alpha}{2z}}(U\rho U^{\dagger})^{\frac{\alpha}{z}}(U\gamma U^{\dagger})^{\frac{1-\alpha}{2z}}\right] ^z
			\\
			&\leqslant\max_{\sigma\in\mathcal{I}_{\text{B}}} \text{Tr}\left[ \sigma^{\frac{1-\alpha}{2z}}(U\rho U^{\dagger})^{\frac{\alpha}{z}}\sigma^{\frac{1-\alpha}{2z}}\right] ^z
		\end{align*}
		Consequently, $C_{\alpha,z}(\rho,\mathbf{P})\geqslant C_{\alpha,z}(U\rho U^{\dagger},\mathbf{P})$.
		\par\noindent
		(2) There exists $\gamma\in H_{\rho}$ such that $C_{\alpha,z}(\rho,\mathbf{P})=1-\left\lbrace \text{Tr}(\gamma^{\frac{1-\alpha}{2z}}\rho^{\frac{\alpha}{z}}\gamma^{\frac{1-\alpha}{2z}})^z \right\rbrace^{\frac{1}{\alpha}}$. There exists $\xi\in H_{\tau}$ such that $C_{\alpha,z}(\tau,\mathbf{T})=1-\left\lbrace \text{Tr}(\xi^{\frac{1-\alpha}{2z}}\tau^{\frac{\alpha}{z}}\xi^{\frac{1-\alpha}{2z}})^z \right\rbrace^{\frac{1}{\alpha}}$. Since
		\begin{align*}
			\gamma\otimes\xi&=\left(\sum_jP_j\gamma P_j \right)\otimes\left( \sum_kT_k\xi T_k\right)
			\\
			&=\sum_{j,k}(P_j\otimes T_k)(\gamma\otimes\xi)(P_j\otimes T_k),
		\end{align*}
		the set $\left\lbrace P_j\otimes T_k\right\rbrace_{j,k} $ can be considered as a set of projectors in the tensor system $H_\rho\otimes H_\tau$, and $\gamma\otimes\xi\in H_\rho\otimes H_\tau$ is a free state associated with these projectors. Thus, we have:
		\begin{align*}
			&\hspace{1.4em}\max_{\sigma\in\mathcal{I}^{H_\rho\otimes H_\tau}_{\text{B}}}\text{Tr}\left[\sigma^{\frac{1-\alpha}{2z}}(\rho\otimes\tau)^{\frac{\alpha}{z}}\sigma^{\frac{1-\alpha}{2z}} \right]^z
			\\
			&\geqslant\text{Tr}\left[(\gamma\otimes\xi)^{\frac{1-\alpha}{2z}}(\rho\otimes\tau)^{\frac{\alpha}{z}}(\gamma\otimes\xi)^{\frac{1-\alpha}{2z}} \right]^z
			\\
			&=\text{Tr}\left[(\gamma^{\frac{1-\alpha}{2z}}\otimes\xi^{\frac{1-\alpha}{2z}})(\rho^{\frac{\alpha}{z}}\otimes\tau^{\frac{\alpha}{z}})(\gamma^{\frac{1-\alpha}{2z}}\otimes\xi^{\frac{1-\alpha}{2z}}) \right]^z
			\\
			&=\text{Tr}\left[ (\gamma^{\frac{1-\alpha}{2z}}\rho^{\frac{\alpha}{z}}\gamma^{\frac{1-\alpha}{2z}})\otimes(\xi^{\frac{1-\alpha}{2z}}\tau^{\frac{\alpha}{z}}\xi^{\frac{1-\alpha}{2z}})\right]^z
			\\
			&=\text{Tr}(\gamma^{\frac{1-\alpha}{2z}}\rho^{\frac{\alpha}{z}}\gamma^{\frac{1-\alpha}{2z}})^z\text{Tr}(\xi^{\frac{1-\alpha}{2z}}\tau^{\frac{\alpha}{z}}\xi^{\frac{1-\alpha}{2z}})^z ,
		\end{align*}
		In the proof of non-negativity in Theorem 3.3, we have already obtained $\text{Tr}(\xi^{\frac{1-\alpha}{2z}}\tau^{\frac{\alpha}{z}}\xi^{\frac{1-\alpha}{2z}})^z\leqslant 1$. Since $\frac{1}{z}>1$, by Lemma 3.1 we know
		\begin{align*}
			\text{Tr}(\xi^{\frac{1-\alpha}{2z}}\tau^{\frac{\alpha}{z}}\xi^{\frac{1-\alpha}{2z}})^z&\geqslant \text{Tr}(\xi^{\frac{1-\alpha}{2}}\tau^{\alpha}\xi^{\frac{1-\alpha}{2}})^{\frac{1}{z}z}
			\\
			&=\text{Tr}(\xi^{\frac{1-\alpha}{2}}\tau^{\alpha}\xi^{\frac{1-\alpha}{2}})
			\\
			&=\text{Tr}(\tau^{\alpha}\xi^{1-\alpha}),
		\end{align*}
		Performing spectral decomposition on $\tau$ and $\xi$ yields $\tau=\sum_m\lambda_m\ket{\lambda_m}\bra{\lambda_m}$, $\xi=\sum_n\mu_n\ket{\mu_n}\bra{\mu_n}$, thus
		\begin{align*}
			\text{Tr}(\tau^{\alpha}\xi^{1-\alpha})&=\left(\sum_m\lambda_m^{\alpha}\ket{\lambda_m}\bra{\lambda_m} \right) \left(\sum_n\mu_n^{1-\alpha}\ket{\mu_n}\bra{\mu_n} \right)
			\\
			&=\sum_m\sum_n\lambda_m^{\alpha}\mu_n^{1-\alpha}\text{Tr}(\ket{\lambda_m}\bra{\lambda_m}\ket{\mu_n}\bra{\mu_n})
			\\
			&=\sum_m\sum_n\lambda_m^{\alpha}\mu_n^{1-\alpha}|\bra{\lambda_m}\ket{\mu_n}|^2
			\\
			&\geqslant0
		\end{align*}
		Therefore,
		\begin{align*}
			&\hspace{1.2em} C_{\alpha,z}(\rho\otimes\tau,\mathbf{P}\otimes\mathbf{T})
			\\
			&=1-\max_{\sigma\in\mathcal{I}^{H_\rho\otimes H_\tau}_{\text{B}}}\text{Tr}\left[\sigma^{\frac{1-\alpha}{2z}}(\rho\otimes\tau)^{\frac{\alpha}{z}}\sigma^{\frac{1-\alpha}{2z}} \right]^z
			\\
			&\geqslant1-\text{Tr}(\gamma^{\frac{1-\alpha}{2z}}\rho^{\frac{\alpha}{z}}\gamma^{\frac{1-\alpha}{2z}})^z\text{Tr}(\xi^{\frac{1-\alpha}{2z}}\tau^{\frac{\alpha}{z}}\xi^{\frac{1-\alpha}{2z}})^z
			\\
			&\geqslant[1-\text{Tr}(\gamma^{\frac{1-\alpha}{2z}}\rho^{\frac{\alpha}{z}}\gamma^{\frac{1-\alpha}{2z}})^z][1-\text{Tr}(\xi^{\frac{1-\alpha}{2z}}\tau^{\frac{\alpha}{z}}\xi^{\frac{1-\alpha}{2z}})^z]
			\\
			&=C_{\alpha,z}(\rho,\mathbf{P})C_{\alpha,z}(\tau,\mathbf{T})
		\end{align*}
		\par\noindent
		(3) There exists $\gamma\in H_{\rho}$ such that $C_{\alpha,z}(\rho,\mathbf{P})=1-\left\lbrace \text{Tr}(\gamma^{\frac{1-\alpha}{2z}}\rho^{\frac{\alpha}{z}}\gamma^{\frac{1-\alpha}{2z}})^z \right\rbrace^{\frac{1}{\alpha}}$. There exists $\xi\in H_{\tau}$ such that $C_{\alpha,z}(\tau,\mathbf{T})=1-\left\lbrace \text{Tr}(\xi^{\frac{1-\alpha}{2z}}\tau^{\frac{\alpha}{z}}\xi^{\frac{1-\alpha}{2z}})^z \right\rbrace^{\frac{1}{\alpha}}$. For notational convenience, we express $\mathbf{P}\oplus\mathbf{T}=\left\lbrace P_j\right\rbrace_j\cup\left\lbrace T_k\right\rbrace_k $ in another way, i.e., $\mathbf{P}\oplus\mathbf{T}:=\mathbf{M}=\left\lbrace M_l\right\rbrace_l $, such that
		\begin{eqnarray*}
			\gamma\oplus\xi=\left( \sum_jP_j\gamma P_j\right) \oplus\left( \sum_kT_k\xi T_k\right) =\sum_lM_l(\gamma\oplus\xi)M_l,
		\end{eqnarray*}
		Now, $\left\lbrace M_l\right\rbrace_l$ can be considered as a set of projectors in $\hat{H}$ generated by the basis vectors of $H_\rho$ and $H_\tau$, and $\gamma\oplus\xi\in \hat{H}$ is a free state associated with these projectors. Thus, we have:
		\begin{align*}
			&\hspace{1.4em}\max_{\sigma\in\mathcal{I}^{\hat{H}}_{\text{B}}}\text{Tr}\left[\sigma^{\frac{1-\alpha}{2z}}(\rho\oplus\tau)^{\frac{\alpha}{z}}\sigma^{\frac{1-\alpha}{2z}} \right]^z
			\\
			&\geqslant\text{Tr}\left[(\gamma\oplus\xi)^{\frac{1-\alpha}{2z}}(\rho\oplus\tau)^{\frac{\alpha}{z}}(\gamma\oplus\xi)^{\frac{1-\alpha}{2z}} \right]^z
			\\
			&=\text{Tr}\left[(\gamma^{\frac{1-\alpha}{2z}}\oplus\xi^{\frac{1-\alpha}{2z}})(\rho^{\frac{\alpha}{z}}\oplus\tau^{\frac{\alpha}{z}})(\gamma^{\frac{1-\alpha}{2z}}\oplus\xi^{\frac{1-\alpha}{2z}}) \right]^z
			\\
			&=\text{Tr}\left[ (\gamma^{\frac{1-\alpha}{2z}}\rho^{\frac{\alpha}{z}}\gamma^{\frac{1-\alpha}{2z}})\oplus(\xi^{\frac{1-\alpha}{2z}}\tau^{\frac{\alpha}{z}}\xi^{\frac{1-\alpha}{2z}})\right]^z
			\\
			&=\text{Tr}(\gamma^{\frac{1-\alpha}{2z}}\rho^{\frac{\alpha}{z}}\gamma^{\frac{1-\alpha}{2z}})^z+\text{Tr}(\xi^{\frac{1-\alpha}{2z}}\tau^{\frac{\alpha}{z}}\xi^{\frac{1-\alpha}{2z}})^z,
		\end{align*}
		Consequently, we have
		\begin{align*}
			C_{\alpha,z}(\rho\oplus\tau,\mathbf{P}\oplus\mathbf{T})
			&\leqslant1+C_{\alpha,z}(\rho\oplus\tau,\mathbf{P}\oplus\mathbf{T})
			\\
			&=2-\max_{\sigma\in\mathcal{I}^{\hat{H}}_{\text{B}}}\text{Tr}\left[\sigma^{\frac{1-\alpha}{2z}}(\rho\oplus\tau)^{\frac{\alpha}{z}}\sigma^{\frac{1-\alpha}{2z}} \right]^z
			\\
			&\leqslant1-\text{Tr}(\gamma^{\frac{1-\alpha}{2z}}\rho^{\frac{\alpha}{z}}\gamma^{\frac{1-\alpha}{2z}})^z+1-\text{Tr}(\xi^{\frac{1-\alpha}{2z}}\tau^{\frac{\alpha}{z}}\xi^{\frac{1-\alpha}{2z}})^z
			\\
			&=
			C_{\alpha,z}(\rho,\mathbf{P})+C_{\alpha,z}(\tau,\mathbf{T}).
		\end{align*}
	\end{proof}
	\par
	As the final part of this section, we note that the preceding discussion on $C_{\alpha,z}(\rho,\mathbf{P})$ is conducted within the general framework of block coherence resource theory. An important research direction is applying the general theory to concrete quantum biological scenarios. To measure S-T coherence, take the projector set as $\mathbf{Q}=\left\lbrace Q_S,Q_T\right\rbrace$. Then, a block coherence measure $C(\rho,\mathbf{P})$ induces a corresponding S-T coherence measure $C(\rho,\mathbf{Q})$. For $C_{\alpha,z}$, the following conclusion holds:
	\begin{proposition}
		The function $C_{\alpha,z}$ defined via the $\alpha$-$z$ Rényi relative entropy is a valid S-T coherence measure, defined as:
		\begin{eqnarray*}
			C_{\alpha,z}(\rho,\mathbf{Q})=1-\max_{\sigma\in\mathcal{I}_{\text{ST}}}\left\lbrace \text{Tr}(\sigma^{\frac{1-\alpha}{2z}}\rho^{\frac{\alpha}{z}}\sigma^{\frac{1-\alpha}{2z}})^z \right\rbrace^{\frac{1}{\alpha}},
		\end{eqnarray*}
		where $\gamma\in\mathcal{I}_{\text{ST}}$ is the free state in S-T coherence resource theory, i.e., $\gamma=\Delta(\gamma)=Q_S\gamma Q_S+Q_T\gamma Q_T$, with $\alpha\in(0,1)$ and $\text{max}\left\lbrace \alpha,1-\alpha\right\rbrace \leqslant z$.
	\end{proposition}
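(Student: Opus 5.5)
The plan is to obtain this proposition as a direct specialization of Theorem 3.3 rather than re-deriving (B1)--(B5) from scratch. The first step is to check that $\mathbf{Q}=\{Q_S,Q_T\}$ is an admissible projector set in the sense of Section 2.1. Concretely, $Q_S=\ket{S}\bra{S}$ and $Q_T=\sum_{m=-1}^{1}\ket{T_m}\bra{T_m}$ must be orthogonal projectors with $Q_SQ_T=0$ and $Q_S+Q_T=I_4$. This follows because $\{\ket{S},\ket{T_{-1}},\ket{T_0},\ket{T_1}\}$ is an orthonormal basis of the four-dimensional spin space. Equivalently, one can use the expressions $Q_S=\frac14 I-s_Ds_A$ and $Q_T=\frac34 I+s_Ds_A$: they sum to $I$, and one verifies $Q_S^2=Q_S$ from the explicit matrix of $s_Ds_A$, whose eigenvalues are $\frac14$ (three-fold) and $-\frac34$.

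Next, I will identify the free sets. From the definition in Section 2.1, $\sigma\in\mathcal{I}_{\text{B}}$ with $\mathbf{P}=\mathbf{Q}$ means $\sigma=Q_S\sigma Q_S+Q_T\sigma Q_T=\sigma_{SS}+\sigma_{TT}$, which is exactly $\mathcal{I}_{\text{ST}}$. Likewise, $\mathcal{IO}_{\text{B}}$ for $\mathbf{Q}$ coincides with the S-T free operations. Hence the expression in the statement is literally $C_{\alpha,z}(\rho,\mathbf{P})$ of Theorem 3.3 evaluated at $\mathbf{P}=\mathbf{Q}$, and the constraints $\alpha\in(0,1)$ and $\max\{\alpha,1-\alpha\}\leqslant z$ are the same.

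Theorem 3.3 was proved for an arbitrary complete set of orthogonal projectors via (B1), (B2) and (B5). Therefore (B1)--(B4) follow for $\mathbf{Q}$ by the equivalence of the alternative framework established in \cite{xu2020general}.

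The only point needing care is (B5). With just two projectors, the only nontrivial partition is $\{Q_S\}\cup\{Q_T\}$, so block additivity reduces to states $p_1\rho_1\oplus p_2\rho_2$ with $\rho_1$ supported on the singlet line and $\rho_2$ on the triplet subspace. The Hölder argument of Theorem 3.3 applies verbatim. I would remark on this but not redo it, since Theorem 3.3 already covers every partition. I expect no real obstacle; the main item to state explicitly is the verification that $\mathbf{Q}$ is a resolution of the identity, since everything else is inherited from Theorem 3.3.
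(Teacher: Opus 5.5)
Your proposal is correct and matches the paper: the paper gives no separate argument and obtains this proposition directly as the specialization $\mathbf{P}=\mathbf{Q}$ of Theorem 3.3, using the identification of $\mathcal{I}_{\text{ST}}$ with $\mathcal{I}_{\text{B}}$ for $\mathbf{Q}$ made in Section 2.1. Your explicit check that $Q_S,Q_T$ are orthogonal projectors summing to $I$ (for example, via the spectrum $\{\frac14,\frac14,\frac14,-\frac34\}$ of $s_Ds_A$) simply spells out a step the paper leaves implicit.
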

	
	\subsection{Block coherence measures induced by Tsallis relative operator entropy}
	The Tsallis relative operator entropy, as a generalization of relative entropy, has garnered significant attention in academia recently. For density matrices $A$ and $B$, the Tsallis relative operator entropy is defined in literature \cite{bikchentaev2024trace} as:
	\begin{align*}
		NT_{1-\beta}(A||B)=\frac{1}{1-\beta}[1-\text{Tr}(A\sharp_{1-\beta}B)],
	\end{align*}
	where $A\sharp_{1-\beta}B=A^{\frac{1}{2}}(A^{-\frac{1}{2}}BA^{-\frac{1}{2}})^{1-\beta}A^{\frac{1}{2}}$.
	\par
	To construct a family of block coherence measures based on the Tsallis relative operator entropy, we present the following lemma \cite{bikchentaev2024trace}:
	\begin{lemma}
		For any $\beta\in(0,1)$, the following holds:
		\begin{align*}
			\text{Tr}(A\sharp_{1-\beta}B)\leqslant\text{Tr}(A^{\beta}B^{1-\beta}).
		\end{align*}
	\end{lemma}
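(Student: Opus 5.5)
Write $t=1-\beta\in(0,1)$, so that the claim reads $\text{Tr}(A\sharp_tB)\leqslant\text{Tr}(A^{1-t}B^t)$. The plan is to insert the intermediate quantity $\text{Tr}\exp[(1-t)\log A+t\log B]$ and prove
\begin{align*}
\text{Tr}(A\sharp_tB)\leqslant\text{Tr}\,e^{(1-t)\log A+t\log B}\leqslant\text{Tr}(A^{1-t}B^t).
\end{align*}
This is a Golden--Thompson-type chain in the spirit of Hiai--Petz. I first assume $A,B>0$ so that the geometric mean and the logarithms are well defined. The general positive semidefinite case is then recovered by replacing $A,B$ by $A+\varepsilon I$, $B+\varepsilon I$ and letting $\varepsilon\to0$. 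Both sides are continuous in this limit, at least on the support where $A\sharp_tB$ is defined by the formula in the text.

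For the first inequality I would use the Ando--Hiai log-majorization. For $r\geqslant1$ it states $A^r\sharp_tB^r\prec_{\log}(A\sharp_tB)^r$. Substituting $A\mapsto A^{s}$, $B\mapsto B^{s}$ with $s=1/r\in(0,1]$ and taking $r$-th roots gives
\begin{align*}
A\sharp_tB\prec_{\log}\left(A^{s}\sharp_tB^{s}\right)^{1/s}.
\end{align*}
Log-majorization implies weak majorization of eigenvalues, hence $\text{Tr}(A\sharp_tB)\leqslant\text{Tr}(A^{s}\sharp_tB^{s})^{1/s}$ for every $s\in(0,1]$. Next I would let $s\to0$. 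Expanding $A^{s}=I+s\log A+O(s^2)$ and $B^{s}=I+s\log B+O(s^2)$ gives $A^{s}\sharp_tB^{s}=I+s[(1-t)\log A+t\log B]+O(s^2)$. The standard Lie--Trotter-type limit then yields $(A^{s}\sharp_tB^{s})^{1/s}\to e^{(1-t)\log A+t\log B}$, which proves the first inequality.

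For the second inequality I would use the Araki--Lieb--Thirring monotonicity, which is the content underlying Lemma 3.1. The map $p\mapsto\text{Tr}(A^{(1-t)p/2}B^{tp}A^{(1-t)p/2})^{1/p}$ is nondecreasing in $p>0$. Its limit as $p\to0^+$ is $\text{Tr}\,e^{(1-t)\log A+t\log B}$; this is again a Lie--Trotter/Golden--Thompson limit. Comparing the limit with the value at $p=1$ gives
\begin{align*}
\text{Tr}\,e^{(1-t)\log A+t\log B}\leqslant\text{Tr}(A^{\frac{1-t}{2}}B^tA^{\frac{1-t}{2}})=\text{Tr}(A^{1-t}B^t)=\text{Tr}(A^\beta B^{1-\beta}).
\end{align*}

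The main obstacle is the first step: quoting the Ando--Hiai inequality with the correct direction of the exponent, and justifying the limit $s\to0$ of $(A^s\sharp_tB^s)^{1/s}$. If that limit proves delicate, I would bypass it: it suffices to use $s=1$ in a different way, namely to apply Ando--Hiai directly with $r=1/s$ and then compare $\text{Tr}(A^s\sharp_tB^s)^{1/s}$ with $\text{Tr}(A^{1-t}B^t)$ through the monotone Lieb--Thirring family at a common small $s$. Alternatively, since this lemma is exactly the trace inequality established in \cite{bikchentaev2024trace}, the chain above can be presented as the argument behind that citation.
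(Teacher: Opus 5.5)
The paper gives no proof of this lemma; it only cites \cite{bikchentaev2024trace}. Your argument is a correct, self-contained proof for positive definite $A,B$, and it is the standard route behind such results. With $t=1-\beta$, the Ando--Hiai log-majorization, in the direction you state, gives $A\sharp_tB\prec_{\log}(A^s\sharp_tB^s)^{1/s}$ for $0<s\leqslant1$ directly from the substitution; no root-taking is needed. Your expansion $A^s\sharp_tB^s=I+s[(1-t)\log A+t\log B]+O(s^2)$ makes the $s\to0$ limit rigorous. The resulting bound $\text{Tr}(A\sharp_tB)\leqslant\text{Tr}\,e^{(1-t)\log A+t\log B}$ is the Hiai--Petz complemented Golden--Thompson inequality. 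Your second step is just Golden--Thompson, $\text{Tr}\,e^{H+K}\leqslant\text{Tr}\,e^{H}e^{K}$ with $H=(1-t)\log A$ and $K=t\log B$. Your Araki--Lieb--Thirring limit is a standard proof of it, so you could cite it directly. Compared with the bare citation, your route makes the hypotheses explicit and yields the sharper intermediate bound through the exponential. The fallback in your last paragraph is too vague to stand on its own, but nothing depends on it.

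What you should tighten is the passage to singular $A$. This matters here because the paper applies the lemma with $A=\rho$, which may be a pure state. Your $\varepsilon$-regularization proves the inequality for the Kubo--Ando convention $A\sharp_tB:=\lim_{\varepsilon\downarrow0}(A+\varepsilon I)\sharp_t(B+\varepsilon I)$. That is a monotone decreasing limit, so the traces converge, and the right-hand side is continuous because $t,1-t>0$. It does not recover ``the formula in the text on the support of $A$'' read with a generalized inverse, and under that reading the lemma is false. For $A=\ket{0}\bra{0}$, that formula gives $\text{Tr}(A\sharp_tB)=(\bra{0}B\ket{0})^{t}$. Meanwhile $\text{Tr}(A^{1-t}B^{t})=\bra{0}B^{t}\ket{0}\leqslant(\bra{0}B\ket{0})^{t}$ by Jensen, with strict inequality unless $\ket{0}$ is an eigenvector of $B$. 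For instance, $B=\ket{+}\bra{+}$ gives $2^{-t}$ on the left and $\tfrac12$ on the right. For invertible $B$, the Kubo--Ando value is instead $(\bra{0}B^{-1}\ket{0})^{-t}$, which does obey the bound by the power-mean inequality. So state the Kubo--Ando convention explicitly rather than hedging with ``at least on the support''.
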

	\par
	Utilizing the above lemma, we can now define a family of block coherence measures based on the Tsallis relative operator entropy and verify that it satisfies the required conditions for a measure. We have the following theorem:
	\begin{theorem}
		The function $C^N_{\beta}$ defined via the Tsallis relative operator entropy is a valid block coherence measure, defined as:
		\begin{eqnarray*}
			C^N_{\beta}(\rho,\mathbf{P})=\frac{1}{1-\beta}\left\lbrace 1-\max_{\sigma\in\mathcal{I}_{\text{B}}}\big[ \text{Tr}(\rho\sharp_{1-\beta}\sigma)\big]^{\frac{1}{\beta}}\right\rbrace ,
		\end{eqnarray*}
		where $\beta\in(0,1)$.
	\end{theorem}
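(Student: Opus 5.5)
I will follow the alternative framework, as in the proof of Theorem 3.3: I verify (B1), (B2) and (B5) for $C^N_{\beta}$, which by the result of \cite{xu2020general} is equivalent to (B1)--(B4). Throughout I write $f(\rho,\sigma)=\text{Tr}(\rho\sharp_{1-\beta}\sigma)$. The key structural facts I will use about the weighted geometric mean are:
\begin{itemize}
\item joint concavity and monotonicity of $\sharp_{1-\beta}$ (Kubo--Ando theory);
\item its behaviour under positive maps, namely the Ando inequality $\Phi(A\sharp_t B)\leqslant \Phi(A)\sharp_t\Phi(B)$;
\item the homogeneity $(a A)\sharp_{t}(bB)=a^{1-t}b^{t}(A\sharp_t B)$;
\item additivity over direct sums.
\end{itemize}

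\textbf{(B1).} By Lemma 3.2 and H\"older's inequality for traces, $f(\rho,\sigma)\leqslant \text{Tr}(\rho^{\beta}\sigma^{1-\beta})\leqslant (\text{Tr}\rho)^{\beta}(\text{Tr}\sigma)^{1-\beta}=1$. Hence the maximum of $f^{1/\beta}$ is at most $1$, and since $1-\beta>0$ this gives $C^N_\beta\geqslant 0$. If $\rho\in\mathcal{I}_{\text{B}}$, then taking $\sigma=\rho$ gives $f=\text{Tr}\rho=1$, so $C^N_\beta=0$.

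Conversely, suppose $C^N_\beta=0$. The maximum is attained by compactness, since $f$ is continuous on the positive cone after a standard regularization. So $f(\rho,\gamma)=1$ for some $\gamma\in\mathcal{I}_{\text{B}}$, which forces equality in the chain above. The equality case of H\"older requires $\rho^{\beta}\propto\gamma^{\beta}$ with matching eigenvectors, hence $\rho=\gamma$. I expect this equality analysis to need some care: Lemma 3.2 is only an inequality, but it suffices that H\"older is tight. Alternatively, I can invoke that $NT_{1-\beta}(\rho\|\gamma)=0$ iff $\rho=\gamma$, the analogue of the Klein-type statement for Tsallis relative operator entropy in \cite{bikchentaev2024trace}.

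\textbf{(B2).} For a block-incoherent channel $\Phi$, the Ando-type inequality $\Phi(\rho\sharp_{1-\beta}\sigma)\leqslant\Phi(\rho)\sharp_{1-\beta}\Phi(\sigma)$ holds for unital-free positive maps written via Kraus operators. Equivalently, the data processing inequality for $NT_{1-\beta}$ holds; this is known since $\text{Tr}(A\sharp_t B)$ is a maximal $f$-divergence quantity. Taking traces with trace preservation gives $f(\Phi(\rho),\Phi(\sigma))\geqslant f(\rho,\sigma)$. Choosing the optimal $\gamma$ for $\rho$ and noting $\Phi(\gamma)\in\mathcal{I}_{\text{B}}$ then yields $C^N_\beta(\Phi(\rho))\leqslant C^N_\beta(\rho)$, exactly as in the proof of Theorem 3.3.

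\textbf{(B5).} Let $\rho=p_1\rho_1\oplus p_2\rho_2$. This step is the main obstacle, because $\sigma$ need not be supported on the block structure of $\rho$. I first argue that the maximization can be restricted to $\sigma=q_1\sigma_1\oplus q_2\sigma_2$. To do so, apply the pinching $\mathcal{E}(X)=\Pi_1X\Pi_1+\Pi_2X\Pi_2$, with $\Pi_i=\sum_{k_i}P_{k_i}$. This pinching is itself a block-incoherent operation fixing $\rho$ and mapping $\mathcal{I}_{\text{B}}$ into itself, and by the Ando inequality it does not decrease $f$.

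With this restriction, homogeneity and block additivity give
\begin{equation*}
f(\rho,\sigma)=p_1^{\beta}q_1^{1-\beta}f(\rho_1,\sigma_1)+p_2^{\beta}q_2^{1-\beta}f(\rho_2,\sigma_2).
\end{equation*}
Set $t_i=\max_{\sigma_i}f(\rho_i,\sigma_i)$. I then optimize over $q_1,q_2$ by H\"older with exponents $1/\beta$ and $1/(1-\beta)$, as in Theorem 3.3. This gives
\begin{equation*}
\max_\sigma f^{1/\beta}=\sum_i p_i t_i^{1/\beta},
\end{equation*}
with equality attained at $q_i\propto p_it_i^{1/\beta}$. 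Substituting this into the definition yields
\begin{equation*}
C^N_\beta(\rho)=p_1C^N_\beta(\rho_1)+p_2C^N_\beta(\rho_2),
\end{equation*}
where the prefactor $\tfrac{1}{1-\beta}$ passes through linearly because $p_1+p_2=1$.
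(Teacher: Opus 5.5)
Your proposal is correct and takes essentially the same route as the paper: the alternative framework, with (B1) from Lemma 3.2 plus an upper bound of $1$, (B2) from data processing for $\text{Tr}(\rho\sharp_{1-\beta}\sigma)$, and (B5) from the homogeneity of $\sharp_{1-\beta}$ followed by the same H\"older optimization over $q_1,q_2$. Your direct trace-H\"older bound and its equality case are cleaner than the paper's detour through Lemma 3.1 and Abe's result, and the pinching reduction in (B5) is harmless but unnecessary, since every $\sigma\in\mathcal{I}_{\text{B}}$ already satisfies $\sigma=\Pi_1\sigma\Pi_1+\Pi_2\sigma\Pi_2$.
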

	\begin{proof}
		\par\noindent\par
		We complete this proof via the alternative framework.
		\par\noindent
		(B1) Non-negativity: Using Lemma 3.1, we can obtain:
		\begin{align*}
			\text{Tr}(\sigma^{\frac{1-\alpha}{2z}}\rho^{\frac{\alpha}{z}}\sigma^{\frac{1-\alpha}{2z}})^z&\geqslant \text{Tr}(\sigma^{\frac{1-\alpha}{2}}\rho^{\alpha}\sigma^{\frac{1-\alpha}{2}})^{\frac{1}{z}z}
			\\
			&=\text{Tr}(\sigma^{\frac{1-\alpha}{2}}\rho^{\alpha}\sigma^{\frac{1-\alpha}{2}})
			\\
			&=\text{Tr}(\rho^{\alpha}\sigma^{1-\alpha}).
		\end{align*}
		The proof of Theorem 3.3 has already established $\text{Tr}(\sigma^{\frac{1-\alpha}{2z}}\rho^{\frac{\alpha}{z}}\sigma^{\frac{1-\alpha}{2z}})^z\leqslant 1$. Combining this with Lemma 3.2 yields:
		\begin{align*}
			\text{Tr}(\rho\sharp_{1-\beta}\sigma)\leqslant\text{Tr}(\rho^{\beta}\sigma^{1-\beta})\leqslant\text{Tr}(\sigma^{\frac{1-\beta}{2z}}\rho^{\frac{\beta}{z}}\sigma^{\frac{1-\beta}{2z}})^z\leqslant 1.
		\end{align*}
		Thus, $\max_{\sigma\in\mathcal{I}_{\text{B}}}\left\lbrace \text{Tr}(\rho\sharp_{1-\beta}\sigma) \right\rbrace^{\frac{1}{\beta}}\leqslant 1$, which implies $C^N_{\beta}(\rho,\mathbf{P})\geqslant 0$.
		\par
		When $\rho\in\mathcal{I}_{\text{B}}$, we have:
		\begin{eqnarray*}
			1=\text{Tr}(\rho)=\text{Tr}(\rho\sharp_{1-\beta}\rho)\leqslant\max_{\sigma\in\mathcal{I}_{\text{B}}}\text{Tr}(\rho\sharp_{1-\beta}\sigma),
		\end{eqnarray*}
		hence $\max_{\sigma\in\mathcal{I}_{\text{B}}}\left\lbrace \text{Tr}(\rho\sharp_{1-\beta}\sigma) \right\rbrace^{\frac{1}{\beta}}=1$, and consequently $C^N_{\beta}(\rho,\mathbf{P})=0$.
		\par
		When $C^N_{\beta}(\rho,\mathbf{P})=0$, there exists $\gamma\in\mathcal{I}_{\text{B}}$ such that $\text{Tr}(\rho\sharp_{1-\beta}\gamma)=1$, leading to $\text{Tr}(\rho^{\beta}\gamma^{1-\beta})=1$. According to the non-negativity of the Tsallis relative entropy \cite{abe2003monotonic}, it follows that $\rho=\gamma\in\mathcal{I}_{\text{B}}$. This completes the proof of non-negativity.
		\par\noindent
		(B2) Monotonicity: Literature \cite{bikchentaev2024trace} provides the monotonicity of $NT_{1-\beta}(A||B)$, i.e., $NT_{1-\beta}[\Phi(\rho)||\Phi(\sigma)]\leqslant NT_{1-\beta}(\rho||\sigma)$. This is equivalent to $\text{Tr}[\Phi(\rho)\sharp_{1-\beta}\Phi(\sigma)]\geqslant\text{Tr}(\rho\sharp_{1-\beta}\sigma)$. Therefore, there exists $\gamma\in\mathcal{I}_{\text{B}}$ such that:
		\begin{align*}
			C^N_{\beta}(\rho,\mathbf{P})
			&=
			1-\left\lbrace \text{Tr}(\text{Tr}(\rho\sharp_{1-\beta}\gamma) \right\rbrace^{\frac{1}{\beta}}
			\\
			&\geqslant
			1-\left\lbrace \text{Tr}[\Phi(\rho)\sharp_{1-\beta}\Phi(\gamma)] \right\rbrace^{\frac{1}{\beta}}
			\\
			&\geqslant
			1-\max_{\sigma\in\mathcal{I}_{\text{B}}}\left\lbrace \text{Tr}[\Phi(\rho)\sharp_{1-\beta}\sigma] \right\rbrace^{\frac{1}{\alpha}}
			\\
			&=C^N_{\beta}(\Phi(\rho),\mathbf{P}).
		\end{align*}
		This proves monotonicity.
		\par\noindent
		(B5) Block Additivity: In the requirement of block additivity, let $\rho=p_1\rho_1\oplus p_2\rho_2$. For any $\sigma\in\mathcal{I}_{\text{B}}$, there exists a decomposition such that $\sigma=q_1\sigma_1\oplus q_2\sigma_2$, where the coefficients $q_1>0$ and $q_2>0$ are undetermined constants satisfying $q_1+q_2=1$, and $\dim\sigma_j=\dim\rho_j$. Therefore, we know:
		\begin{align*}
			&\hspace{1.1em}\max_{\sigma}\text{Tr}(\rho\sharp_{1-\beta}\sigma)
			\\
			&=
			\max_{q_1, q_2}\left\lbrace q_1^{1-\beta}p_1^{\beta}\max_{\sigma_1}\text{Tr}(\rho_1\sharp_{1-\beta}\sigma_1)+q_2^{1-\beta}p_2^{\beta}\max_{\sigma_2}\text{Tr}(\rho_2\sharp_{1-\beta}\sigma_2)\right\rbrace
			\\
			&=
			\max_{q_1, q_2}(q_1^{1-\beta}p_1^{\beta}t_1+q_2^{1-\beta}p_2^{\beta}t_2),
		\end{align*}
		where $t_i=\max_{\sigma_i}\text{Tr}(\rho_i\sharp_{1-\beta}\sigma_i)$.
		\par
		Next, using the H{\"o}lder inequality, we have:
		\begin{align*}
			q_1^{1-\beta}p_1^{\beta}t_1+q_2^{1-\beta}p_2^{\beta}t_2&\leqslant \left[ \sum_{i}(q_i^{1-\beta})^{\frac{1}{1-\beta}}\right] ^{1-\beta}\left[ \sum_{i}(p_i^{\beta}t_i)^{\frac{1}{\beta}}\right] ^{\beta}
			\\
			&=\left(\sum_{i}q_i\right) ^{1-\beta}\left[ \sum_{i}(p_i^{\beta}t_i)^{\frac{1}{\beta}}\right] ^{\beta}
			\\
			&=\left[ \sum_{i}(p_i^{\beta}t_i)^{\frac{1}{\beta}}\right] ^{\beta},
		\end{align*}
		The inequality becomes an equality if and only if $q_i=cp_it_i^{\frac{1}{\beta}}$, where $c=(p_1t_1^{\frac{1}{\beta}}+p_2t_2^{\frac{1}{\beta}})^{-1}$. Thus, the originally undetermined $q_1$ and $q_2$ can be determined by $p_j$ and $t_j$. We have:
		\begin{eqnarray*}
			\max_{\sigma}\left\lbrace \text{Tr}(\rho\sharp_{1-\beta}\sigma) \right\rbrace^{\frac{1}{\beta}}=\sum_{i}(p_i^{\beta}t_i)^{\frac{1}{\beta}}=\sum_{i}p_i\max_{\sigma_i}\left\lbrace \text{Tr}(\rho_i\sharp_{1-\beta}\sigma_i) \right\rbrace^{\frac{1}{\beta}}.
		\end{eqnarray*}
		This implies $C^N_{\beta}(\rho,\mathbf{P})=p_1C^N_{\beta}(\rho_1,\mathbf{P})+p_2C^N_{\beta}(\rho_2,\mathbf{P})$, i.e., additivity holds.
		\par
		So far, we have verified conditions (B1), (B2), and (B5) to confirm that $C^N_{\beta}(\rho,\mathbf{P})$ is a valid block coherence measure.
	\end{proof}
	\section{Relations among Different Block Coherence Measures}
	\subsection{Review of Existing Measures}
	To compare the ordering and numerical relationships among different block coherence measures, we first conduct a systematic review and summary of the measures that have been proposed in the literature within the framework of block coherence resource theory and its important specific instance, S-T coherence resource theory.
	\par
	In the general framework of block coherence resource theory, the following measures have been extensively studied:
	\par
	Literature \cite{bischof2021quantifying} gives the definition of the robustness of block coherence measure as:
	\begin{align}\label{eq3}
		C_{rob}(\rho,\mathbf{P})&=\min_{\tau\in\mathcal{D}(H)}\left\lbrace s\geqslant 0|\frac{\rho+s\tau}{1+s}\in\mathcal{I}_{\text{B}} \right\rbrace \nonumber
		\\
		&=\min_{\delta\in\mathcal{I}_{\text{B}}}\left\lbrace s\geqslant 0|\rho\leqslant(1+s)\Delta(\sigma)\right\rbrace
		\\
		&=\max_{X\geqslant 0}\left\lbrace \text{Tr}(X\rho)-1|\Delta(X)=I\right\rbrace.
	\end{align}
	\par
	Literature \cite{xu2020general} provides the block coherence measure based on the trace norm as:
	\begin{eqnarray*}
		C_{\text{Tr}}(\rho,\mathbf{P})=\min_{\lambda>0,\sigma\in\mathcal{I}_{\text{B}}}\norm{\rho-\lambda\sigma}_\text{Tr},
	\end{eqnarray*}
	where the trace norm is defined as $\norm{M}_{\text{Tr}}=\text{Tr}\sqrt{M^{\dagger}M}$.
	\par
	Meanwhile, literature \cite{xu2020general} also provides the block coherence measure based on the Tsallis relative entropy as:
	\begin{eqnarray*}
		C^T_{\alpha}(\rho,\mathbf{P})=\frac{1}{1-\alpha}\left\lbrace 1-\max_{\sigma\in\mathcal{I}_{\text{B}}}\big[\text{Tr}(\rho^{\alpha}\sigma^{1-\alpha})\big]^{\frac{1}{\alpha}}\right\rbrace =\frac{1}{1-\alpha}\left\lbrace 1-\sum_i\text{Tr}\big[(P_i\rho^{\alpha}P_i)^{\frac{1}{\alpha}}\big]\right\rbrace.
	\end{eqnarray*}
	\par
	Literature \cite{lei2021povm} gives the geometric block coherence measure, defined as:
	\begin{eqnarray*}
		C_{geo}(\rho,\mathbf{P})=\min_{\sigma\in\mathcal{D}(H)}D_{geo}(\rho,\Delta(\sigma))
	\end{eqnarray*}
	where $D_{geo}(\rho,\sigma)=1-F^2(\rho,\sigma)$, and $F(\rho,\sigma)=\text{Tr}\sqrt{\sqrt{\rho}\sigma\sqrt{\rho}}$ is the fidelity.
	\par
	Literature \cite{fu2022block} gives the max-relative entropy block coherence measure, defined as:
	\begin{eqnarray*}
		C_{max}(\rho,\mathbf{P})=\min_{\sigma\in\mathcal{I}_{\text{B}}}D_{max}(\rho||\sigma)=\min_{\sigma\in\mathcal{I}_{\text{B}}}\log\min_{\lambda}\left\lbrace \lambda\geqslant 1|\rho\leqslant\lambda\sigma\right\rbrace
	\end{eqnarray*}
	\par
	S-T coherence resource theory is a typical and important specific instance of block coherence resource theory, with the projector set $\mathbf{Q}=\left\lbrace Q_S,Q_T \right\rbrace $. Therefore, any block coherence measure can be specialized to yield a corresponding S-T coherence measure by setting $\mathbf{P}=\mathbf{Q}$.
	\par
	In 2014, Kominis proposed \cite{kritsotakis2014retrodictive} an S-T coherence function $C_{l_1}$ based on the $l_1$ norm, but later work \cite{kominis2020quantum} pointed out a deficiency of $C_{l_1}$: it yields higher coherence for states possessing a larger triplet component. Kominis also provided a more appropriate S-T coherence measure $C_r$ based on relative entropy, defined as:
	\begin{eqnarray*}
		C_r(\rho,\mathbf{Q})=\min_{\sigma\in\mathcal{I}_{\text{ST}}}S(\rho||\sigma)=S(\rho_{SS}+\rho_{TT})-S(\rho),
	\end{eqnarray*}
	where $S(A||B)=\text{Tr}[A\log(A)]-\text{Tr}[A\log(B)]$ is the quantum relative entropy, and $S(A)=-\text{Tr}[A\log(A)]$ is the von Neumann entropy.
	\par
	Considering the importance of an S-T coherence measure based on the $l_1$ norm, we can derive a more suitable improved version with the help of literature \cite{xu2020general}:
	\begin{eqnarray*}
		\widetilde{C}_{l_1}(\rho,\mathbf{Q})=\norm{\rho_{ST}}_{\text{Tr}}+\norm{\rho_{TS}}_{\text{Tr}}=2\norm{\rho_{ST}}_{\text{Tr}}.
	\end{eqnarray*}
	\par
	Literature \cite{kominis2025physiological} gives the S-T coherence measure based on the Wigner-Yanase skew information, defined as:
	\begin{eqnarray*}
		C_{WY}(\rho,\mathbf{Q})=I_{WY}(\rho,Q_S)+I_{WY}(\rho,Q_T),
	\end{eqnarray*}
	According to Example 2 in Section 3.1, this is precisely the manifestation of the block coherence measure $C_{0.5,1}(\rho,\mathbf{P})$ when the projector set is taken as $\mathbf{Q}$.
	\par
	Next, we will investigate the relationships among these measures.
	\subsection{Ordering Relations among Measures}
	In quantum resource theory, specific measures are needed to quantify the amount of resource a quantum state possesses. Studying ordering relations aims to clarify under what conditions different measures yield consistent conclusions. This has been discussed in entanglement resource theory \cite{miranowicz2004ordering}, coherence resource theory \cite{liu2016ordering}, and imaginarity resource theory \cite{chen2023measures}. In the context of block coherence resource theory, we can similarly define ordering relations:
	\begin{definition}
		Given two block coherence measures $C_1$ and $C_2$, if for any two quantum states $\rho,\sigma\in A$ in the set $A$, under the same projector set $\mathbf{P}$, the following relation always holds:
		\begin{eqnarray*}
			C_1(\rho,\mathbf{P})\leqslant C_1(\sigma,\mathbf{P})
			\Leftrightarrow
			C_2(\rho,\mathbf{P})\leqslant C_2(\sigma,\mathbf{P}),
		\end{eqnarray*}
		then the block coherence measures $C_1$ and $C_2$ are said to have the same ordering relation on the set $A$.
	\end{definition}
	\par
	From the proof of Proposition 3.1(2), we know that for any state, the following holds:
	\begin{eqnarray*}
		C_{\alpha,1}(\rho,\mathbf{Q})=1-\text{Tr}\left[ (Q_S\rho^{\alpha}Q_S)^{\frac{1}{\alpha}}+(Q_T\rho^{\alpha}Q_T)^{\frac{1}{\alpha}}\right].
	\end{eqnarray*}
	This analytical form can help us better study the ordering relation of $C_{\alpha,1}$ with respect to the parameter $\alpha$. We have:
	\begin{proposition}
		When $\rho$ is a pure state, the S-T coherence measure $C_{\alpha,1}$ has the same ordering relation for different $\alpha$; when $\rho$ is an arbitrary state, $C_{\alpha,1}$ does not have the same ordering relation for different $\alpha$.
	\end{proposition}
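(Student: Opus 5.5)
For a pure state $\rho=\ket{\psi}\bra{\psi}$ we have $\rho^{\alpha}=\rho$ for every $\alpha\in(0,1)$. So $Q_i\rho^{\alpha}Q_i=Q_i\ket{\psi}\bra{\psi}Q_i$ is a rank-one operator with eigenvalue $p_i=\bra{\psi}Q_i\ket{\psi}$, and $(Q_i\rho^\alpha Q_i)^{1/\alpha}$ has trace $p_i^{1/\alpha}$. The closed form $C_{\alpha,1}(\rho,\mathbf{Q})=1-\text{Tr}[(Q_S\rho^\alpha Q_S)^{1/\alpha}+(Q_T\rho^\alpha Q_T)^{1/\alpha}]$ then becomes
\begin{eqnarray*}
C_{\alpha,1}(\psi,\mathbf{Q})=1-p_S^{1/\alpha}-(1-p_S)^{1/\alpha},\qquad p_S=\bra{\psi}Q_S\ket{\psi}.
\end{eqnarray*}
Set $g_\alpha(p)=1-p^{1/\alpha}-(1-p)^{1/\alpha}$ on $[0,1]$. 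Since $1/\alpha>1$, $g_\alpha$ is strictly concave and symmetric about $p=1/2$. Hence $g_\alpha(p)=h_\alpha(|p-1/2|)$, where $h_\alpha$ is strictly decreasing on $[0,1/2]$.

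For two pure states with singlet weights $p$ and $p'$, it follows that $C_{\alpha,1}(\psi)\leqslant C_{\alpha,1}(\phi)$ if and only if $|p-1/2|\geqslant|p'-1/2|$. This condition does not involve $\alpha$, so for any $\alpha_1,\alpha_2$ the two measures induce the same ordering on pure states. This establishes the first claim. The only point needing care is strict monotonicity of $h_\alpha$, which follows from the derivative $-\frac{1}{\alpha}(p^{1/\alpha-1}-(1-p)^{1/\alpha-1})>0$ for $p<1/2$.

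For the second claim it suffices to give one explicit counterexample: mixed states $\rho,\sigma$ and parameters $\alpha_1<\alpha_2$ with $C_{\alpha_1,1}(\rho)<C_{\alpha_1,1}(\sigma)$ but $C_{\alpha_2,1}(\rho)>C_{\alpha_2,1}(\sigma)$. I would work in the $\{\ket{S},\ket{T_0},\ket{T_1},\ket{T_{-1}}\}$ basis and use states coupling $\ket{S}$ with $\ket{T_0}$ only, so the computation reduces to a $2\times2$ block plus diagonal triplet populations. Two natural families are suitable:
\begin{enumerate}
\item a pure state mixed with noise, $\rho=p\ket{\psi}\bra{\psi}+(1-p)\frac{I}{4}$;
\item a state with large coherence but a heavily mixed triplet part.
\end{enumerate}
Then I would evaluate $C_{\alpha,1}$ numerically at two well-separated values, e.g.\ $\alpha=0.1$ and $\alpha=0.9$, to exhibit the reversal. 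Rank-two states in a qubit block are convenient here, because $\rho^\alpha$ and the compressions $Q_i\rho^\alpha Q_i$ (a scalar for $Q_S$, and a block for $Q_T$) are explicit.

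The main obstacle is locating parameters where the crossing actually occurs. Proposition 3.1(2) shows that all $C_{\alpha,1}$ decrease in $\alpha$, and they decrease at different rates for different spectra. The reversal should therefore come from comparing a nearly pure state with low S--T weight imbalance against a highly mixed state, since mixedness affects small-$\alpha$ values more strongly. I would first try $\rho$ pure with $p_S$ near $0$ and $\sigma$ a noisy maximally coherent state, then tune the noise so that the two measures are equal at some intermediate $\alpha$, and finally verify numerically that the order flips on either side of that value.
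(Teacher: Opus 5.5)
Your pure-state argument is correct and is the paper's own argument. The paper also reduces to $f_R(x,\alpha)=1-x^{1/\alpha}-(1-x)^{1/\alpha}$ with $x=\bra{\psi}Q_S\ket{\psi}$ and uses its symmetry and monotonicity on $(0,\frac12)$. Your explicit strict monotonicity is exactly what the ``if and only if'' in the definition of same ordering needs.

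\textbf{The gap is in the second half.} That claim is existential, so its proof \emph{is} an explicit pair of states together with two values of $\alpha$ at which their order flips. You give only a search plan and a promise to verify numerically, so nothing is established. The paper, by contrast, supplies two explicit $4\times4$ density matrices whose $C_{\alpha,1}$ order flips between $\alpha=0.3$ and $\alpha=0.4$. Your guiding heuristic also partly points the wrong way. Pure states with balanced weights, $p_S=\frac12$, attain $1-2^{1-1/\alpha}$, which by convexity is the maximum of $C_{\alpha,1}(\cdot,\mathbf{Q})$ for every $\alpha$ at once. What drives a reversal is instead a pure state with strongly unequal weights: small $C_r$, but $C_{\alpha,1}\to1$ as $\alpha\to0^+$. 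That is what your concrete first trial actually uses.

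That trial does work. Take $\ket{\psi}=\sqrt{0.1}\ket{S}+\sqrt{0.9}\ket{T_0}$ and $\sigma=0.9\ket{\phi}\bra{\phi}+0.025\,I$ with $\ket{\phi}=(\ket{S}+\ket{T_0})/\sqrt{2}$. Write $a=0.925^{\alpha}$ and $b=0.025^{\alpha}$. Then $\sigma^{\alpha}=bI+(a-b)\ket{\phi}\bra{\phi}$, and the closed form gives $C_{\alpha,1}(\sigma,\mathbf{Q})=0.95-2\left(\frac{a+b}{2}\right)^{1/\alpha}$. This yields $C_{0.1,1}(\psi,\mathbf{Q})\approx0.651>0.592\approx C_{0.1,1}(\sigma,\mathbf{Q})$, while $C_{0.5,1}(\psi,\mathbf{Q})=0.18<0.323\approx C_{0.5,1}(\sigma,\mathbf{Q})$.

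You can even avoid numerics with three facts:
\begin{enumerate}
\item Differentiating the closed form at $\alpha=1$ gives $C_{\alpha,1}(\rho,\mathbf{Q})=(1-\alpha)C_r(\rho,\mathbf{Q})+O((1-\alpha)^2)$ as $\alpha\to1^-$, with $C_r$ in nats. Equivalently, $C^T_\alpha\to C_r$.
\item For full-rank $\sigma$, $\sigma^{\alpha}\geqslant\lambda_{\min}(\sigma)^{\alpha}I$ gives $C_{\alpha,1}(\sigma,\mathbf{Q})\leqslant1-4\lambda_{\min}(\sigma)$ for all $\alpha$.
\item $C_{\alpha,1}(\psi,\mathbf{Q})\to1$ as $\alpha\to0^+$ whenever $0<p_S<1$.
\end{enumerate}
Hence any such pure $\psi$ and full-rank $\sigma$ with $C_r(\sigma,\mathbf{Q})>C_r(\psi,\mathbf{Q})$ reverse their order as $\alpha$ runs over $(0,1)$. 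In the example above these values are about $0.543$ versus $0.325$ nats. This certificate does not depend on floating-point evaluation of particular matrices.
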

	\begin{proof}
		\par\noindent\par
		Substituting the pure state $\ket{\psi}=\alpha_S\ket{S}+\sum_{j=-1}^{1}\beta_j\ket{T_j}$ into the expression for $C_{\alpha,1}$, direct calculation yields:
		\begin{eqnarray*}
			C_{\alpha,1}(\rho,\mathbf{Q})=1-\alpha_S^{2/\alpha}-(1-\alpha_S^2)^{1/\alpha}.
		\end{eqnarray*}
		Define the function:
		\begin{eqnarray*}
			f_R(x,\alpha)=1-x^{\frac{1}{\alpha}}-(1-x)^{\frac{1}{\alpha}}.
		\end{eqnarray*}
		Then for any $\alpha$, $f_R(x,\alpha)=f_R(1-x,\alpha)$ holds, and it is monotonically increasing in $x$ on $(0,\frac{1}{2})$. This indicates that for pure states, $C_{\alpha,1}$ certainly has the same ordering relation for different $\alpha$.
		\par
		When $\rho$ is an arbitrary state, numerical simulations show otherwise. Taking $\rho_1$ and $\rho_2$ respectively as:
		\begin{eqnarray*}
			\begin{pmatrix}
				0.3476 & -0.1783 + 0.1361i & -0.0711 - 0.0503i & -0.1975 + 0.0413i \\
				-0.1783 - 0.1361i & 0.3182 & -0.1154 + 0.0703i & 0.0582 - 0.0169i \\
				-0.0711 + 0.0503i & -0.1154 - 0.0703i & 0.1498 & 0.0827 + 0.0416i \\
				-0.1975 - 0.0413i & 0.0582 + 0.0169i & 0.0827 - 0.0416i & 0.1844
			\end{pmatrix}
			\\
			\begin{pmatrix}
				0.2912 & -0.0327 - 0.0229i & -0.0670 - 0.2773i & 0.0165 + 0.1400i \\
				-0.0327 + 0.0229i & 0.0594 & 0.0133 + 0.0331i & 0.0362 - 0.0220i \\
				-0.0670 + 0.2773i & 0.0133 - 0.0331i & 0.3836 & -0.0696 - 0.0662i \\
				0.0165 - 0.1400i & 0.0362 + 0.0220i & -0.0696 + 0.0662i & 0.2658
			\end{pmatrix}
		\end{eqnarray*}
		The calculation results from $\alpha=0.1$ to $\alpha=0.9$ are as follows:
		\begin{align*}
			C_{0.1,1}(\rho_1)\approx0.2362&>0.2115\approx C_{0.1,1}(\rho_2);
			\\
			C_{0.2,1}(\rho_1)\approx0.1969&>0.1869\approx C_{0.2,1}(\rho_2);
			\\
			C_{0.3,1}(\rho_1)\approx0.1636&>0.1625\approx C_{0.3,1}(\rho_2);
			\\
			C_{0.4,1}(\rho_1)\approx0.1343&<0.1384\approx C_{0.4,1}(\rho_2);
			\\
			C_{0.5,1}(\rho_1)\approx0.1087&<0.1144\approx C_{0.5,1}(\rho_2);
			\\
			C_{0.6,1}(\rho_1)\approx0.0847&<0.0908\approx C_{0.6,1}(\rho_2);
			\\
			C_{0.7,1}(\rho_1)\approx0.0621&<0.0675\approx C_{0.7,1}(\rho_2);
			\\
			C_{0.8,1}(\rho_1)\approx0.0406&<0.0446\approx C_{0.8,1}(\rho_2);
			\\
			C_{0.9,1}(\rho_1)\approx0.0198&<0.0221\approx C_{0.9,1}(\rho_2);
		\end{align*}
		Therefore, for arbitrary states, $C_{\alpha,1}$ does not have the same ordering relation for different $\alpha$.
	\end{proof}
	\begin{remark}
		In the definition of the block coherence measure $C_{\alpha,z}(\rho,\mathbf{P})$, we stipulate $\alpha\neq 1$ because the result for $\alpha=1$ is trivial, i.e.:
		\begin{eqnarray*}
			C_{1,z}(\rho,\mathbf{P})=1-\max_{\sigma\in\mathcal{I}_{\text{B}}}\left\lbrace \text{Tr}(\sigma^{\frac{0}{2z}}\rho^{\frac{1}{z}}\sigma^{\frac{0}{2z}})^z \right\rbrace=1-\text{Tr}(\rho)\equiv 0.
		\end{eqnarray*}
		If we define the difference function $DIS(\alpha)=C_{\alpha,1}(\rho_1,\mathbf{Q})-C_{\alpha,1}(\rho_2,\mathbf{Q})$, where $\alpha\in(0,1)$,
		then $\lim_{\alpha\to 1^-}DIS(\alpha)\to 0^-$ is reasonable.
		\par
		In the proof of Proposition 4.1, we constructed a specific pair of quantum states $\rho_1$ and $\rho_2$. Through numerical experiments, inserting $1e6$ points between $\alpha\in(\frac{1e-6}{2},1-\frac{1e-6}{2})$ yields a unique zero $\alpha^*$ for $D(\alpha)$ around $0.3171$, with $DIS(\frac{1e-6}{2})\approx 0.4545$, $DIS(1-\frac{1e-6}{2})\approx -1.2363e-7$. We can then say that the difference function $DIS(\alpha)=C_{\alpha,1}(\rho_1,\mathbf{Q})-C_{\alpha,1}(\rho_2,\mathbf{Q})$ has a unique zero on $\alpha\in(0,1)$, meaning the ordering relation reverses exactly once:
		\begin{align*}
			C_{\alpha,1}(\rho_1,\mathbf{Q})&>C_{\alpha,1}(\rho_2,\mathbf{Q}),\alpha<\alpha^*,
			\\
			C_{\alpha,1}(\rho_1,\mathbf{Q})&<C_{\alpha,1}(\rho_2,\mathbf{Q}),\alpha>\alpha^*.
		\end{align*}
		This seems to suggest that in this specific case, the ordering relation reverses only once as $\alpha$ increases. This raises a question: for any two given mixed states, does the ordering relation reverse at most once, i.e., does $D(\alpha)$ have at most one zero?
		\par
		However, further systematic numerical exploration indicates that this conjecture does not hold. Figure 4-1 shows that for a fixed specific pair of quantum states $\rho_1$ and $\rho_2$, $D(\alpha)$ can have two zeros ($\alpha_{z_1}$ and $\alpha_{z_2}$), meaning that as $\alpha$ changes, the size relationship between $C_{\alpha,1}(\rho_1,\mathbf{Q})$ and $C_{\alpha,1}(\rho_2,\mathbf{Q})$ can alternate multiple times (e.g., the sign alternation from $\alpha_1$ to $\alpha_2$ to $\alpha_3$ in the figure). Moreover, there exist many pairs of quantum states for which $D(\alpha)$ has more zeros (more than two), so the conjecture does not hold.
		\begin{figure}[H]
			\subfigure
			{
				\begin{minipage}[b]{0.58\linewidth}
					\includegraphics[height=0.17\textheight]{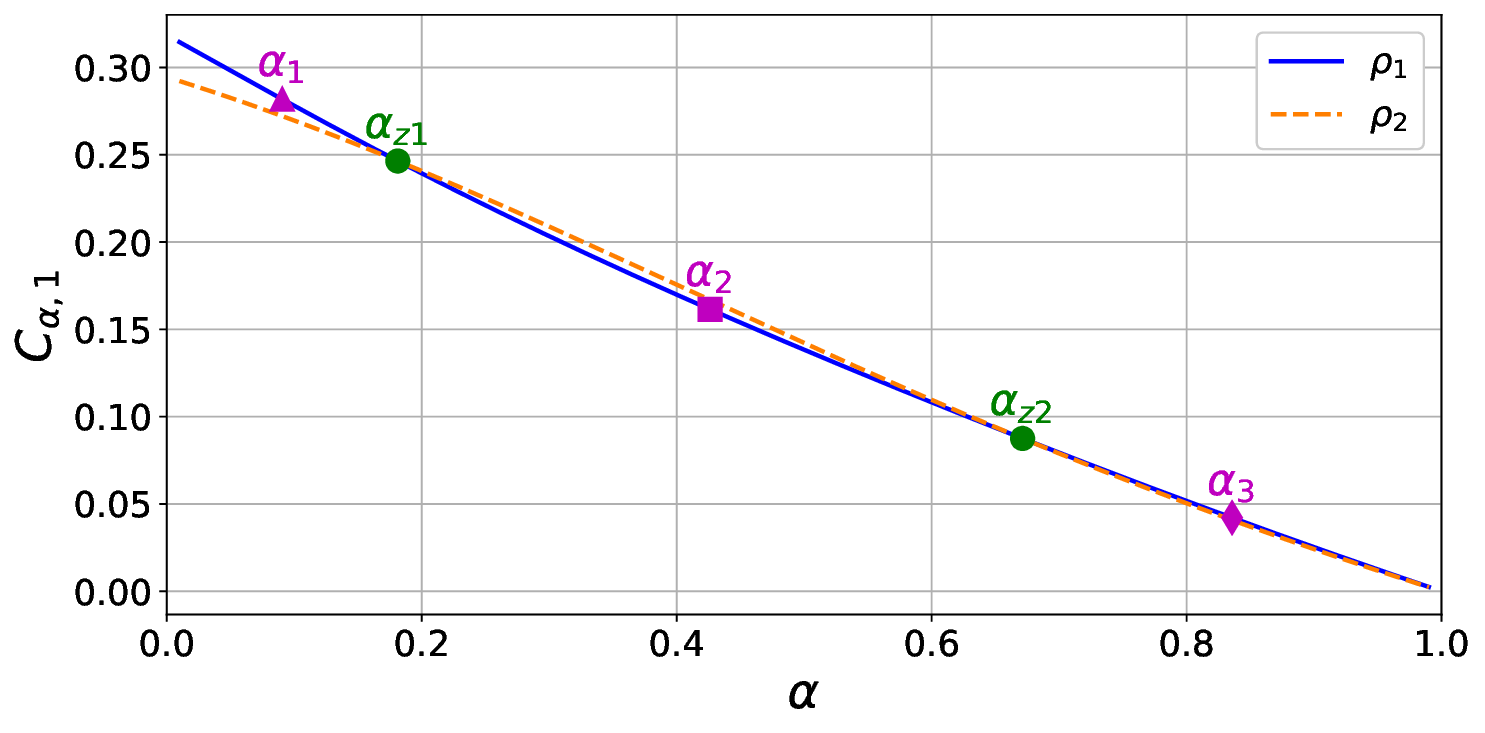}
				\end{minipage}
			}
			\subfigure
			{
				\begin{minipage}[b]{0.3\linewidth}
					\includegraphics[height=0.17\textheight]{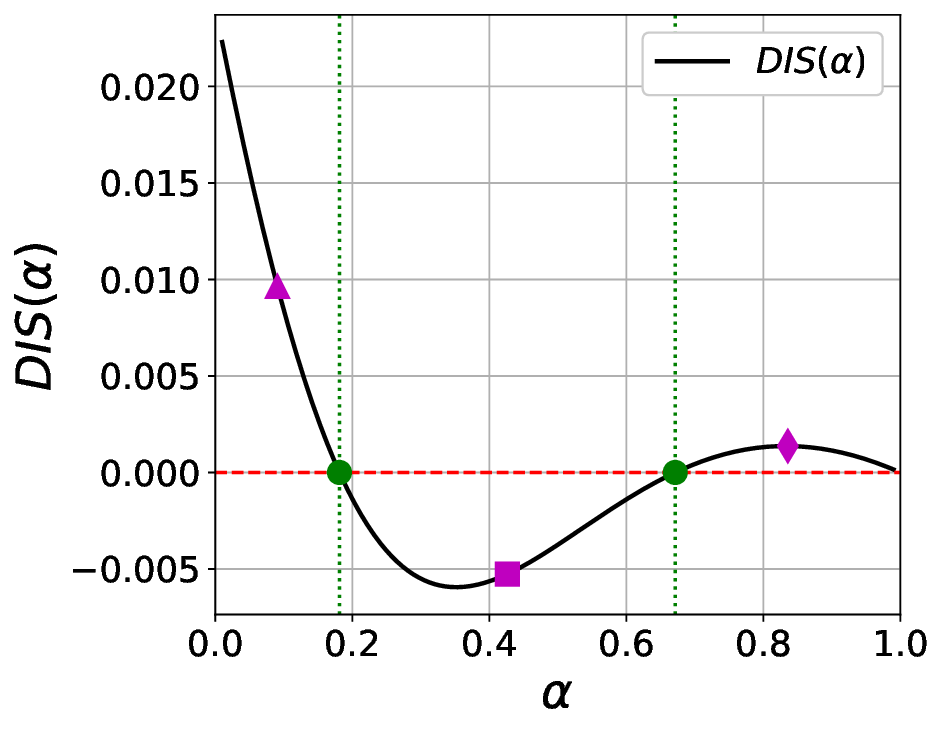}
				\end{minipage}
			}
			\caption{The block coherence measure $C_{\alpha,z}$ and the difference function $DIS(\alpha)$.}
		\end{figure}
		\par
		Although for some state pairs, $D(\alpha)$ may exhibit single-zero behavior, this is not a universal rule. Therefore, the conclusion in Proposition 4.1 that $C_{\alpha,1}$ does not have the same ordering relation for arbitrary states is rigorous and general, encompassing various possibilities ranging from constant ordering, single reversal, to multiple reversals.
	\end{remark}
	\par
	We can also prove the ordering among existing S-T coherence measures. Considering that $C_{WY}$ can be written as $C_{0.5,1}$, we only consider the following relationship:
	\begin{proposition}
		When $\rho$ is a pure state, the S-T coherence measures $C_{\alpha,1}$, $\widetilde{C}_{l_1}$, and $C_r$ have the same ordering; when $\rho$ is an arbitrary state, the three do not have the same ordering relation.
	\end{proposition}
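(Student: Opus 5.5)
The plan is to show that, for pure states, all three measures are strictly increasing functions of a single scalar, and then to give a numerical counterexample for mixed states.

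\textbf{Pure states.} Write the pure state as $\ket{\psi}=\alpha_S\ket{S}+\sum_j\beta_j\ket{T_j}$ and set $x=|\alpha_S|^2\in[0,1]$, so that $\sum_j|\beta_j|^2=1-x$. We reduce each of the three measures to a function of $x$ alone.

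For $C_{\alpha,1}$, the proof of Proposition 4.1 already gives $C_{\alpha,1}=f_R(x,\alpha)=1-x^{1/\alpha}-(1-x)^{1/\alpha}$. This function is symmetric under $x\mapsto 1-x$ and strictly increasing on $[0,\tfrac12]$. Hence it is a strictly increasing function of $g(x):=x(1-x)$, equivalently of $\min\{x,1-x\}$.

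For $\widetilde{C}_{l_1}$, note that $\rho_{ST}=Q_S\ket{\psi}\bra{\psi}Q_T=\alpha_S\ket{S}\bra{\phi_T}$, where $\ket{\phi_T}=\sum_j\beta_j\ket{T_j}$. This is a rank-one operator, so $\widetilde{C}_{l_1}=2\norm{\rho_{ST}}_{\text{Tr}}=2\sqrt{x(1-x)}$, a strictly increasing function of $g(x)$.

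For $C_r$, since $S(\rho)=0$ we get $C_r=S(\rho_{SS}+\rho_{TT})$. Here $\rho_{SS}+\rho_{TT}$ has eigenvalues $x$ and $1-x$ (with multiplicity-one supports), so $C_r=h(x)=-x\log x-(1-x)\log(1-x)$. This is the binary entropy, which is symmetric about $\tfrac12$ and strictly increasing on $[0,\tfrac12]$, hence again a strictly increasing function of $g(x)$.

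Since all three measures are strictly increasing functions of the same quantity $g(x)$, each ordering is equivalent to $g(x_1)\le g(x_2)$. This yields the same ordering relation for all three measures on pure states, and for every $\alpha$.

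\textbf{Arbitrary states.} Here it suffices to exhibit one pair of $4\times4$ density matrices $\rho_1,\rho_2$ and one value of $\alpha$ (say $\alpha=0.5$, i.e.\ $C_{WY}$) for which the orderings disagree. We compute each measure as follows:
\begin{itemize}
\item $C_{\alpha,1}$ from the closed form $1-\text{Tr}[(Q_S\rho^\alpha Q_S)^{1/\alpha}+(Q_T\rho^\alpha Q_T)^{1/\alpha}]$;
\item $\widetilde{C}_{l_1}$ from the singular values of the $1\times3$ block $\rho_{ST}$, which reduce to the single value $2\norm{\rho_{ST}}_2$;
\item $C_r$ as $S(\rho_{SS}+\rho_{TT})-S(\rho)$.
\end{itemize}
We then search randomly over mixed states (or reuse the $\rho_1,\rho_2$ of Proposition 4.1 if they happen to work) until two of the three measures order the pair oppositely, and report the numerical values as the paper does in Proposition 4.1.

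A natural first attempt is to compare a low-purity state with large off-diagonal block $\rho_{ST}$ against a high-purity state with smaller $\rho_{ST}$. The $-S(\rho)$ term penalizes the mixed one in $C_r$, while $\widetilde{C}_{l_1}$ is blind to purity, so these two measures should disagree.

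The main obstacle is only the bookkeeping of finding an explicit, verifiable counterexample, since the pure-state part is routine. The one real subtlety there is checking that $\rho_{ST}$ is rank one, so that its trace norm equals its Hilbert–Schmidt norm.
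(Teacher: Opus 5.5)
Your pure-state half matches the paper's argument: all three measures reduce to functions of $x=|\alpha_S|^2$ that are symmetric about $\tfrac{1}{2}$ and increasing on $[0,\tfrac{1}{2}]$. You do it somewhat more carefully than the paper. You justify $\norm{\rho_{ST}}_{\text{Tr}}=\sqrt{x(1-x)}$ by the rank-one observation, and you use strict monotonicity in $x(1-x)$, which gives the two-sided equivalence the ordering definition requires. That part is fine.

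The gap is in the second half. That claim is existential, and the paper settles it by writing down explicit $\rho_1,\rho_2$ with computed values. Your proposal only describes a search and a heuristic, so as written nothing has been proved there.

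Your heuristic does work, and it can be carried out in closed form. Take $\rho_1=\tfrac{1}{2}\ket{\psi}\bra{\psi}+\tfrac{1}{8}I$ with $\ket{\psi}=(\ket{S}+\ket{T_0})/\sqrt{2}$. Then:
\begin{itemize}
\item $Q_S\rho_1Q_T=\tfrac{1}{4}\ket{S}\bra{T_0}$, so $\widetilde{C}_{l_1}(\rho_1,\mathbf{Q})=\tfrac{1}{2}$.
\item The spectra of $\rho_1$ and $\Delta(\rho_1)$ are $\{\tfrac{5}{8},\tfrac{1}{8},\tfrac{1}{8},\tfrac{1}{8}\}$ and $\{\tfrac{3}{8},\tfrac{3}{8},\tfrac{1}{8},\tfrac{1}{8}\}$. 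Hence $C_r(\rho_1,\mathbf{Q})=\tfrac{5}{8}\log_2 5-\tfrac{3}{4}\log_2 3\approx 0.262$ (the comparisons below do not depend on the log base).
\item $C_{0.5,1}(\rho_1,\mathbf{Q})=2\bigl[\tfrac{3}{8}-(\bra{S}\sqrt{\rho_1}\ket{S})^2\bigr]=\tfrac{3-\sqrt{5}}{8}\approx 0.0955$.
\end{itemize}

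Now compare with a pure state $\rho_2$ with $x=0.06$. Your formulas give $\widetilde{C}_{l_1}(\rho_2,\mathbf{Q})=2\sqrt{0.0564}\approx 0.475<\tfrac{1}{2}$. But $C_r(\rho_2,\mathbf{Q})=h(0.06)\approx 0.327>0.262$ and $C_{0.5,1}(\rho_2,\mathbf{Q})=2x(1-x)=0.1128>0.0955$. So $\widetilde{C}_{l_1}$ disagrees with both other measures.

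With $x=0.05$ instead, $C_{0.5,1}(\rho_2,\mathbf{Q})=0.095<\tfrac{3-\sqrt{5}}{8}$ while $h(0.05)\approx 0.286>0.262$. This separates $C_{0.5,1}$ from $C_r$.

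Inserting this completes your proof. It is also stronger than the paper's version: it covers all three pairs of measures, whereas the paper exhibits only the $\widetilde{C}_{l_1}$ versus $C_r$ pair numerically and asserts the others are similar.
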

	\begin{proof}
		\par\noindent\par
		Substituting the pure state $\ket{\psi}=\alpha_S\ket{S}+\sum_{j=-1}^{1}\beta_j\ket{T_j}$ into $\widetilde{C}_{l_1}$ and $C_r$, we obtain:
		\begin{align*} \widetilde{C}_{l_1}(\ket{\psi},\mathbf{Q})&=2\sqrt{\alpha_S^2(1-\alpha_S^2)},
			\\
			C_r(\ket{\psi},\mathbf{Q})&=-\alpha_S^2\log(\alpha_S^2)-(1-\alpha_S^2)\log(1-\alpha_S^2).
		\end{align*}
		Define the functions:
		\begin{align*}
			f_{l_1}(x)&=2\sqrt{x(1-x)},
			\\
			f_r(x)&=x\log(x)-(1-x)\log(1-x).
		\end{align*}
		Thus, analyzing $C_{\alpha,1}$, $\widetilde{C}_{l_1}$, and $C_r$ can be turned into directly analyzing the functions $f_R(x,\alpha)$, $f_{l_1}(x)$, and $f_r(x)$. Since $f_{l_1}(x)=f_{l_1}(1-x)$ and $f_r(x)=f_r(1-x)$, and all three are monotonically increasing in $x$ on $(0,\frac{1}{2})$, it can be deduced that the three measures have the same ordering for pure states.
		\par
		However, for arbitrary states, counterexamples exist where the ordering does not hold. For example, taking $\rho_1$ and $\rho_2$ respectively as
		\begin{eqnarray*}
			\begin{pmatrix}
				0.2564 & 0.2244+0.0068i & 0.0911-0.0216i & 0.1709 \\
				0.2244-0.0068i & 0.2317 & 0.1516+0.0477i & 0.1882-0.0273i \\
				0.0911+0.0216i & 0.1516-0.0477i & 0.3372 & 0.0963-0.1143i \\
				0.1709 & 0.1882+0.0273i & 0.0963+0.1143i & 0.1747
			\end{pmatrix},
			\\
			\begin{pmatrix}
				0.2871 & 0.2749-0.0193i & 0.1011+0.0191i & 0.1348+0.0513i \\
				0.2749+0.0193i & 0.4127 & 0.0708+0.0021i & 0.1415+0.0318i \\
				0.1011-0.0191i & 0.0708-0.0021i & 0.1398 & 0.0871+0.0375i \\
				0.1348-0.0513i & 0.1415-0.0318i & 0.0871-0.0375i & 0.1604
			\end{pmatrix}.
		\end{eqnarray*}
		Calculations yield:
		\begin{align*}
			0.299\approx\widetilde{C}_{l_1}(\rho_1,\mathbf{Q})&<\widetilde{C}_{l_1}(\rho_2,\mathbf{Q})\approx0.3793
			\\
			0.1906\approx C_r(\rho_1,\mathbf{Q})&>C_r(\rho_2,\mathbf{Q})\approx 0.1118
		\end{align*}
		Thus, they have different orderings; the other two relations are similar.
	\end{proof}
	\par
	Propositions 4.1 and 4.2 indicate that different block coherence measures (or different parameters of the same measure family) may induce different orderings on the set of quantum states. This reminds us that when comparing the coherence magnitude of different states, the specific measure employed must be clearly specified.
	\subsection{Numerical Magnitude Relations among Measures}
	Ordering relations explore the consistency of the ordering induced on the set of states by different measures. Even if the orderings may differ, there often exist universal numerical inequalities between the values of different measures. These inequalities quantitatively connect different notions of coherence and provide upper and lower bounds for the measure values.
	\begin{proposition}
		The robustness of block coherence measure $C_{rob}$ has a concise lower bound expression:
		\begin{eqnarray*}
			C_{rob}(\rho,\mathbf{P})\geqslant -\text{Tr}(\rho W_\rho),
		\end{eqnarray*}
		where $W_\rho=\Delta(\rho)-\rho$.
	\end{proposition}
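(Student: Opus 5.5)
The plan is to use the dual (semidefinite-programming) characterization of the robustness recalled in Section 4.1,
\begin{eqnarray*}
C_{rob}(\rho,\mathbf{P})=\max_{X\geqslant 0}\left\lbrace \text{Tr}(X\rho)-1\,\middle|\,\Delta(X)=I\right\rbrace .
\end{eqnarray*}
Since this is a maximum, any feasible $X$ gives a lower bound. So it suffices to exhibit one feasible witness $X$ for which $\text{Tr}(X\rho)-1=-\text{Tr}(\rho W_\rho)$.

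I would take
\begin{eqnarray*}
X=I-W_\rho=I+\rho-\Delta(\rho).
\end{eqnarray*}

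First I check the constraint $\Delta(X)=I$. Since $\Delta$ is linear and idempotent, $\Delta(\rho-\Delta(\rho))=\Delta(\rho)-\Delta(\rho)=0$. Also $\Delta(I)=\sum_kP_k=I$, because the projectors form a complete orthogonal set. Hence $\Delta(X)=I$.

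Next I check positivity, which is the only step needing a small argument. The state $\Delta(\rho)$ is a density matrix: it is positive semidefinite with unit trace, so all its eigenvalues lie in $[0,1]$ and $\Delta(\rho)\leqslant I$. Therefore
\begin{eqnarray*}
X=(I-\Delta(\rho))+\rho\geqslant\rho\geqslant 0.
\end{eqnarray*}
Thus $X$ is feasible.

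Finally I evaluate the objective:
\begin{eqnarray*}
\text{Tr}(X\rho)-1=\text{Tr}(\rho)+\text{Tr}(\rho^2)-\text{Tr}(\rho\Delta(\rho))-1=\text{Tr}(\rho^2)-\text{Tr}(\rho\Delta(\rho))=-\text{Tr}(\rho W_\rho).
\end{eqnarray*}
Together with the dual formula, this gives $C_{rob}(\rho,\mathbf{P})\geqslant-\text{Tr}(\rho W_\rho)$.

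I expect no real obstacle. The main point is choosing the witness $X=I-W_\rho$ and observing that $\Delta(\rho)\leqslant I$ ensures its positivity. As a side remark, I could also note that $\text{Tr}(\rho\Delta(\rho))=\text{Tr}(\Delta(\rho)^2)$. This gives $-\text{Tr}(\rho W_\rho)=\text{Tr}(\rho^2)-\text{Tr}(\Delta(\rho)^2)=\norm{\rho-\Delta(\rho)}_2^2\geqslant 0$, which shows the bound is nonnegative and vanishes exactly on free states.
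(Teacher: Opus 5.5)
Your proposal is correct and is essentially the paper's proof: the paper also plugs the witness $X_1=I-W_\rho=I-\Delta(\rho)+\rho$ into the dual SDP for $C_{rob}$. It likewise checks $\Delta(X_1)=I$, gets positivity from $\Delta(\rho)\leqslant I$, and evaluates $\text{Tr}(X_1\rho)-1=-\text{Tr}(\rho W_\rho)$. Your side remark is a small addition the paper does not make: the bound equals $\text{Tr}(\rho^2)-\text{Tr}(\Delta(\rho)^2)$, the squared Hilbert--Schmidt distance between $\rho$ and $\Delta(\rho)$, so it is nonnegative and vanishes exactly on free states.
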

	\begin{proof}
		\par\noindent\par
		For any state $\rho$, we know $\rho\geqslant 0$ and $\Delta(\rho)\geqslant 0$, so it has a spectral decomposition $\Delta(\rho)=\sum_j\lambda_j\ket{j}\bra{j}$. Therefore, $I-\Delta(\rho)=\sum_j(1-\lambda_j)\ket{j}\bra{j}$. This implies $I-\Delta(\rho)\geqslant 0$, and consequently $X_1:= I-W_{\rho}= I-\Delta(\rho)+\rho\geqslant 0$. Simultaneously, $\Delta(X_1)=\Delta(I)-\Delta(\Delta(\rho))+\Delta(\rho)=I$ holds. Thus, $X_1$ can serve as a feasible solution for the semidefinite program for $C_{rob}(\rho,\mathbf{P})$, yielding:
		\begin{align*}
			C_{rob}(\rho,\mathbf{P})&\geqslant\text{Tr}(X_1\rho)-1
			\\
			&=\text{Tr}(\rho)-\text{Tr}(W_\rho\rho)-1
			\\
			&=-\text{Tr}(\rho W_\rho),
		\end{align*}
		That is, the conclusion holds; this is a lower bound for the robustness of block coherence.
	\end{proof}
	\begin{proposition}
		Concerning $C_\text{Tr}(\rho,\mathbf{P})$, the following conclusions hold:
		\par\noindent
		(1) $\widetilde{C}_{l_1}(\rho,\mathbf{Q})\geqslant C_\text{Tr}(\rho,\mathbf{Q})$;
		\par\noindent
		(2) $C_r(\rho,\mathbf{Q})\geqslant \frac{1}{2\ln2}[C_\text{Tr}(\rho,\mathbf{Q})]^2$;
		\par\noindent
		(3) $C_{geo}(\rho,\mathbf{P})\geqslant\frac{1}{4}[C_\text{Tr}(\rho,\mathbf{P})]^2$.
		\par\noindent
		(4) $C_\text{rob}(\rho,\mathbf{P})\geqslant C_\text{Tr}(\rho,\mathbf{P})$.
	\end{proposition}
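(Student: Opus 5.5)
The common idea is that $C_\text{Tr}$ is a minimum over $\lambda>0$ and $\sigma\in\mathcal{I}_{\text{B}}$. So any specific admissible pair $(\lambda,\sigma)$ gives an upper bound on $C_\text{Tr}$. Each other measure is realized by some optimal free state (or SDP witness), and I would feed that same free state into $C_\text{Tr}$. Then I would use a standard norm or fidelity inequality to compare the two quantities.

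For (1), I would take $\lambda=1$ and $\sigma=\Delta(\rho)=\rho_{SS}+\rho_{TT}$. This gives
\begin{equation*}
C_\text{Tr}(\rho,\mathbf{Q})\leqslant\norm{\rho-\Delta(\rho)}_\text{Tr}=\norm{\rho_{ST}+\rho_{TS}}_\text{Tr}\leqslant\norm{\rho_{ST}}_\text{Tr}+\norm{\rho_{TS}}_\text{Tr}=\widetilde{C}_{l_1}(\rho,\mathbf{Q}),
\end{equation*}
using only the triangle inequality.

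For (2), I would use the same choice $\sigma=\Delta(\rho)$, which is exactly the minimizer in $C_r(\rho,\mathbf{Q})=S(\rho\|\Delta(\rho))$. Quantum Pinsker's inequality (with $\log$ in base 2) gives
\begin{equation*}
S(\rho\|\Delta(\rho))\geqslant\frac{1}{2\ln2}\norm{\rho-\Delta(\rho)}_\text{Tr}^2\geqslant\frac{1}{2\ln2}C_\text{Tr}(\rho,\mathbf{Q})^2.
\end{equation*}

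For (3), I would let $\gamma\in\mathcal{I}_{\text{B}}$ attain $C_{geo}(\rho,\mathbf{P})=1-F^2(\rho,\gamma)$. The Fuchs--van de Graaf inequality gives
\begin{equation*}
\norm{\rho-\gamma}_\text{Tr}\leqslant2\sqrt{1-F^2(\rho,\gamma)}.
\end{equation*}
Squaring, this becomes $C_{geo}\geqslant\frac14\norm{\rho-\gamma}_\text{Tr}^2\geqslant\frac14C_\text{Tr}^2$, since $\lambda=1$ with $\sigma=\gamma$ is admissible. I need to make sure the paper's convention $D_{geo}=1-F^2$, with $F=\text{Tr}\sqrt{\sqrt\rho\sigma\sqrt\rho}$, matches the form of Fuchs--van de Graaf I use. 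It does, because the bound is stated with the fidelity unsquared inside $\sqrt{1-F^2}$.

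For (4), I would use the primal form of robustness. There is an $s=C_{rob}(\rho,\mathbf{P})$ and a state $\tau$ with $\delta:=(\rho+s\tau)/(1+s)\in\mathcal{I}_{\text{B}}$. Then $\rho=(1+s)\delta-s\tau$. I would choose $\lambda=1+s>0$ and $\sigma=\delta$, which gives
\begin{equation*}
C_\text{Tr}(\rho,\mathbf{P})\leqslant\norm{\rho-(1+s)\delta}_\text{Tr}=s\norm{\tau}_\text{Tr}=s=C_{rob}(\rho,\mathbf{P}).
\end{equation*}
This is exactly where the free scaling $\lambda$ in $C_\text{Tr}$ is essential.

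The main obstacle is only bookkeeping. In (2) I must track the logarithm base to get the constant $\frac{1}{2\ln2}$. In (4) I must confirm that the minimum in $C_{rob}$ is attained, which holds by compactness. The rest is direct substitution of a feasible point.
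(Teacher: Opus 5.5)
Your proposal is correct and follows the paper's proof essentially step for step: the triangle inequality with $\lambda=1$, $\sigma=\Delta(\rho)$ for (1); quantum Pinsker applied to $S(\rho\|\Delta(\rho))=C_r(\rho,\mathbf{Q})$ for (2); Fuchs--van de Graaf at the optimal free state for (3); and the feasible pair $\lambda=1+s$, $\sigma=\delta$ taken from the robustness for (4). The only cosmetic difference is in (4), where you read off $\norm{\rho-(1+s)\delta}_\text{Tr}=s\norm{\tau}_\text{Tr}=s$ from the primal form, whereas the paper uses $\rho\leqslant(1+s)\Delta(\delta)$ and the fact that a positive semidefinite matrix has trace norm equal to its trace; attainment of the minimum is not actually needed there, since every feasible $s$ already upper-bounds $C_\text{Tr}$.
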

	\begin{proof}
		\par\noindent\par\noindent
		(1) By the triangle inequality, we have
		\begin{align*}
			\norm{\rho_{ST}}_{\text{Tr}}+\norm{\rho_{TS}}_{\text{Tr}}&\geqslant\norm{\rho_{ST}+\rho_{TS}}_{\text{Tr}}
			\\
			&=\norm{\rho-\Delta(\rho)}_{\text{Tr}}
			\\
			&\geqslant\min_{\lambda>0,\sigma\in\mathcal{I}_{\text{B}}}\norm{\rho-\lambda\sigma}_\text{Tr}.
		\end{align*}
		Thus, the conclusion holds.
		\par\noindent
		(2) For any two quantum states $\rho$ and $\sigma$, the quantum Pinsker inequality \cite{carlen2018some} gives the relationship between relative entropy and the trace norm:
		\begin{eqnarray*}
			\text{Tr}(\rho\log\rho)-\text{Tr}(\rho\log\Delta(\rho))=S(\rho||\sigma)\geqslant\frac{1}{2ln2}\norm{\rho-\sigma}^2_\text{Tr},
		\end{eqnarray*}
		Therefore, $\frac{1}{2ln2}\norm{\rho-\Delta(\rho)}^2_\text{tr}\leqslant S(\rho||\Delta(\rho))=\text{Tr}(\rho\log\rho)-\text{Tr}(\rho\log\Delta(\rho))$. Since
		\begin{align*}
			\text{Tr}[\rho\log\Delta(\rho)]
			&=\begin{gathered}
				\text{Tr}\left[
				\begin{pmatrix}
					\rho_{SS} & \rho_{ST} \\
					\rho_{TS} & \rho_{TT}
				\end{pmatrix}
				\begin{pmatrix}
					\log\rho_{SS} &  \\
					& \log\rho_{TT}
				\end{pmatrix}
				\right]
			\end{gathered}
			\\
			&=\begin{gathered}
				\text{Tr}
				\begin{pmatrix}
					\rho_{SS}\log\rho_{SS} & \rho_{ST}\log\rho_{TT} \\
					\rho_{TS}\log\rho_{SS} & \rho_{TT}\log\rho_{TT}
				\end{pmatrix}
			\end{gathered}
			\\
			&=\begin{gathered}
				\text{Tr}
				\begin{pmatrix}
					\rho_{SS}\log\rho_{SS} &  \\
					& \rho_{TT}\log\rho_{TT}
				\end{pmatrix}
			\end{gathered}
			\\
			&=\begin{gathered}
				\text{Tr}\left[
				\begin{pmatrix}
					\rho_{SS} &  \\
					& \rho_{TT}
				\end{pmatrix}
				\begin{pmatrix}
					\log\rho_{SS} &  \\
					& \log\rho_{TT}
				\end{pmatrix}
				\right]
			\end{gathered}
			\\
			&=\text{Tr}[\Delta(\rho)\log\Delta(\rho)].
		\end{align*}
		And $C_r(\rho,\mathbf{Q})=S(\Delta(\rho)))-S(\rho)=\text{Tr}(\rho\log\rho)-\text{Tr}(\Delta(\rho)\log\Delta(\rho))$, consequently $S(\rho||\Delta(\rho))=C_r(\rho,\mathbf{Q})$, i.e., $C_r(\rho,\mathbf{Q})\geqslant\frac{1}{2ln2}\norm{\rho-\Delta(\rho)}^2_\text{Tr}$.
		\par
		Combined with $\norm{\rho-\Delta(\rho)}^2_\text{Tr}\geqslant\min_{\lambda>0,\sigma\in\mathcal{I}_{\text{B}}}\norm{\rho-\lambda\sigma}_\text{Tr}$, the conclusion follows.
		\par\noindent
		(3) The Fuchs-van de Graaf inequality \cite{zhang2016lower} states that for any two quantum states $\rho$ and $\sigma$, we have
		\begin{eqnarray*}
			1-F^2(\rho,\sigma)\geqslant\frac{1}{4}\norm{\rho-\sigma}^2_{\text{Tr}}.
		\end{eqnarray*}
		Let $\sigma_1$ be the optimal solution achieving the minimum for the semidefinite program of $C_{geo}(\rho,\mathbf{P})$. Then it holds that
		\begin{eqnarray*}
			C_{geo}(\rho,\mathbf{P})=1-F^2(\rho,\Delta(\sigma_1))\geqslant\frac{1}{4}\norm{\rho-\Delta(\sigma_1)}^2_{\text{Tr}}\geqslant\frac{1}{4}[C_\text{Tr}(\rho,\mathbf{P})]^2.
		\end{eqnarray*}
		\par\noindent
		(4) Let $\delta_1$ be the optimal solution achieving the minimum $s_1$ for $C_{rob}(\rho,\mathbf{P})$, i.e., $\rho\leqslant(1+s_1)\Delta(\delta_1)$. Denote $\sigma_1=\Delta(\delta_1)\in\mathcal{I}_{\text{B}}$. Take $\lambda_1=1+s_1$, then $\lambda_1\sigma_1-\rho$ is a positive semidefinite matrix. Its singular values are its eigenvalues, and its trace norm equals its trace. Thus, we have
		\begin{eqnarray*}
			\norm{\rho-\lambda_1\sigma_1}_{\text{Tr}}=\norm{\lambda_1\sigma_1-\rho}_{\text{Tr}}=\text{Tr}(\lambda_1\sigma_1-\rho)=s_1.
		\end{eqnarray*}
		Therefore, $C_{rob}(\rho,\mathbf{P})\geqslant C_\text{Tr}(\rho,\mathbf{P})$.
	\end{proof}
	\begin{proposition}
		$C_r(\rho,\mathbf{P})\leqslant C_{max}(\rho,\mathbf{P})\leqslant\log[1+C_{rob}(\rho,\mathbf{P})]$.
	\end{proposition}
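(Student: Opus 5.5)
The plan is to prove the two inequalities separately. Here $C_r(\rho,\mathbf{P})$ means the relative-entropy block coherence $\min_{\sigma\in\mathcal{I}_{\text{B}}}S(\rho\|\sigma)$, which for $\mathbf{P}=\mathbf{Q}$ is Kominis' $C_r$.

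\emph{Left inequality.} I will use the standard operator-monotonicity argument relating $S$ and $D_{max}$. Let $\sigma^*\in\mathcal{I}_{\text{B}}$ attain $C_{max}(\rho,\mathbf{P})$, and put $\lambda^*=2^{C_{max}(\rho,\mathbf{P})}$, so that $\rho\leqslant\lambda^*\sigma^*$. Since $\log$ is operator monotone, $\log\rho\leqslant\log(\lambda^*\sigma^*)=\log\lambda^*\,I+\log\sigma^*$ on the support of $\rho$. This is legitimate because $\rho\leqslant\lambda^*\sigma^*$ forces $\mathrm{supp}\,\rho\subseteq\mathrm{supp}\,\sigma^*$. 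Taking the trace against $\rho\geqslant0$ gives
\begin{align*}
\text{Tr}(\rho\log\rho)-\text{Tr}(\rho\log\sigma^*)\leqslant\log\lambda^*,
\end{align*}
that is, $S(\rho\|\sigma^*)\leqslant C_{max}(\rho,\mathbf{P})$. Minimizing the left side over $\sigma\in\mathcal{I}_{\text{B}}$ then yields $C_r(\rho,\mathbf{P})\leqslant S(\rho\|\sigma^*)\leqslant C_{max}(\rho,\mathbf{P})$.

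The delicate point is the operator-monotone step when $\rho$ is singular. I will handle it by restricting to $\mathrm{supp}\,\sigma^*$, or alternatively by replacing $\rho$ with $\rho+\epsilon I$ suitably and taking limits. An alternative route is to cite the known bound $S(\rho\|\sigma)\leqslant D_{max}(\rho\|\sigma)$, which also follows from monotonicity of the sandwiched R\'enyi divergences in $\alpha$.

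\emph{Right inequality.} I will use the second form of the robustness in (\ref{eq3}). Let $s_1=C_{rob}(\rho,\mathbf{P})$, attained by some $\delta_1$, so that $\rho\leqslant(1+s_1)\Delta(\delta_1)$. Then $\sigma_1:=\Delta(\delta_1)$ is a block-incoherent state, and $\lambda=1+s_1\geqslant1$ is feasible in the inner minimization defining $D_{max}(\rho\|\sigma_1)$. Hence
\begin{align*}
C_{max}(\rho,\mathbf{P})\leqslant D_{max}(\rho\|\sigma_1)\leqslant\log(1+s_1)=\log[1+C_{rob}(\rho,\mathbf{P})].
\end{align*}
This is essentially the same feasibility argument used in the proof of Proposition 4.4(4).

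The only obstacle I expect is the support and rigor issue in the first step. The second step is immediate once one observes that the optimal decomposition in $C_{rob}$ supplies a feasible pair $(\lambda,\sigma)$ for $C_{max}$.
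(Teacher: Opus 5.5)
Your proof is correct and is essentially the paper's argument. The right inequality uses the same feasibility observation ($\Delta(\delta_1)$ with $\lambda=1+s_1$ is admissible for $C_{max}$), and the left inequality applies $S(\rho\|\sigma)\leqslant D_{max}(\rho\|\sigma)$ at the $C_{max}$-optimal free state; the paper simply cites this bound from Datta, while you rederive it via operator monotonicity of $\log$, where singular $\rho$ is handled cleanly by regularizing both sides, $\rho+\epsilon I\leqslant\lambda^*\sigma^*+\epsilon I$, and letting $\epsilon\to0$ using $\mathrm{supp}\,\rho\subseteq\mathrm{supp}\,\sigma^*$ (restricting to $\mathrm{supp}\,\sigma^*$ alone does not remove the singularity of $\rho$).
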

	\begin{proof}
		\par\noindent\par
		Assume for the quantum state $\rho$, $C_{rob}(\rho,\mathbf{P})=s_1$, then there exists $\delta_1\in\mathcal{I}_{\text{B}}$ such that
		\begin{eqnarray*}
			\rho\leqslant(1+s_1)\Delta(\delta_1).
		\end{eqnarray*}
		Let $\sigma_1=\Delta(\delta_1)\in\mathcal{I}_{\text{B}}$, then $\rho\leqslant(1+s_1)\sigma_1$. This leads to
		\begin{eqnarray*}
			\min_{\sigma\in\mathcal{I}_{\text{B}}}\log\min_{\lambda}\left\lbrace \lambda\geqslant 1|\rho\leqslant\lambda\sigma\right\rbrace \leqslant\log\min_{\lambda}\left\lbrace \lambda\geqslant 1|\rho\leqslant\lambda\sigma_1\right\rbrace\leqslant\log(1+s_1).
		\end{eqnarray*}
		That is, $C_{max}(\rho,\mathbf{P})\leqslant\log[1+C_{rob}(\rho,\mathbf{P})]$.
		\par
		The left inequality can be derived from the relationship between the max-relative entropy and the quantum relative entropy \cite{datta2009min}. For any quantum states $\rho$ and $\sigma$, we have $S(\rho||\sigma)\leqslant D_{max}(\rho||\sigma)$. Let $\sigma_2$ be the optimal solution for $C_{max}(\rho,\mathbf{P})$. Then $C_{max}(\rho,\mathbf{P})=D_{max}(\rho||\sigma_2)\geqslant S(\rho||\sigma_2)\geqslant C_r(\rho,\mathbf{P})$, which completes the proof.
	\end{proof}
	\begin{proposition}
		When $\rho$ is a pure state, $\widetilde{C}_{l_1}(\rho,\mathbf{Q})\geqslant C_r(\rho,\mathbf{Q})$ holds.
	\end{proposition}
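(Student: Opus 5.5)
We compute both measures explicitly on pure states, reduce the claim to a scalar inequality on $[0,1]$, and then prove that inequality. Write $\ket{\psi}=\alpha_S\ket{S}+\sum_{j=-1}^{1}\beta_j\ket{T_j}$ and set $x=|\alpha_S|^2\in[0,1]$.

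\textbf{Step 1: the two measures on pure states.} As in the proof of Proposition 4.2, we have $\widetilde{C}_{l_1}(\ket{\psi},\mathbf{Q})=2\sqrt{x(1-x)}$. Indeed, $\rho_{ST}=\alpha_S\ket{S}\bra{t}$ with $\ket{t}=\sum_j\beta_j\ket{T_j}$ is rank one, so $\norm{\rho_{ST}}_{\text{Tr}}=\sqrt{x(1-x)}$. Since $S(\rho)=0$ and $\Delta(\rho)$ has spectrum $\{x,1-x,0,0\}$, we get $C_r(\ket{\psi},\mathbf{Q})=h(x):=-x\log x-(1-x)\log(1-x)$, with $\log$ in base $2$. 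The proposition is therefore equivalent to
\begin{eqnarray*}
h(x)\leqslant 2\sqrt{x(1-x)},\qquad x\in[0,1].
\end{eqnarray*}

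\textbf{Step 2: reductions.} Both sides are symmetric under $x\mapsto1-x$, so it suffices to take $x\in[0,\frac12]$. Equality holds at $x=0$ and at $x=\frac12$, where both sides equal $1$. Write $u=4x(1-x)\in[0,1]$, so that $2\sqrt{x(1-x)}=\sqrt{u}$. The claim then reads $h\leqslant u^{1/2}$.

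\textbf{Step 3 (main obstacle): the scalar inequality.} A naive Taylor comparison in $t=1-2x$ does not work. We have $h=1-\frac{1}{\ln2}\sum_{n\geqslant1}\frac{t^{2n}}{2n(2n-1)}$ and $\sqrt{1-t^2}=1-\sum_n c_nt^{2n}$. The coefficients compare favourably at $n=1$, since $\frac{1}{2\ln 2}>\frac12$, but fail at $n=2$, since $\frac{1}{12\ln2}<\frac18$. So a genuinely analytic argument is needed.

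My first choice is the known sharper bound (Topsøe/Lin type)
\begin{eqnarray*}
h(x)\leqslant\left[4x(1-x)\right]^{1/\ln 4}.
\end{eqnarray*}
Since $1/\ln4\approx0.72>\frac12$ and $u\in[0,1]$, this gives $u^{1/\ln4}\leqslant u^{1/2}$ and the result follows.

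If a self-contained proof is preferred, I will study $g(x)=2\sqrt{x(1-x)}-h(x)$ on $[0,\frac12]$ directly.
\begin{itemize}
\item Near $0$, $g(x)\approx2\sqrt{x}-x\log(1/x)>0$, so $g$ is positive just to the right of $0$.
\item Near $\frac12$, expanding in $t$ gives $g=\left(\frac{1}{2\ln2}-\frac12\right)t^2+O(t^4)>0$.
\item Using $g(0)=g(\tfrac12)=0$, it then suffices to show that $g'(x)=\frac{1-2x}{\sqrt{x(1-x)}}-\log\frac{1-x}{x}$ has exactly one zero in $(0,\frac12)$.
\end{itemize}
For the last point, substitute $s=\frac12\ln\frac{1-x}{x}\in(0,\infty)$. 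Then $\frac{1-2x}{\sqrt{x(1-x)}}=2\sinh s$, and $g'=0$ becomes
\begin{eqnarray*}
\sinh s=\frac{s}{\ln 2}.
\end{eqnarray*}
Because $\sinh s/s$ is strictly increasing from $1<1/\ln2$ to $\infty$, this equation has exactly one root. Hence $g$ first increases and then decreases on $(0,\frac12)$, and so $g\geqslant0$ there.
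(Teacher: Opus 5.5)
Your proof is correct, and your reduction to the scalar inequality $h(x)\leqslant 2\sqrt{x(1-x)}$ (binary entropy in bits) is exactly the paper's.

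\textbf{Primary route.} This is genuinely different from the paper. You invoke Tops{\o}e's bound $h(x)\leqslant[4x(1-x)]^{1/\ln 4}$ and finish with $1/\ln 4>\frac12$. That is a one-line argument, and it yields a strictly stronger inequality. The price is importing an external lemma whose own proof is a calculus exercise of comparable difficulty.

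\textbf{What the paper does.} It works directly with $f=f_{l_1}-f_r$ on all of $(0,1)$ and substitutes $\theta=\sqrt{x/(1-x)}$. It then differentiates $f'$ a second time and reduces the sign of $m'(\theta)$ to the quadratic $-\ln(2)\theta^2+2\theta-\ln 2$, whose roots are located numerically at $\theta_1\approx0.403$ and $\theta_2\approx2.481$. Finally it tracks the sign of $f'$ over four intervals, using $f(0^+)=f(\frac12)=f(1^-)=0$.

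\textbf{Your self-contained fallback.} This is essentially the paper's strategy: a sign analysis of the same first derivative, followed by the boundary values. It is cleaner in two ways:
\begin{itemize}
\item Your substitution $s=\frac12\ln\frac{1-x}{x}=-\ln\theta$ turns $f'$ into $2\sinh s-2s/\ln 2$. Uniqueness of the critical point then follows from the monotonicity of $\sinh s/s$ alone, with no second derivative and no numerical root-finding.
\item The symmetry $x\mapsto 1-x$ halves the interval to be analyzed.
\end{itemize}
It also sidesteps an algebraic slip in the paper's version, where $\frac{1-2\theta^2}{\theta}$ should read $\frac{1-\theta^2}{\theta}$, as the paper's own subsequent computation of $m'(\theta)$ confirms.
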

	\begin{proof}
		\par\noindent\par
		Take any pure state $\ket{\psi}=\alpha_S\ket{S}+\sum_{j=-1}^{1}\beta_j\ket{T_j}$. When $\alpha_S=0$ or $\alpha_S=1$, $\widetilde{C}_{l_1}(\rho,\mathbf{Q})=C_r(\rho,\mathbf{Q})=0$. When $\alpha_S\in(0,1)$, we need to compare the relationship between $f_{l_1}(x)$ and $f_r(x)$ for $x\in(0,1)$. Let $f(x)=f_{l_1}(x)-f_r(x)$ and differentiate to obtain:
		\begin{eqnarray*}
			\frac{\text{d}f}{\text{d}x}=\frac{1-2x}{\sqrt{x(1-x)}}+\log(\frac{x}{1-x}).
		\end{eqnarray*}
		Let $\theta=\sqrt{\frac{x}{1-x}}$, then $\theta\in(0,+\infty)$, $x=\frac{\theta^2}{1+\theta^2}$. The above expression can be written as:
		\begin{eqnarray*}
			\frac{\text{d}f}{\text{d}x}=\frac{1-2\theta^2}{\theta}+2\log(\theta).
		\end{eqnarray*}
		Let $m(\theta)=\ln(2)f'(x)$; then $m(\theta)$ has the same sign as $f'(x)$. Differentiating $m(\theta)$ further:
		\begin{eqnarray*}
			\frac{\text{d}m}{\text{d}x}=\frac{2}{\theta}-\ln(2)-\frac{ln(2)}{\theta^2}.
		\end{eqnarray*}
		Let $n(\theta)=\theta^2m'(\theta)=-\ln(2)\theta^2+2\theta-\ln(2)$; then $n(\theta)$ has the same sign as $m'(\theta)$. Note that $n(\theta)$ is a quadratic function. Its discriminant is $4-4[\ln(2)]^2>0$. We can obtain the two roots of $n(\theta)$: $\theta_1\approx0.403$, $\theta_2\approx2.481$. The following analysis can be made:
		\par\noindent
		(1) When $\theta\in(0,\theta_1)$, $m'(\theta)<0$. This means $m(\theta)$ decreases from $+\infty$ to $m(\theta_1)<0$. There exists a zero $\theta_a\in(0,\theta_1)$ such that $m(\theta_a)=0$;
		\par\noindent
		(2) When $\theta\in(\theta_1,\theta_2)$, $m'(\theta)>0$. This means $m(\theta)$ increases from $m(\theta_1)<0$ to $m(\theta_2)>0$. Since $m(1)=0$, we have $m(\theta)<0$ for $\theta\in(\theta_1,1)$ and $m(\theta)>0$ for $\theta\in(1,\theta_2)$;
		\par\noindent
		(3) When $\theta\in(\theta_2,\infty)$, $m'(\theta)<0$. This means $m(\theta)$ decreases from $m(\theta_2)>0$ to $-\infty$. There exists a zero $\theta_b\in(\theta_2,\infty)$ such that $m(\theta_b)=0$.
		\par
		It follows that $f'(x)$ has the same properties described above. That is, $f'(x)>0$ for $x\in\left(0,\frac{\theta_a^2}{1+\theta_a^2}\right) \cup \left(\frac{1}{2},\frac{\theta_b^2}{1+\theta_b^2}\right)$, and $f'(x)<0$ for $x\in\left(\frac{\theta_a^2}{1+\theta_a^2},\frac{1}{2}\right) \cup \left(\frac{\theta_b^2}{1+\theta_b^2},1\right)$. Since $f(x)\to 0^{+}$ as $x\to 0^{+}$ and $x\to 1^{-}$, and $f(\frac{1}{2})=0$, we can conclude that for $x\in(0,1)$, $f(x)\geqslant0$. That is, $f_{l_1}(x)\geqslant f_r(x)$, and thus $\widetilde{C}_{l_1}(\rho,\mathbf{Q})\geqslant C_r$ holds.
	\end{proof}
	\begin{proposition}
		For S-T coherence measures, when $\alpha>0.5$ and $\rho=\ket{\psi}\bra{\psi}$ is a pure state, the following inequality chain holds:
		\begin{eqnarray*}
			\widetilde{C}_{l_1}(\rho,\mathbf{Q})\geqslant 2\text{Var}(\rho,Q_S)\geqslant C_{\alpha,1}(\rho,\mathbf{Q}),
		\end{eqnarray*}
		where $\text{Var}(\rho,A)=\text{Tr}(\rho A^2)-[\text{Tr}(\rho A)]^2$ is the variance of operator $A$ with respect to $\rho$.
	\end{proposition}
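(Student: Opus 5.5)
Reduce everything to the single variable $x=|\alpha_S|^2\in[0,1]$, as in Propositions 4.1 and 4.2. For a pure state $\ket{\psi}=\alpha_S\ket{S}+\sum_{j}\beta_j\ket{T_j}$ we have $\text{Tr}(\rho Q_S)=x$. Since $Q_S^2=Q_S$, the variance is
\begin{eqnarray*}
\text{Var}(\rho,Q_S)=x-x^2=x(1-x).
\end{eqnarray*}
From the proof of Proposition 4.2, $\widetilde{C}_{l_1}(\rho,\mathbf{Q})=2\sqrt{x(1-x)}$, and from the proof of Proposition 4.1, $C_{\alpha,1}(\rho,\mathbf{Q})=f_R(x,\alpha)=1-x^{1/\alpha}-(1-x)^{1/\alpha}$. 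These pure-state formulas rest on $\rho^\alpha=\rho$ for a projector and on $Q_i\rho Q_i$ being rank one with trace $x$ or $1-x$. The claim therefore becomes two scalar inequalities on $[0,1]$:
\begin{eqnarray*}
2\sqrt{x(1-x)}\geqslant 2x(1-x)\geqslant 1-x^{1/\alpha}-(1-x)^{1/\alpha},\qquad \alpha\in(0.5,1).
\end{eqnarray*}

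The first inequality is immediate. Put $t=x(1-x)\in[0,\tfrac14]$; then $\sqrt{t}\geqslant t$ because $t\leqslant1$.

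For the second inequality, set $p=1/\alpha\in(1,2)$. We need $x^p+(1-x)^p\geqslant 1-2x(1-x)$. Since $1=(x+(1-x))^2$, the right-hand side equals $x^2+(1-x)^2$, so the target is
\begin{eqnarray*}
x^p+(1-x)^p\geqslant x^2+(1-x)^2.
\end{eqnarray*}
This follows termwise: for $u\in[0,1]$ and $p\leqslant 2$ we have $u^p\geqslant u^2$. Equivalently, $C_{\alpha,1}\leqslant C_{0.5,1}$ by Proposition 3.1(2), and $C_{0.5,1}$ on pure states equals $1-x^2-(1-x)^2=2x(1-x)$. That observation gives a second, cleaner route using an earlier result.

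No step is a real obstacle. The only care needed is checking the pure-state formula for $C_{\alpha,1}$, where the operator powers collapse. Note also that the hypothesis $\alpha>0.5$ is exactly what makes $p\leqslant2$ and hence the termwise comparison valid; for $\alpha<0.5$ the second inequality reverses.
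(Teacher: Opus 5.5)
Your proof is correct. The first inequality is handled exactly as in the paper, but your main argument for the second inequality takes a different route. You evaluate both sides in closed form on pure states and reduce to the termwise bound $u^{1/\alpha}\geqslant u^2$ on $[0,1]$. The paper instead proves $\text{Var}(\rho,Q_S)\geqslant I_{WY}(\rho,Q_S)$ for \emph{arbitrary} $\rho$: it drops the nonnegative off-diagonal terms of $\sum_{i,j}\sqrt{\lambda_i\lambda_j}|\bra{i}Q_S\ket{j}|^2$ and applies Cauchy--Schwarz to $(\sum_i\lambda_i\bra{i}Q_S\ket{i})^2$. Since $Q_T=I-Q_S$ gives $I_{WY}(\rho,Q_T)=I_{WY}(\rho,Q_S)$, this yields $2\text{Var}(\rho,Q_S)\geqslant C_{WY}(\rho,\mathbf{Q})=C_{0.5,1}(\rho,\mathbf{Q})$. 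The paper then concludes with the monotonicity in $\alpha$ from Proposition 3.1(2). Your ``second route'' is close to this, except that you obtain $C_{0.5,1}=2\text{Var}$ by direct evaluation on pure states rather than through the general skew-information/variance bound. The paper's route buys generality: $2\text{Var}(\rho,Q_S)\geqslant C_{\alpha,1}(\rho,\mathbf{Q})$ for $\alpha\geqslant 1/2$ holds for all states, mixed included, so purity is needed only for the first inequality. Your route buys transparency and sharpness. It is fully elementary and needs neither Proposition 3.1(2) (which rests on the Araki--Lieb--Thirring-type Lemma 3.1) nor the skew-information bound. It also exhibits equality at $\alpha=1/2$ on pure states and shows that the inequality reverses for $\alpha<1/2$, which explains exactly why the hypothesis $\alpha>0.5$ is needed.
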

	\begin{proof}
		\par\noindent\par
		We prove the second inequality. Since $Q_S^2=Q_S$, the variance expressed with the projector $Q_S$ is $\text{Var}(\rho,Q_S)=\text{Tr}(\rho Q_S)-[\text{Tr}(\rho Q_S)]^2$. Note that $\text{Var}(\rho,Q_S)$ and $I_{WY}(\rho,Q_S)=\text{Tr}(\rho Q_S)-\text{Tr}(\sqrt{\rho}Q_S\sqrt{\rho}Q_S)$ share the same term $\text{Tr}(\rho Q_S)$. Let us first prove $\text{Var}(\rho,Q_S)\geqslant I_{WY}(\rho,Q_S)$, i.e., prove $\text{Tr}(\sqrt{\rho}Q_S\sqrt{\rho}Q_S)\geqslant[\text{Tr}(\rho Q_S)]^2$. In fact, this inequality holds for any state $\rho$. Performing a spectral decomposition of $\rho$, $\rho=\sum_i\lambda_i\ket{i}\bra{i}$, we have
		\begin{align*}
			\text{Tr}(\sqrt{\rho}Q_S\sqrt{\rho}Q_S)
			&=\text{Tr}\left(\sum_i\lambda_i\ket{i}\bra{i}Q_S\sum_j\lambda_j\ket{j}\bra{j}Q_S \right)
			\\
			&=\sum_{i,j}\sqrt{\lambda_i\lambda_j}|\bra{i}Q_S\ket{j}|^2
			\\
			&=\sum_{i}\lambda_i|\bra{i}Q_S\ket{j}|^2+\sum_{i\neq j}\sqrt{\lambda_i\lambda_j}|\bra{i}Q_S\ket{j}|^2,
		\end{align*}
		and
		\begin{eqnarray*}
			[\text{Tr}(\rho Q_S)]^2=\sum_i\lambda_i|\bra{i}Q_S\ket{i}|.
		\end{eqnarray*}
		Let $a_i=\sqrt{\lambda_i}|\bra{i}Q_S\ket{i}|$, $b_i=\sqrt{\lambda_i}$. By the H{\"o}lder inequality, we have:
		\begin{align*}
			\sum_i\lambda_i|\bra{i}Q_S\ket{i}|&=\sum_ia_ib_i
			\\
			&\leqslant\left(\sum_ia_i^2 \right)^{1/2} \left(\sum_ib_i^2 \right)^{1/2}
			\\
			&=\left( \sum_i\lambda_i|\bra{i}Q_S\ket{i}|^2\right)^{1/2} \left(\sum_i\lambda_i \right)^{1/2}
			\\
			&=\left( \sum_i\lambda_i|\bra{i}Q_S\ket{i}|^2\right)^{1/2},
		\end{align*}
		Thus, we have $[\text{Tr}(\rho Q_S)]^2\leqslant\sum_{i}\lambda_i|\bra{i}Q_S\ket{j}|^2\leqslant\text{Tr}(\sqrt{\rho}Q_S\sqrt{\rho}Q_S)$. This indicates $2\text{Var}(\rho,Q_S)\geqslant C_{WY}(\rho,\mathbf{Q})$. From the previous proposition, we know $2\text{Var}(\rho,Q_S)\geqslant C_{\alpha,1}(\rho,\mathbf{Q})$.
		\par
		We prove the first inequality. Since $\rho=\ket{\psi}\bra{\psi}$ is a pure state, we have
		\begin{align*}
			\text{Var}(\rho,Q_S)&=\text{Tr}(\rho Q_S)-[\text{Tr}(\rho Q_S)]^2
			\\
			&=\bra{psi}Q_S\ket{\psi}-(\bra{psi}Q_S\ket{\psi})^2
			\\
			&=\alpha_S^2(1-\alpha_S^2)
			\\
			&=x(1-x).
		\end{align*}
		For a pure state, $\widetilde{C}_{l_1}(\rho,\mathbf{Q})=f_{l_1}(x)=2\sqrt{x(1-x)}$, hence $\widetilde{C}_{l_1}(\rho,\mathbf{Q})\geqslant 2\text{Var}(\rho,Q_S)$.
	\end{proof}
	\begin{proposition}
		When $\alpha=\beta$, the following holds: $\frac{1}{1-\alpha}C_{\alpha,z}(\rho,\mathbf{P})\leqslant C^T_{\alpha}(\rho,\mathbf{P})\leqslant C^N_{\alpha}(\rho,\mathbf{P})$.
	\end{proposition}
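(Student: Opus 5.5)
The plan is to compare the three quantities through the fidelity-type expressions inside their maxima, each taken over the same free set $\mathcal{I}_{\text{B}}$. Since $t\mapsto t^{1/\alpha}$ is increasing on $[0,\infty)$, a pointwise inequality between the integrands for every fixed $\sigma\in\mathcal{I}_{\text{B}}$ carries over to the maxima and then to the measures. The prefactor $\frac{1}{1-\alpha}>0$ is common to $C^T_\alpha$ and $C^N_\alpha$ and appears on the left-hand side by hypothesis, so it never changes the direction of an inequality. Throughout, $\beta=\alpha\in(0,1)$.

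For the right inequality $C^T_{\alpha}\leqslant C^N_{\alpha}$, I will apply Lemma 3.2 with $A=\rho$, $B=\sigma$ and $\beta=\alpha$. This gives $\text{Tr}(\rho\sharp_{1-\alpha}\sigma)\leqslant\text{Tr}(\rho^{\alpha}\sigma^{1-\alpha})$ for every $\sigma\in\mathcal{I}_{\text{B}}$. Both sides are nonnegative, as was already observed in the proof of Theorem 3.4. Raising to the power $1/\alpha$ and maximizing over $\sigma$ gives $\max_\sigma[\text{Tr}(\rho\sharp_{1-\alpha}\sigma)]^{1/\alpha}\leqslant\max_\sigma[\text{Tr}(\rho^\alpha\sigma^{1-\alpha})]^{1/\alpha}$. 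Subtracting from $1$ and multiplying by $\frac{1}{1-\alpha}$ then yields $C^T_\alpha(\rho,\mathbf{P})\leqslant C^N_\alpha(\rho,\mathbf{P})$ directly from the definitions.

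For the left inequality, I will argue as in the non-negativity part of Theorem 3.4 and in Proposition 3.2(2). I apply Lemma 3.1 with $A=\sigma^{\frac{1-\alpha}{2z}}$, $B=\rho^{\frac{\alpha}{z}}$, $r=1/z$ and $q=1$. This gives $\text{Tr}(\sigma^{\frac{1-\alpha}{2z}}\rho^{\frac{\alpha}{z}}\sigma^{\frac{1-\alpha}{2z}})^{z}\geqslant\text{Tr}(\sigma^{\frac{1-\alpha}{2}}\rho^{\alpha}\sigma^{\frac{1-\alpha}{2}})=\text{Tr}(\rho^\alpha\sigma^{1-\alpha})$ for every $\sigma\in\mathcal{I}_{\text{B}}$. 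The last equality uses cyclicity of the trace. Raising to $1/\alpha$ and maximizing over $\sigma$ gives $1-C_{\alpha,z}(\rho,\mathbf{P})\geqslant 1-(1-\alpha)C^T_\alpha(\rho,\mathbf{P})$, which is exactly $\frac{1}{1-\alpha}C_{\alpha,z}\leqslant C^T_\alpha$. At $z=1$ both sides agree by Example 2, since $C_{\alpha,1}=1-\sum_i\text{Tr}[(P_i\rho^\alpha P_i)^{1/\alpha}]=(1-\alpha)C^T_\alpha$. This gives an independent consistency check: the left inequality is an equality there.

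The main obstacle is the left inequality, because Lemma 3.1 requires $r=1/z\geqslant1$, that is, $z\leqslant1$. Combined with the standing constraint $\max\{\alpha,1-\alpha\}\leqslant z$, the argument therefore covers the range $\max\{\alpha,1-\alpha\}\leqslant z\leqslant 1$. For $z>1$, Proposition 3.1(1) shows that $C_{\alpha,z}$ increases with $z$, so the comparison with $C_{\alpha,1}=(1-\alpha)C^T_\alpha$ would go the other way. I would therefore state the left inequality explicitly under the restriction $z\leqslant1$, the regime in which the paper already invokes ``$\frac1z>1$''. For $z\geqslant1$ I would record the reversed relation as a remark.
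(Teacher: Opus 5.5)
Your argument follows the paper's proof. Both reduce the claim to the pointwise chain $\mathrm{Tr}(\rho\sharp_{1-\alpha}\sigma)\leqslant\mathrm{Tr}(\rho^{\alpha}\sigma^{1-\alpha})\leqslant\mathrm{Tr}(\sigma^{\frac{1-\alpha}{2z}}\rho^{\frac{\alpha}{z}}\sigma^{\frac{1-\alpha}{2z}})^z$ from the proof of Theorem 3.4 (Lemmas 3.2 and 3.1), and then pass to the maxima. Your way of doing this, comparing pointwise and then maximizing, is the correct one. The paper instead fixes the optimizer of the smaller measure and then bounds by a maximum in the wrong direction. For example, the last step of its second chain needs $\mathrm{Tr}(\rho\sharp_{1-\alpha}\xi)\geqslant\max_{\sigma}\mathrm{Tr}(\rho\sharp_{1-\alpha}\sigma)$. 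Your write-up repairs this.

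There is one slip to fix. The substitution you give in Lemma 3.1 does not produce the inequality you write: with $A=\sigma^{\frac{1-\alpha}{2z}}$ and $r=1/z$ you get $A^r=\sigma^{\frac{1-\alpha}{2z^2}}$, which leads to an unrelated inequality. Take instead $A=\sigma^{\frac{1-\alpha}{2}}$, $B=\rho^{\alpha}$, $r=1/z$, $q=z$, so that $rq=1$. Lemma 3.1 then reads $\mathrm{Tr}(\rho^{\alpha}\sigma^{1-\alpha})=\mathrm{Tr}(ABA)\leqslant\mathrm{Tr}(A^{1/z}B^{1/z}A^{1/z})^{z}$, which is exactly what you need, and the requirement $r\geqslant 1$ is precisely $z\leqslant 1$. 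Your original $A,B$ with $r=z$, $q=1$ give the reverse pointwise inequality when $z\geqslant 1$. That is a direct proof of the reversed relation you want to record.

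Your restriction to $z\leqslant1$ is genuinely needed, not an artifact of the method. The paper's proof uses the middle inequality for every admissible $z$, although Lemma 3.1 delivers it only for $1/z\geqslant 1$, and the left inequality is false for $z>1$. Proposition 3.1(1) alone only gives $\geqslant$, so here is a strict counterexample:
\begin{itemize}
\item Take a qubit with $\mathbf{P}=\{|0\rangle\langle 0|,|1\rangle\langle 1|\}$, $\rho=|\psi\rangle\langle\psi|$ where $|\psi\rangle=\sqrt{0.9}|0\rangle+\sqrt{0.1}|1\rangle$, and set $\alpha=\tfrac12$, $z=2$.
\item For $\sigma=\mathrm{diag}(q_0,q_1)$ the objective equals $(0.9q_0^{1/4}+0.1q_1^{1/4})^2$. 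By H\"{o}lder its maximum is $(0.9^{4/3}+0.1^{4/3})^{3/2}$.
\item Hence $C_{1/2,2}=1-(0.9^{4/3}+0.1^{4/3})^{3}\approx 0.233$, while $C^T_{1/2}=2(1-0.81-0.01)=0.36$.
\item Therefore $\frac{1}{1-\alpha}C_{1/2,2}\approx 0.466>C^T_{1/2}$.
\end{itemize}
The correct statement is what you propose. The left inequality holds for $\max\{\alpha,1-\alpha\}\leqslant z\leqslant 1$, with equality at $z=1$. It reverses for $z\geqslant 1$. The inequality $C^T_\alpha\leqslant C^N_\alpha$ holds unconditionally.
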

	\begin{proof}
		\par\noindent\par
		In the proof of Theorem 3.4, we obtained the following inequality:
		\begin{align*}
			\text{Tr}(\rho\sharp_{1-\alpha}\sigma)\leqslant\text{Tr}(\rho^{\alpha}\sigma^{1-\alpha})\leqslant\text{Tr}(\sigma^{\frac{1-\alpha}{2z}}\rho^{\frac{\alpha}{z}}\sigma^{\frac{1-\alpha}{2z}})^z.
		\end{align*}
		Then for $C_{\alpha,z}(\rho,\mathbf{P})$, there exists $\gamma\in\mathcal{I}_{\text{B}}$ such that $C_{\alpha,z}(\rho,\mathbf{P})=1-\big[\text{Tr}(\gamma^{\frac{1-\alpha}{2z}}\rho^{\frac{\alpha}{z}}\gamma^{\frac{1-\alpha}{2z}})^z \big]^{\frac{1}{\alpha}}$. Thus, we have:
		\begin{align*}
			C_{\alpha,z}(\rho,\mathbf{P})
			&=1-\big[\text{Tr}(\gamma^{\frac{1-\alpha}{2z}}\rho^{\frac{\alpha}{z}}\gamma^{\frac{1-\alpha}{2z}})^z\big]^{\frac{1}{\alpha}}
			\\
			&\leqslant 1-\big[\text{Tr}(\rho^{\alpha}\sigma^{1-\alpha})\big]^{\frac{1}{\alpha}}
			\\
			&\leqslant 1-\max_{\sigma\in\mathcal{I}_{\text{B}}}\big[\text{Tr}(\rho^{\alpha}\sigma^{1-\alpha})\big]^{\frac{1}{\alpha}}
			\\
			&=(1-\alpha)C^T_{\alpha}(\rho,\mathbf{P}).
		\end{align*}
		For $C^T_{\alpha}(\rho,\mathbf{P})$, there similarly exists $\xi\in\mathcal{I}_{\text{B}}$ such that $C^T_{\alpha}(\rho,\mathbf{P})=1-\big[\text{Tr}(\rho^{\alpha}\xi^{1-\alpha})\big]^{\frac{1}{\alpha}}$. Hence:
		\begin{align*}
			C^T_{\alpha}(\rho,\mathbf{P})
			&=1-\big[\text{Tr}(\rho^{\alpha}\xi^{1-\alpha})\big]^{\frac{1}{\alpha}}
			\\
			&\leqslant 1-\big[\text{Tr}(\rho\sharp_{1-\alpha}\xi)\big]^{\frac{1}{\alpha}}
			\\
			&\leqslant 1-\max_{\sigma\in\mathcal{I}_{\text{B}}}\big[\text{Tr}(\rho\sharp_{1-\alpha}\sigma)\big]^{\frac{1}{\alpha}}
			\\
			&=C^N_{\alpha}(\rho,\mathbf{P}).
		\end{align*}
		Thus, the conclusion is established.
	\end{proof}

	\section{Dynamics of S-T Coherence Measures under the Kominis Master Equation}
	In many physical, chemical, and biological systems (e.g., energy transfer in photosynthesis, radical pair reactions), the open system dynamics of quantum systems are often described by master equations. Many master equations indeed have non-trace-preserving characteristics. This section primarily investigates the time evolution under the Kominis master equation that includes a recombination term.
	\par
	The particle number (population) of the system refers to the trace of the density matrix, i.e., $\text{Tr}(\rho)$, which represents the total number of radical pairs (RPs) in the ensemble (or the total probability). For a normalized initial state, $\text{Tr}(\rho)=1$ indicates one unit of RP (can be understood as an ensemble). Under the evolution of a master equation containing a recombination term, the population decays over time, with the physical meaning being the proportion of radical pairs that have not yet undergone recombination and are still alive.
	\par
	In the previous section, as we aimed to study the decay of S-T coherence under the amplitude damping channel, the chosen initial pure states were all S-T coherent states. When considering real radical pair (RP) reactions, the system not only experiences decoherence but also undergoes spin-dependent recombination reactions, leading to a decrease in the system's particle number. To more realistically simulate the dynamics of RPs, we need to adopt a complete master equation that includes Hamiltonian evolution, decoherence, and recombination terms. Combined with the S-T coherence measure $C_{\alpha,1}$ we defined earlier, we study the evolution of S-T coherence in RPs throughout the complete physical process and analyze how coherence influences the reaction yield. We simulate the complete dynamics of radical pairs (RPs) based on the master equation framework proposed by Kominis \cite{kominis2015radical}. In this model, the time evolution of the RP density matrix $\rho$ is described by the following master equation:
	\begin{eqnarray*}
		\frac{\text{d}\rho}{\text{d}t}=-i[\mathcal{H},\rho]+\mathcal{D}\llbracket\rho\rrbracket+\mathcal{R}_K\llbracket\rho\rrbracket,
	\end{eqnarray*}
	where $\mathcal{H}$ is the Hamiltonian, $\mathcal{D}\llbracket\rho\rrbracket$ describes S-T decoherence, and $\mathcal{R}_K\llbracket\rho\rrbracket$ describes recombination.
	\par
	\par
	For the Hamiltonian $\mathcal{H}$, we choose a simple form that can intuitively demonstrate the changes:
	\begin{eqnarray*}
		\mathcal{H}=\omega_1s_{1z}\otimes I_2+\omega_2 I_1\otimes s_{2z},
	\end{eqnarray*}
	where $s_{1z}$ is the $z$-component Pauli operator of the $i$-th electron spin (using natural units $\hbar=1$). $\omega_1$ and $\omega_2$ are the Larmor frequencies, determining the period and strength of the evolution. It is worth noting that under the action of this Hamiltonian $\mathcal{H}$, the dynamics of the entire system are decoupled into two independent subsystems: one is the "inactive" subspace composed of $\ket{T_{-1}}$ and $\ket{T_1}$, and the other is the "active" subspace consisting of $\ket{S}$ and $\ket{T_0}$, where coherent oscillations occur. In fact, the Hamiltonian $\mathcal{H}$ can be written as:
	\begin{eqnarray*}
		\mathcal{H}=\frac{\omega_1-\omega_2}{2}(\ket{S}\bra{T_0}+\ket{T_0}\bra{S})+\frac{\omega_1+\omega_2}{2}(\ket{T_1}\bra{T_1}-\ket{T_{-1}}\bra{T_{-1}}),
	\end{eqnarray*}
	When any two states $\ket{\psi}$ and $\bra{\phi}$ belong to the two subspaces $\left\lbrace \ket{T_1},\ket{T_{-1}} \right\rbrace $ and $\left\lbrace \ket{S},\ket{T_0} \right\rbrace $, respectively, it can be verified that $\bra{\psi}\mathcal{H}\ket{\phi}=\bf{0}$. Therefore, the entire evolution is confined within $\left\lbrace \ket{S},\ket{T_0} \right\rbrace $, meaning that the coupling between $\ket{S}$ and $\ket{T_0}$ driven by the Hamiltonian $\mathcal{H}$ is the sole source of S-T coherence. Simultaneously, the evolution term $-i[\mathcal{H},\rho]$ also maintains the unitarity of the system, i.e., $\text{Tr}\left\lbrace -i[\mathcal{H},\rho] \right\rbrace =0$, which means the evolution term does not change the particle number of the system.
	\par
	The decoherence term $\mathcal{D}\llbracket\rho\rrbracket$, originating from the interaction between the RP and the environment, has the form:
	\begin{eqnarray*}
		\mathcal{D}\llbracket\rho\rrbracket=-K_d(Q_S\rho Q_T+Q_T\rho Q_S),
	\end{eqnarray*}
	where $K_d=\frac{k_S+k_T}{2}$ is the decoherence rate, and $k_S$ and $k_T$ are the recombination rate constants for the singlet and triplet, respectively. During time evolution, the decoherence term $\mathcal{D}\llbracket\rho\rrbracket$ continuously and irreversibly eliminates the S-T coherence in the density matrix (the off-diagonal elements $\rho_{ST}$ and $\rho_{TS}$). However, from $\text{Tr}(\mathcal{D}\llbracket\rho\rrbracket)=0$, we know that the decoherence term does not change the particle number of the system.
	\par
	The recombination term $\mathcal{R}_K\llbracket\rho\rrbracket$ describes the process by which RPs generate neutral products through charge recombination, a process that reduces the total number of RPs. Kominis pointed out that for a density matrix $\rho$, there exists a decomposition that yields a maximal incoherent state $\rho_{\text{incoh}}$ and a maximal coherent state $\rho_{\text{coh}}$:
	\begin{align*}
		\rho=(1-p_{\text{coh}})\rho_{\text{incoh}}+p_{\text{coh}}\rho_{\text{coh}},
	\end{align*}
	where $p_{\text{coh}}$ is related to the chosen S-T coherence measure, representing the strength of the system's S-T coherence, with a value range of $[0,1]$. $\rho_{\text{incoh}}=\rho_{SS}+\rho_{TT}$, $\rho_{\text{coh}}=\rho_{SS}+\rho_{TT}+\frac{1}{p_{\text{coh}}}(\rho_{ST}+\rho_{TS})$. We adopt the coherence-dependent recombination term proposed by Kominis:
	\begin{align*}
		\mathcal{R}_K\llbracket\rho\rrbracket
		&=-(1-p_{\text{coh}})(k_S\rho_{SS}+k_T\rho_{TT})-p_{\text{coh}}\frac{\text{d}r_S+\text{d}r_T}{\text{d}t}\frac{\rho_{\text{coh}}}{\text{Tr}(\rho)},
	\end{align*}
	where $\text{d}r_S=k_S\text{d}t\text{Tr}(\rho Q_S)$ and $\text{d}r_T=k_T\text{d}t\text{Tr}(\rho Q_T)$ are the instantaneous recombination rates (or instantaneous yields) for the singlet and triplet, respectively, i.e., the rate at which products are generated per unit time via recombination. To avoid singularities when the measure yields $p_{\text{coh}}=0$, we write the recombination term $\mathcal{R}_K\llbracket\rho\rrbracket$ as:
	\begin{align*}
		\mathcal{R}_K\llbracket\rho\rrbracket=
		&-(1-p_{\text{coh}})(k_S\rho_{SS}+k_T\rho_{TT})
		\\
		&-\frac{k_S\text{Tr}(\rho Q_S)+k_T\text{Tr}(\rho Q_T)}{\text{Tr}(\rho)}[p_{\text{coh}}(\rho_{SS}+\rho_{TT})+\rho_{ST}+\rho_{TS}].
	\end{align*}
	When $p_{\text{coh}}=0$, the recombination term reduces to the traditional Haberkorn form, also known as the Jones-Hore master equation \cite{kominis2015radical,jones2010spin}, which removes the incoherent part from the density matrix via projection. When $p_{\text{coh}}=1$, the recombination term indicates that the entire coherent state is removed with a certain probability.
	\par
	Since
	\begin{align*}
		\text{Tr}(\mathcal{R}_K\llbracket\rho\rrbracket)=
		&-\text{Tr}[(1-p_{\text{coh}})(k_S\rho_{SS}+k_T\rho_{TT})]
		\\
		&-\text{Tr}\left\lbrace \frac{k_S\text{Tr}(\rho Q_S)+k_T\text{Tr}(\rho Q_T)}{\text{Tr}(\rho)}[p_{\text{coh}}(\rho_{SS}+\rho_{TT})+\rho_{ST}+\rho_{TS}]\right\rbrace
		\\
		=&-(1-p_{\text{coh}})[k_S\text{Tr}(\rho Q_S)+k_T\text{Tr}(\rho Q_T)]
		\\
		&-p_{\text{coh}}[k_S\text{Tr}(\rho Q_S)+k_T\text{Tr}(\rho Q_T)]
		\\
		=&-[k_S\text{Tr}(\rho Q_S)+k_T\text{Tr}(\rho Q_T)].
	\end{align*}
	Thus, we have:
	\begin{align*}
		\frac{\text{d}}{\text{d}t}\text{Tr}(\rho)=\text{Tr}(\mathcal{R}_K\llbracket\rho\rrbracket)=-\frac{\text{d}r_S+\text{d}r_T}{\text{d}t}.
	\end{align*}
	Therefore, the master equation is not a trace-preserving process; it correctly describes the decrease in the particle number of the radical system, and the rate of decrease is exactly equal to the sum of the generation rates of the singlet and triplet reaction products. Then, during the evolution of the density matrix $\rho$, for the calculation of the measure values, we need to use $\hat{\rho}=\frac{\rho}{\text{Tr}(\rho)}$. This allows us to connect the defined S-T coherence measure $C_{\alpha,1}$ with the recombination term. Define the effective coherence parameter as:
	\begin{eqnarray*}
		p_{\text{coh}}^{\text{eff}}=\frac{C_{\alpha,1}(\hat{\rho},\mathbf{Q})}{\max(C_{\alpha,1})},
	\end{eqnarray*}
	where $\max(C_{\alpha,1})$ represents the maximum value of this measure for a given parameter $\alpha$ (e.g., when $\alpha=0.5$, $\max(C_{\alpha,1})=0.5$). In this way, we embed a rigorous resource-theoretic measure into the evolution model, making the effect of the recombination process directly dependent on the coherence strength quantified by our measure. In the following simulations, we set $\alpha=0.5,\omega_1 = 0.8,\omega_2 = 0.3,k_S=0.2,k_T=0.05,t=40$.
	\par
	First, for three evolution scenarios:
	\par\noindent
	Scenario A: The master equation contains only Hamiltonian evolution, corresponding to an ideal closed system;
	\par\noindent
	Scenario B: The master equation contains Hamiltonian evolution and decoherence evolution, corresponding to an open system with decoherence but no particle number loss;
	\par\noindent
	Scenario C: The complete master equation evolution, corresponding to the real RP system in Kominis' theory, which includes both decoherence and recombination evolution.
	\par
	When the initial state is chosen as $\ket{\psi_1}=\ket{S}$, the changes in the measure $C_{0.5,1}$ under the three scenarios can be obtained. Since scenarios A and B have no particle number decay, we only consider the particle number change in scenario C, as shown in the figure below:
	\begin{figure}[H]
		\centering
		\includegraphics[width=0.91\linewidth,height=0.2\textheight]{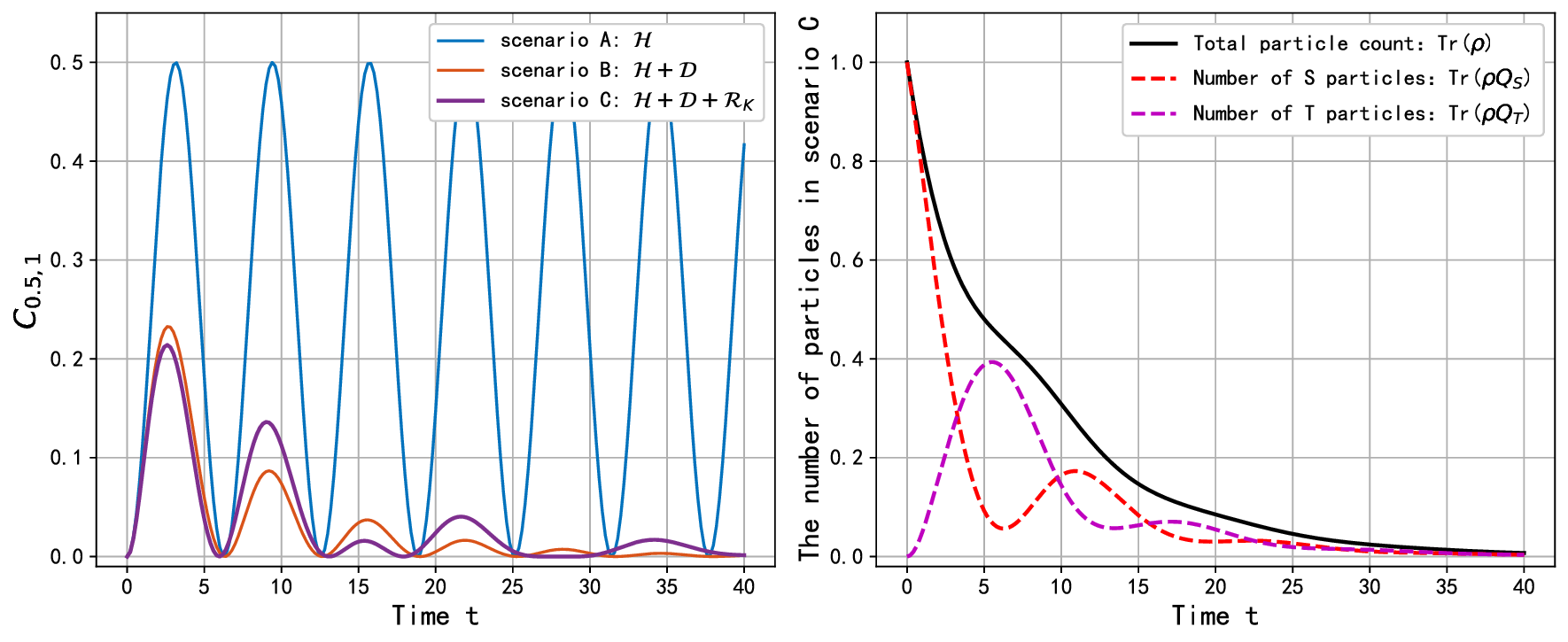}
		\caption{Coherence measure $C_{0.5,1}$ and particle number (initial state: $\ket{S}$)}
	\end{figure}
	\par
	In scenario A, the coherence exhibits perfect periodic oscillations. Let the Larmor frequency difference be $\Omega=\omega_1-\omega_2$. The calculated period $\frac{\pi}{|\Omega|}\approx 0.6283$ matches the figure, with the maximum reaching $\max(C_{0.5,1})=0.5$ and the minimum being zero. This intuitively reflects the role of the Hamiltonian as the "engine" of coherence. The Hamiltonian is the sole source of S-T coherence, continuously driving coherent oscillations between $\ket{S}$ and $\ket{T_0}$, causing the system to transition from the initial singlet state (coherence 0) to maximum coherence (coherence 0.5), then to the triplet state (coherence 0), and repeating this process periodically, with coherence regularly generated and destroyed.
	\par
	In scenario B, coherence still exhibits oscillations, but whenever coherence is generated, the decoherence term counteracts part of it, leading to damped oscillations, so the amplitude decays over time $t$, eventually approaching zero. It can be seen that the trajectory of scenario B is always enclosed within the oscillation envelope of scenario A, confirming the purely destructive effect of the decoherence term on coherence.
	\par
	In scenario C, the dynamics become extremely rich due to the introduction of the recombination term $\mathcal{R}_K\llbracket\rho\rrbracket$. Its evolution trajectory significantly deviates from the simple damped oscillation pattern and exhibits complex crossings with the curves of scenarios A and B. This is because the recombination term introduces a nonlinear feedback mechanism: the recombination process not only removes particles but also its specific manner depends on the instantaneous coherence (via the effective coherence parameter $p_{\text{coh}}^{\text{eff}}$). During periods of high coherence, recombination tends to "remove the entire" quantum state, which not only reduces the particle number but also influences the coherence dynamics of the surviving ensemble through coherence-dependent feedback, thereby altering the composition of the surviving state. During periods of low coherence, it removes more incoherent parts in a classical projection manner, attempting to mitigate the oscillatory decay of coherence. The figure shows multiple instances where the measure value in scenario C is higher than in scenario B, and even higher than in scenario A. This is precisely the result of coherence-dependent feedback, dynamically competing with the driving effect of the Hamiltonian and the destructive effect of decoherence, leading to complex non-monotonic behavior in the coherence evolution trajectory.
	\par
	The decay curve of the particle number $\text{Tr}(\rho)$ further reveals the details of the dynamic competition. At $t=0$, the system consists entirely of singlet particles, and there is no coherence in the system. At this moment, the recombination term immediately removes singlet particles at the rate $k_S$ to generate products. Simultaneously, the Hamiltonian begins to convert the singlet $\ket{S}$ into the triplet $\ket{T_0}$, generating S-T coherence in this process, which in turn triggers damping by the decoherence term. Therefore, in the early stages of evolution, the Hamiltonian, decoherence term, and recombination term work together, causing a sharp decrease in the singlet particle number. However, an interesting phenomenon follows: although the instantaneous singlet recombination yield $k_S=0.2$ is four times that of the triplet $k_T=0.05$, the particle number in scenario C does not show that the singlet decay rate is always higher than the triplet's, nor does the gap gradually widen; instead, they tend to converge during evolution. This is precisely the mediating effect of the Hamiltonian-driven S-T oscillation, as can be seen from the following figure:
	\begin{figure}[H]
		\centering
		\includegraphics[width=0.91\linewidth,height=0.2\textheight]{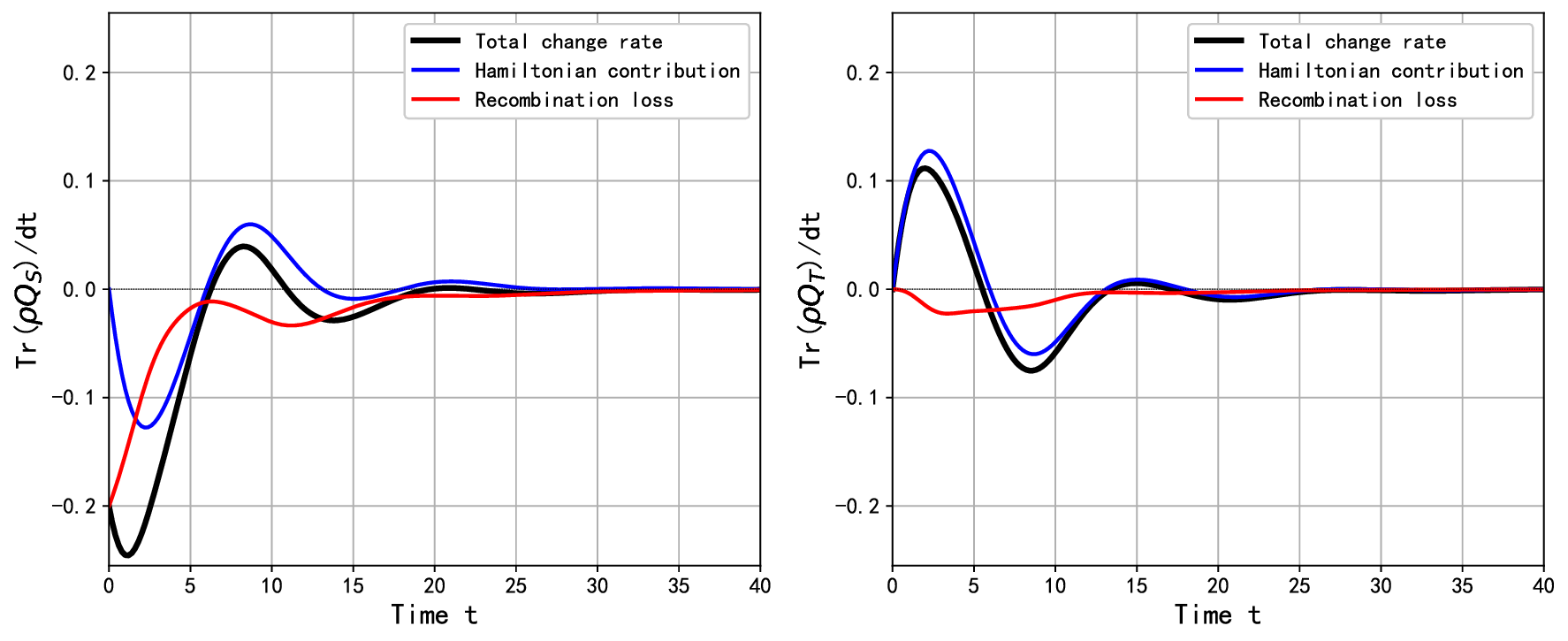}
		\caption{Singlet and Triplet Particle Number Change Rates (initial state: $\ket{S}$)}
	\end{figure}
	\par
	As the overall singlet particle number continuously decreases and the triplet particle number increases, a dynamic feedback forms between this asymmetric consumption and the continuous conversion by the Hamiltonian: when the singlet particle number is too low, the Hamiltonian tends to convert triplets back to singlets; and vice versa. Consequently, the system exhibits a decrease in the singlet particle decay rate and an increase in the triplet particle decay rate, and this process repeats, causing particles to continuously convert between singlet and triplet states, compensating for the particle number difference caused by the asymmetric recombination rates. The result of this dynamic balance is that, despite the intrinsic recombination rates differing by a factor of four, the singlet and triplet particle numbers do not rapidly diverge but remain of the same order of magnitude during evolution until the system is nearly depleted. At the end of the time evolution, since a large number of particles have been converted into neutral products by the recombination term, the total particle number $\text{Tr}(\rho)$ of the system approaches 0, and the cumulative yield $Y_S+Y_T$ approaches $100\%$.
	\par
	As the evolution time $t$ increases, in the general trend, the singlet particle number $\text{Tr}(\rho Q_S)$ and the triplet particle number $\text{Tr}(\rho Q_T)$ continuously decrease, and the corresponding instantaneous yields $k_S\text{Tr}(\rho Q_S)$ and $k_T\text{Tr}(\rho Q_T)$ also continuously decrease. This completely conforms to the basic laws of chemical kinetics, with the root cause being the decay of reactant concentration. This is why the yields from all sides tend to stabilize at the end of the time evolution in the figure.
	\begin{figure}[H]
		\centering
		\includegraphics[width=0.91\linewidth,height=0.2\textheight]{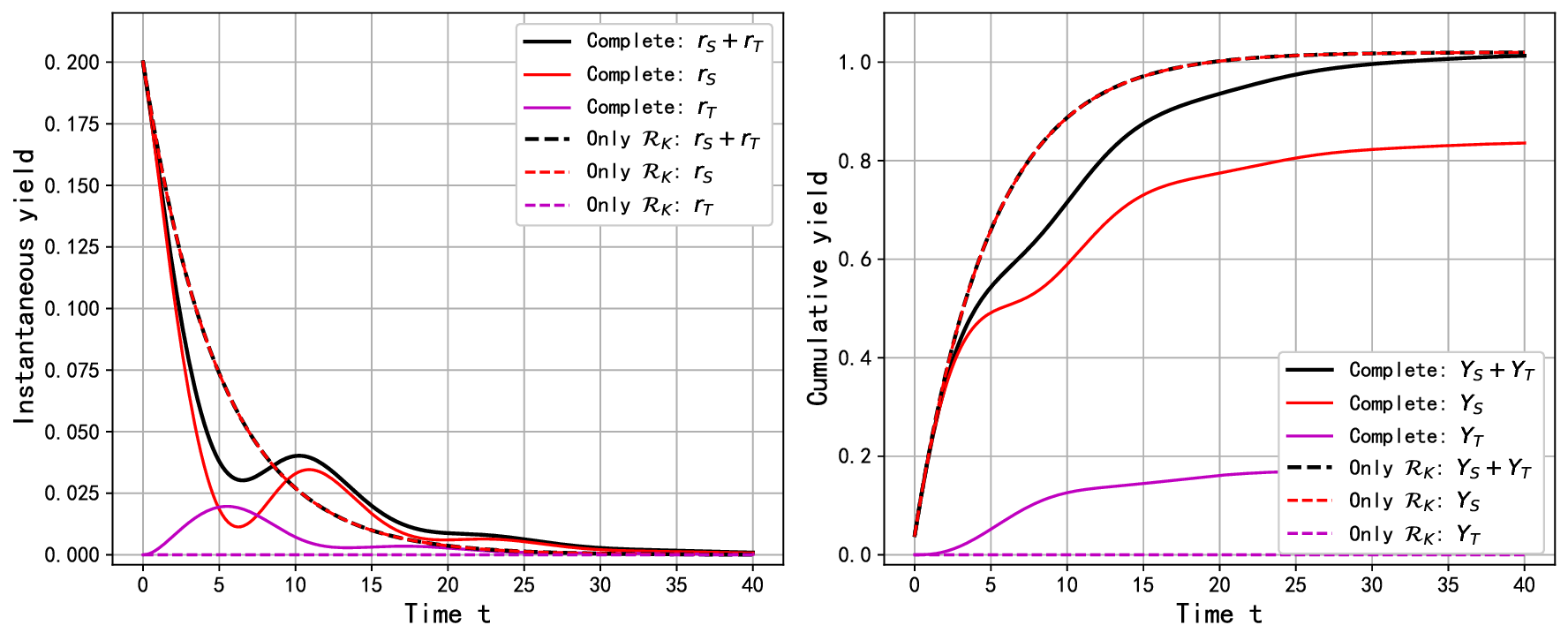}
		\caption{Coherence measure $C_{0.5,1}$ and particle number (initial state: $\ket{S}$)}
	\end{figure}
	\par
	For comparison, we also simulated the scenario where the master equation contains only the recombination term, i.e., $\mathcal{H}=\bf{0}$, $\mathcal{D}=\bf{0}$. In this scenario, since there is no Hamiltonian to drive S-T conversion, the initial singlet remains a singlet throughout, so the triplet instantaneous yield $r_S$ and cumulative yield $Y_T$ are always zero. This contrasts sharply with the significant non-zero triplet yield produced by the rich dynamics under the complete master equation, thus numerically confirming the indispensable key role of the Hamiltonian (and the coherence it drives) in initiating the triplet reaction channel and determining the final product ratio.
	\par
	Regarding the Larmor frequency in the Hamiltonian, its role permeates the entire process of coherence generation, evolution, and its final conversion into observable chemical signals. The Larmor frequency difference $\Omega$ not only sets the time scale (frequency and period) of coherent oscillations but also determines the coupling strength of the Hamiltonian driving force. The larger $\Omega$ is, the stronger the Hamiltonian's ability to mix singlet and triplet states, and the faster the system deviates from the initial state and reaches maximum coherence. This directly affects the initial slope of the coherence curve. Simultaneously, in the competition among the three terms of the master equation (Hamiltonian evolution, decoherence term, recombination term), through multiple parameter adjustments, we have the following findings:
	\par\noindent
	(1) If $\Omega \gg k_S,k_T$, then Hamiltonian evolution dominates the process. The system has sufficient time to complete multiple full coherent oscillations before experiencing significant decoherence or recombination. In the experimental curves, multiple oscillations can be observed.
	\par\noindent
	(2) If $\Omega \ll k_S,k_T$, then the decoherence term or recombination term dominates the process. Coherence is quenched before it can be generated, or rapidly decreases after generation and does not rise again. This prevents the system's quantum coherence from being effectively established, and the oscillation effect is strongly suppressed or even completely invisible.
	\par\noindent
	This result is indeed consistent with expectations and also demonstrates again the important role of Hamiltonian driving.
	\par
	Next, we select the initial state $\ket{\psi_2}=\frac{\ket{S}+\ket{T_0}}{\sqrt{2}}$ for simulation analysis. This state is in a maximum coherent superposition of $\ket{S}$ and $\ket{T_0}$, with equal initial singlet and triplet particle numbers. The simulation parameters and code are exactly the same as before; only key results are presented here.
	\begin{figure}[H]
		\centering
		\includegraphics[width=0.91\linewidth,height=0.2\textheight]{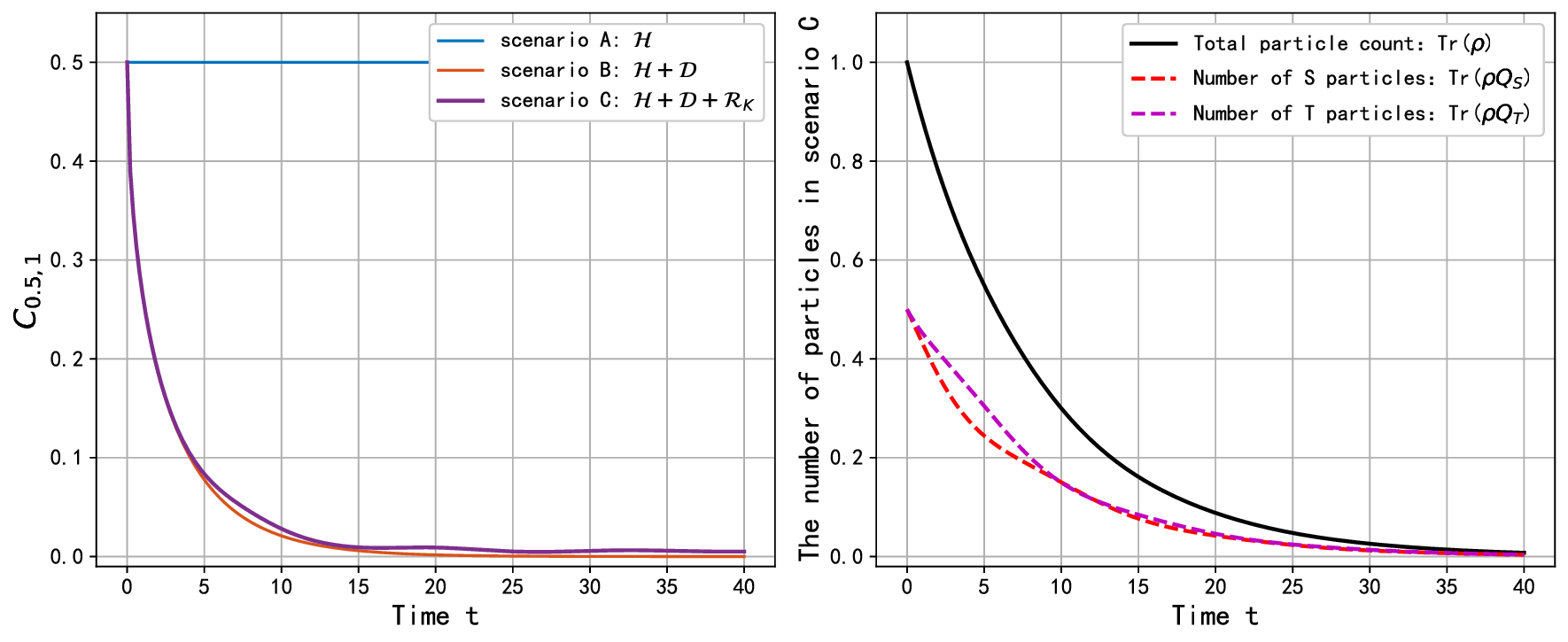}
		\caption{Coherence measure $C_{0.5,1}$ and particle number (initial state: $\frac{\ket{S}+\ket{T_0}}{\sqrt{2}}$)}
	\end{figure}
	\par
	As shown in the figure above, in the ideal closed system with only Hamiltonian evolution (scenario A), the coherence measure $C_{0.5,1}$ no longer exhibits periodic oscillations but remains a horizontal straight line at the maximum value $\max(C_{0.5,1})=0.5$. In fact, due to the equal probability distribution of the initial state in the $\ket{S}$ and $\ket{T_0}$ subspaces, we have:
	\begin{eqnarray*}
		\mathcal{H}\frac{\ket{S}+\ket{T_0}}{\sqrt{2}}=\frac{\omega_1-\omega_2}{2}\left( \frac{\ket{S}+\ket{T_0}}{\sqrt{2}}\right).
	\end{eqnarray*}
	Therefore, $\ket{\psi_2}$ is an eigenstate of the Hamiltonian operator $\mathcal{H}$. The time evolution operator acting on it only produces a global phase factor: $\ket{\psi_2(t)}=e^{-i\frac{\omega_1-\omega_2}{2}t}\ket{\psi_2(0)}$. Thus, we have:
	\begin{eqnarray*}
		\ket{\psi_2(t)}\bra{\psi_2(t)}=\left( e^{-i\frac{\omega_1-\omega_2}{2}t}\ket{\psi_2(0)}\right)\left( e^{i\frac{\omega_1-\omega_2}{2}t}\bra{\psi_2(0)}\right)=\ket{\psi_2(0)}\bra{\psi_2(0)} ,
	\end{eqnarray*}
	That is, with only Hamiltonian evolution (scenario A), the coherence measure $C_{0.5,1}$ remains a constant maximum line. This contrasts sharply with the perfect oscillations produced by the initial state $\ket{S}$ (not an eigenstate of $\mathcal{H}$) in scenario A, visually demonstrating how the eigenstate property of the initial state relative to the Hamiltonian fundamentally determines its coherence evolution pattern. This is also why, although oscillations still occur under the combined action of the Hamiltonian, decoherence term, and recombination term, the fluctuations are noticeably more moderate compared to those for $\ket{S}$ in each figure. The change rates of singlet and triplet particle numbers also well prove this moderateness.
	\par
	We can also see that for $\ket{\psi_2}$,
	\begin{figure}[H]
		\centering
		\includegraphics[width=0.91\linewidth,height=0.2\textheight]{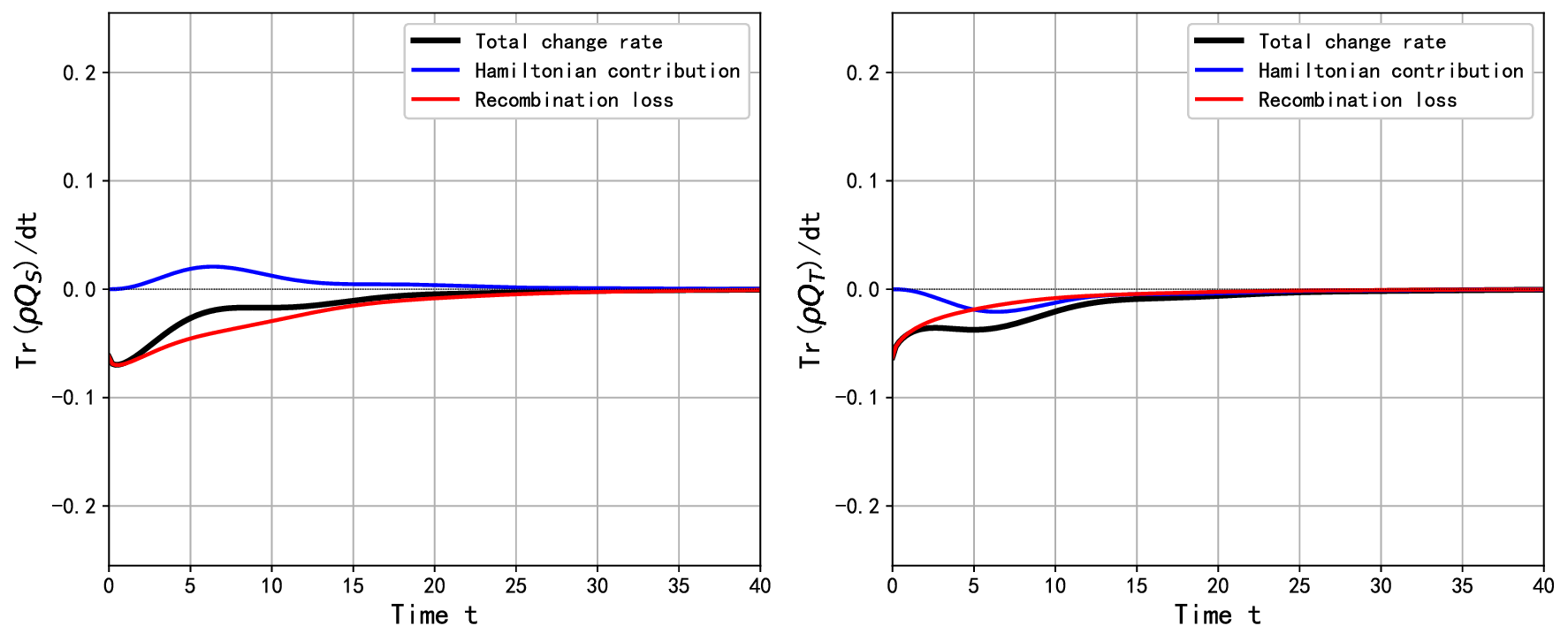}
		\caption{Singlet and Triplet Particle Number Change Rates (initial state: $\frac{\ket{S}+\ket{T_0}}{\sqrt{2}}$)}
	\end{figure}
	\begin{figure}[H]
		\centering
		\includegraphics[width=0.91\linewidth,height=0.2\textheight]{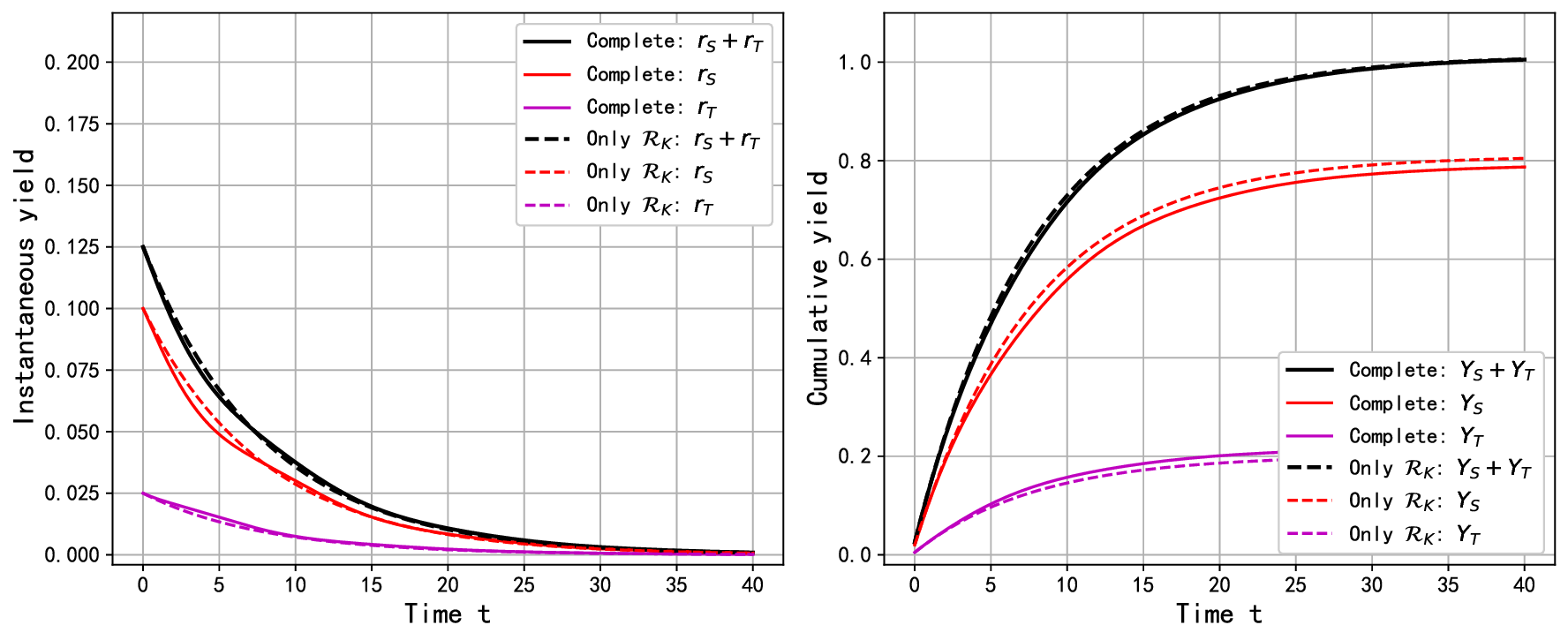}
		\caption{Coherence measure $C_{0.5,1}$ and particle number (initial state: $\frac{\ket{S}+\ket{T_0}}{\sqrt{2}}$)}
	\end{figure}
	\par
	For this initial state, the simulated final singlet product yield $Y_S$ and triplet product yield $Y_T$ have a ratio of approximately 4, very close to the ratio of the instantaneous yield coefficients. For the initial state $\ket{S}$, the ratio is 4.6, also relatively close. This appearance might tempt one to conjecture that the ratio of product yields being close to the ratio of instantaneous yield coefficients is a universal rule.
	\par
	However, we must point out that this conjecture is incorrect. The core innovation of the Kominis model lies precisely in its coherence-dependent recombination term. Under this model, the product yields are the integrals of $k_S\text{Tr}(\rho Q_S)$ and $k_T\text{Tr}(\rho Q_T)$ over the entire nonlinear evolution history, and the evolution of $\text{Tr}(\rho Q_S)$ and $\text{Tr}(\rho Q_T)$ themselves strongly depends on the effective coherence parameter $p_{\text{coh}}^{\text{eff}}$, i.e., the change in the measure value. This itself is a highly nonlinear coupled system. For example, taking the initial state as $\frac{\ket{S}+\ket{T_1}}{\sqrt{2}}$, the simulated singlet product yield $Y_S$ is even significantly lower than the triplet product yield $Y_T$ under the same parameters as in this experiment. For rigor, we selected 10,000 random pure states and simulated them under the same parameters as in this experiment. The results still showed no correlation with the ratio of instantaneous yield coefficients. This indicates that the final yield ratio depends on various factors such as the choice of Hamiltonian, initial state, and coefficients, and is not solely determined by the instantaneous yield coefficients.

	\section{Conclusion}
	This paper conducts systematic research centered on block coherence resource theory, and the main achievements are as follows:
	\par
	In Chapter 3, after presenting the framework of block coherence resource theory, we proved that the set of block coherence measures is closed under convex combination (Theorem 3.1) and utilized the convex roof extension method to construct new measures from any given block coherence measure (Theorem 3.2). This provides a general algebraic scheme for generating new measures from known ones.
	\par
	In Chapter 3, based on the more general $\alpha$-$z$ Rényi relative entropy, we defined a two-parameter family of block coherence measures $C_{\alpha,z}$ (Theorem 3.3). Subsequently, we investigated the properties of $C_{\alpha,z}$, including its dependence on parameters (Proposition 3.1) and its behavior under block-diagonal unitary transformations, direct product, and direct sum (Proposition 3.2). Finally, we specifically applied the $C_{\alpha,z}$ measure family to S-T coherence, defining the corresponding S-T coherence measure (Proposition 3.3). Furthermore, we defined a block coherence measure family $C^N_{\beta}$ via the Tsallis relative operator entropy (Theorem 3.4).
	\par
	In Chapter 4, we reviewed various block coherence measures that have been proposed in block coherence and S-T coherence theory. Through analytical derivations and numerical counterexamples, we focused on studying the consistency of ordering induced on quantum state sets by different measures. It was proven that for pure states, the measure family $C_{\alpha,1}$ internally possesses the same ordering, but for mixed states, the ordering relations can differ and may even reverse multiple times (Proposition 4.1). Similarly, ordering for mixed states can also be inconsistent among different measures (Proposition 4.2). Subsequently, we established a series of universal inequalities among different block coherence measures (Propositions 4.3-4.8).
	\par
	In Chapter 5, we embedded the S-T coherence measure $C_{\alpha,1}$ into the radical pair reaction master equation proposed by Kominis, where the recombination term is coupled to the strength of coherence. The results demonstrate the complex competition among the Hamiltonian generating coherence, decoherence destroying coherence, and the nonlinear feedback of the recombination term depending on coherence. This intuitively reflects the decisive influence of quantum coherence on the final chemical reaction product yield.
	
	\section*{Acknowledgement}
	The author Xiangyu Chen would like to thank Mengli Guo from the Department of Mathematics, East China University of Technology, for the helpful discussions on the Tsallis relative operator entropy.


\begin{thebibliography}{99}
		\bibitem{vedral1998entanglement}
		V. Vedral and M. B. Plenio.
		\newblock Entanglement measures and purification procedures
		\newblock {\em Physical Review A}, \textbf{57}, 1619(1998).
		
		\bibitem{Plenio2005AnIT}
		M. B. Plenio and S. Virmani.
		\newblock An introduction to entanglement measures.
		\newblock {\em Quantum Information and Computation}, \textbf{7}, 1-51(2005).
		
		\bibitem{Hordecki2009Quantum}
		R. Horodecki, P. Horodecki, M. Horodecki and K. Horodecki. 
		\newblock Quantum entanglement.
		\newblock {\em Review of Modern Physics}, \textbf{81}, 865(2009).
		
		\bibitem{baumgratz2014quantifying}
		T. Baumgratz, M. Cramer and M. B. Plenio.
		\newblock Quantifying coherence.
		\newblock {\em Physical Review Letters}, \textbf{113}, 140401(2014).
		
		\bibitem{hickey2018quantifying}
		A. Hickey and G. Gour.
		\newblock Quantifying the imaginarity of quantum mechanics.
		\newblock {\em Journal of Physics A: Mathematical and Theoretical}, \textbf{51}, 414009(2018).
		
		\bibitem{Howard2017application}
		V. Veitch, S. H. Mousavian, D. Gottesman, J. Emerson.
		\newblock The resource theory of stabilizer quantum computation.
		\newblock {\em New Journal of Physics}, \textbf{16}, 013009(2014).
		
		\bibitem{Gour2015resource}
		G. Gour, M. P. M\"uller, V. Narasimhachar, R. W. Spekkens, N. Y. Halpern.
		\newblock The resource theory of informational nonequilibrium in thermodynamics.
		\newblock {\em Physics Reports}, \textbf{583}, 1-58(2015).
		
		\bibitem{Xu2023quantifying}
		J. W. Xu.
		\newblock Quantifying the phase of quantum states.
		\newblock {\em Physics Letters A}, \textbf{482}, 129049(2023).
		
		\bibitem{Amaral2018Noncontextual}
		B. Amaral, A. Cabello, M. T. Cunha, L. Aolita.
		\newblock Noncontextual Wirings.
		\newblock {\em Physical Review Letters}, \textbf{120}, 130403(2018).
		
		\bibitem{xu2019coherence}
		J. W. Xu.
		\newblock Coherence of quantum channels.
		\newblock {\em Physical Review A}, \textbf{100}, 052311(2019).
		
		\bibitem{Luo2017partial}
		S. L. Luo and Y. Sun
		\newblock Partial coherence with application to the monotonicity problem of coherence involving skew information.
		\newblock {\em Physical Review A}, \textbf{96}, 022136(2017).
		
		\bibitem{aberg2006quantifying}
		J. Aberg.
		\newblock Quantifying superposition.
		\newblock {\em arXiv preprint quant-ph/0612146}, (2006).
		
		\bibitem{bischof2019resource}
		F. Bischof, H. Kampermann, D. Bruß.
		\newblock Resource theory of coherence based on positive-operator-valued measures.
		\newblock {\em Physical Review Letters}, \textbf{123}, 110402(2019).
		
		\bibitem{bischof2021quantifying}
		F. Bischof, H. Kampermann, D. Bruß.
		\newblock Quantifying coherence with respect to general quantum measurements.
		\newblock {\em Physical Review A}, \textbf{103}, 032429(2021).
		
		\bibitem{Yu2021quantum}
		C. S. Yu.
		\newblock Quantum coherence via skew information and its polygamy.
		\newblock {\em Physical Review A}, \textbf{95}, 042337(2017).
		
		\bibitem{kominis2025physiological}
		I. K. Kominis.
		\newblock Physiological search for quantum biological sensing effects based on the Wigner--Yanase connection between coherence and uncertainty.
		\newblock {\em Advanced Quantum Technologies}, \textbf{8}, 2300292(2025).
		
		\bibitem{kominis2015radical}
		I. K. Kominis.
		\newblock The radical-pair mechanism as a paradigm for the emerging science of quantum biology.
		\newblock {\em Modern Physics Letters B}, \textbf{29}, 1530013(2015).
		
		\bibitem{xu2020general}
		J. W. Xu, L. H. Shao and S. M. Fei.
		\newblock Coherence measures with respect to general quantum measurements.
		\newblock {\em Physical Review A}, \textbf{102}, 012411(2020).
		
		\bibitem{kritsotakis2014retrodictive}
		M. Kritsotakis and I. K. Kominis.
		\newblock Retrodictive derivation of the radical-ion-pair master equation and Monte Carlo simulation with single-molecule quantum trajectories.
		\newblock {\em Physical Review E}, \textbf{90}, 042719(2014).
		
		\bibitem{kominis2020quantum}
		I. K. Kominis.
		\newblock Quantum relative entropy shows singlet-triplet coherence is a resource in the radical-pair mechanism of biological magnetic sensing.
		\newblock {\em Physical Review Research}, \textbf{2}, 023206(2020).
		
		\bibitem{chen2023measures}
		Q. Chen, T. Gao and F. Yan.
		\newblock Measures of imaginarity and quantum state order.
		\newblock {\em Science China Physics, Mechanics \& Astronomy}, \textbf{66}, 280312(2023).
		
		\bibitem{liu2018superadditivity}
		C. L. Liu, Q. M. Ding and D. M. Tong.
		\newblock Superadditivity of convex roof coherence measures.
		\newblock {\em Journal of Physics A: Mathematical and Theoretical}, \textbf{51}, 414012(2018).
		
		\bibitem{yu2016alternative}
		X. D. Yu, D. J. Zhang, G. F. Xu and D. M. Tong.
		\newblock Alternative framework for quantifying coherence.
		\newblock {\em Physical Review A}, \textbf{94}, 060302(2016).
		
		\bibitem{xue2021quantification}
		S. N. Xue, J. Guo, P. Li, M. Ye and Y. Li.
		\newblock Quantification of resource theory of imaginarity.
		\newblock {\em Quantum Information Processing}, \textbf{20}, 383(2021).
		
		\bibitem{audenaert2015alpha}
		K. M. R. Audenaert, N. Datta.
		\newblock $\alpha$-$z$-Rényi relative entropies.
		\newblock {\em Journal of Mathematical Physics}, \textbf{56}, (2015).
		
		\bibitem{audenaert2007araki}
		K. M. R. Audenaert.
		\newblock On the araki-lieb-thirring inequality.
		\newblock {\em arXiv preprint math/0701129}, (2007).
		
		\bibitem{muller2013quantum}
		M. Müller-Lennert, F. Dupuis, O. Szehr, S. Fehr, M. Tomamichel.
		\newblock On quantum Rényi entropies: A new generalization and some properties.
		\newblock {\em Journal of Mathematical Physics}, \textbf{54}, (2013).
		
		\bibitem{bikchentaev2024trace}
		A. M. Bikchentaev, F. Kittaneh, M. S. Moslehian, Y. Seo.
		\newblock Trace inequalities.
		\newblock Springer, (2024).
		
		\bibitem{abe2003monotonic}
		S. Abe.
		\newblock Monotonic decrease of the quantum nonadditive divergence by projective measurements.
		\newblock {\em Physics Letters A}, \textbf{312}, 336--338(2003).
		
		\bibitem{lei2021povm}
		Q. Lei.
		\newblock POVM-induced coherence measure in terms of fidelity.
		\newblock {\em International Journal of Theoretical Physics}, \textbf{60}, 2423--2428(2021).
		
		\bibitem{fu2022block}
		L. Fu, F. Yan and T. Gao.
		\newblock Block-coherence measures and coherence measures based on positive-operator-valued measures.
		\newblock {\em Communications in Theoretical Physics}, \textbf{74}, 025104(2022).
		
		\bibitem{miranowicz2004ordering}
		A. Miranowicz and A. Grudka.
		\newblock Ordering two-qubit states with concurrence and negativity.
		\newblock {\em Physical Review A}, \textbf{70}, 032326(2004).
		
		\bibitem{liu2016ordering}
		C. L. Liu, X. D. Yu, G. F. Xu and D. M. Tong.
		\newblock Ordering states with coherence measures.
		\newblock {\em Quantum Information Processing}, \textbf{15}, 4189--4201(2016).
		
		\bibitem{carlen2018some}
		E. A. Carlen, E. H. Lieb.
		\newblock Some trace inequalities for exponential and logarithmic functions.
		\newblock {\em Bulletin of mathematical sciences}, 1--40(2018).
		
		\bibitem{zhang2016lower}
		L. Zhang, K. Bu, J. Wu.
		\newblock A lower bound on the fidelity between two states in terms of their trace-distance and max-relative entropy.
		\newblock {\em Linear and Multilinear Algebra}, \textbf{64}, 801--806(2016).
		
		\bibitem{datta2009min}
		N. Datta.
		\newblock Min-and max-relative entropies and a new entanglement monotone.
		\newblock {\em IEEE Transactions on Information Theory}, \textbf{55}, 2816--2826(2009).
		
		\bibitem{jones2010spin}
		J. A. Jones, P. J. Hore.
		\newblock Spin-selective reactions of radical pairs act as quantum measurements.
		\newblock {\em Chemical Physics Letters}, \textbf{488}, 90--93(2010).
		
	\end{thebibliography}
\end{document}